\newtheorem{df}{Definition}
\newtheorem{thm}{Theorem}[section]
\newtheorem{prop}{Proposition}[section]
\newtheorem{lemma}{Lemma}[section]
\newtheorem{cor}{Corollary}[section]
\newtheorem{conjecture}{Conjecture}
\newcommand{\newfigure}[4] 
{
\begin{figure}[h]
\begin{center}
\includegraphics*[scale=#4]{#1}
\caption{#2}\label{#3}
\end{center}
\end{figure}
} 
\newcommand{\seq}[3]{{#1_{#2}, \ldots, #1_{#3}}}
\newcommand{\set}[1]{{\mathcal #1}}
\newcommand{\nge}[1]{\stackrel{#1}{\ge}}
\newcommand{\nequal}[1]{\stackrel{#1}{=}}
\def\define{{\triangleq}}
\def\routing{{\varrho}}
\newcommand{\Real}{{\mathbb{R}}}
\def\graph{{\mathsf  G}}
\def\nodes{{\mathcal V}}
\def\edges{{\mathcal E}}
\def\field{{GF}}
\def\sessions{{\mathcal S}} 
\def\destinationLocation{{D}}
\def\sourceLocation{{L}}
\def\multicastRequirement{{\mathsf M}}
\newcommand{\tail}[1]{{\sf tail}(#1)}
\newcommand{\head}[1]{{\sf head}(#1)}
\def\incoming{{\sf in }}
\def\networkCoding{{\Phi}}
\def\networkcoding{{\phi}}
\def\sourceLocation{{O}}
\def\destinationLocation{{D}}
\def\multicastRequirement{{\mathsf M}}
\def\inputRate{{\lambda}}
\def\edgeRate{{\omega}}
\def\edgeRV{{Y}}   
\def\cone{{\overline{\rm con}}}
\def\la{{\langle}}
\def\ra{{\rangle}}
\def\N{{\cal N}}
\def\A{\mathcal A}
\def\reals{\mathbb R}
\def\support{{\mathsf {SP}}}
\def\major{\mathsf{CL}}
\def\proj{{\text{proj}}}
\def\problem{{\mathsf P}}
\def\indep{{\sf I}}
\def\net{{\sf T}}
\def\de{{\sf D}}
\def\sec{{\sf S}}
\def\defined{{\triangleq}}
\newcommand{\zach}{$0$-achievable}
\newcommand{\vach}{$\epsilon$-achievable}
\begin{document}
\title{Network Coding Capacity Regions via Entropy Functions}
\author{Terence H. Chan and Alex Grant}
\maketitle\thispagestyle{empty}

\begin{abstract}
  In this paper, we use entropy functions to characterise the set of
  rate-capacity tuples achievable with either zero decoding error, or
  vanishing decoding error, for general network coding problems. We
  show that when sources are colocated, the outer bound obtained by
  Yeung, \emph{A First Course in Information Theory}, Section 15.5
  (2002) is tight and the sets of zero-error achievable and
  vanishing-error achievable rate-capacity tuples are the same. We
  also characterise the set of zero-error and vanishing-error
  achievable rate capacity tuples for network coding problems subject
  to linear encoding constraints, routing constraints (where some or
  all nodes can only perform routing) and secrecy
  constraints. Finally, we show that even for apparently simple
  networks, design of optimal codes may be difficult. In particular,
  we prove that for the incremental multicast problem and for the
  single-source secure network coding problem, characterisation of the
  achievable set is very hard and linear network codes may not be
  optimal.
\end{abstract}

\section{Introduction}

Determining network coding capacity regions (the set of link
capacities and source rates admitting a network coding solution for a
given multicast) is a fundamental problem in information theory and
communications.  Recently, the network coding capacity region for
general network coding problems was implicitly determined using
entropy functions~\cite{Yan2007The-Capacity}\footnote{We assume the
  reader is familiar with polymatroids, entropy functions and
  representable functions. We review these and other related concepts
  in Section~\ref{sec:background} and as needed throughout the
  paper.}. This characterisation has a similar (but slightly more
complicated) form to the outer bound in~\cite[Section
15.5]{Yeung02first}, which is expressed in terms of almost entropic
functions. Explicit characterisation of the set of entropy functions
(or its closure) is however a very difficult open problem (for
instance, it is known that this set is not
polyhedral~\cite{Matus07infinitely}).

It is therefore natural to wonder
whether there might be a simpler, explicit characterisation of network
coding capacity regions which somehow avoid the use of entropy
functions. However, these two problems are inextricably
linked, and in general, determining network
coding capacity regions is as hard as finding the set of all entropy
functions, or equivalently, determining all information
inequalities~\cite{Chan2008Dualities}.

One approach to avoid these intrinsic challenges of the general case is
to seek special cases, or specific classes of networks for which an
explicit, computable solution is possible. To date, only a few such
special cases have been found.  One notable example is where a single
source data stream is unicast to multiple destinations.  In this case,
the capacity region is characterised by graph-theoretic maximal
flow/minimum cut bounds, and linear codes are
optimal~\cite{Li.Yeung.ea03linear}. As second example is a secure
network coding problem when all links have equal capacities and the
eavesdropper's capability is only limited by the total number of links
it can wiretap. In this case, the minimum cut bound is
also tight~\cite{cai2002snc}.

Another approach is to develop computable bounds on
the capacity region. Relaxation from entropy functions to polymatroids
yields the so-called linear programming bound~\cite[Section
15.6]{Yeung02first}. Although this bound is explicit, both the number
of variables and the number of constraints increase exponentially
with the number of links in the network, making the bound
computationally infeasible even for modest networks.
Other works such as
\cite{Harvey2006-On-the-capacity,Kramer2006Edge-cut,Thakor2009Network}
aim to obtain useful outer bounds with computationally efficient
algorithms for their evaluation. In fact, it can be shown that all of
the bounds obtained in those works are relaxations of the linear
programming bound from \cite[Section 15.6]{Yeung02first}.

This paper extends \cite{Yan2007The-Capacity} in several aspects.  In
both~\cite{Yan2007The-Capacity} and \cite{Yeung02first}, vanishing
decoder error probabilities are allowed.  In Section \ref{sec:part1},
we will extend these results to the case where the decoding error
probability must be exactly zero. We also prove that when all the
sources are colocated, the outer bound~\cite{Yeung02first} is in fact
zero-error achievable and tight. For the general non-colocated source
case, we show that tightness of the outer bound reduces to a question
of whether or not the addition of a zero-rate link can change the
capacity region of a particular network that we derive from the
original network. This leads us to conjecture tightness of the bound
in general.

The existing capacity result~\cite{Yan2007The-Capacity} does not place
any constraints on the operation of intermediate nodes, allowing
arbitrary network coding operations. We further extend this
entropy-function based approach to three different
practically-motivated cases where additional constraints are placed on
the network: (a) all nodes use linear encoding, (b) all, or some nodes
can only perform routing, and (c) we desire secrecy in the presence of
an eavesdropper.

In Section \ref{sec:part2}, we consider the case where only linear
network codes are allowed. We prove equivalence of zero-error and
vanishing-error achievability, and characterise the linear network
coding capacity region using representable functions.

When comparing the performance of network coding to routing-only
networks (where nodes can only store and forward received packets), it
may be useful to have a capacity characterisation for routing in terms
of entropy functions. In Section \ref{sec:part3} we introduce almost
atomic functions which provide just such a characterisation. We go on
to consider heterogeneous networks, containing both network coding
nodes and routing nodes, and show how to obtain an entropy function
characterisation of the capacity region.

In Section \ref{sec:part4}, we impose secrecy constraints, where we
assume the presence of eavesdropper who has access to certain links
and desires to decode particular sets of sources. The objective is to
design a transmission scheme such that the eavesdropper remains
ignorant of the source messages. We will once again characterise the
resulting general secure network coding capacity region via
representable functions.

Finally, in Section \ref{sec:challenges}, we will consider two very
simple network coding problems where despite the apparent simplicity
of the setup, characterisation of the capacity region turns out to be
extremely difficult, and linear codes may be suboptimal.  The first is
incremental multicast, where the sources and sinks are ordered such
that sink $i$ demands sources $1,2,\dots, i$.  The second example is
secure unicast of a single source. This demonstrates that the
seemingly innocuous addition of a security constraint loosens the
minimum cut bound~\cite{Li.Yeung.ea03linear}. Similarly we see that
the min-cut result from~\cite{cai2002snc} does not hold even for this
simple case.
 
\emph{Notation:} $\reals$ is the set of all real numbers and
$\reals_{0} $ is the set of all nonnegative real numbers. Random
variables will be denoted by uppercase roman letters $X$ and sets will
be denoted using uppercase script $\set{X}$. The power set $2^\set{X}$
is the set of all subsets of $\set{X}$. For a discrete random variable
$X$ taking values in the set (or alphabet) $\set{X}$, its
\emph{support} $\support(X)$ is
\begin{align*}
  \support(X) & \defined \{x\in\set{X}: \: \Pr(X=x) > 0 \} .
\end{align*} 
Realisations of a random variable will typically denoted via lowercase
$x$. 

For sets $\{X_1,X_2,\dots,X_n\}$ and
$\set{S}\subseteq\{1,2,\dots,n\}$, the subscript notation
$X_{\set{S}}$ will mean $\{X_i, i\in\set{S}\}$. Where it will cause no
confusion, set notation braces will be omitted from singletons and
union will be denoted by juxtaposition. Thus
$\set{A}\cup\set{B}\cup\{i\}$ can be written $\set{A}\set{B}i$ and so
on. Ordered tuples will be denoted 
\[
(x(i), i=1,2,\dots,n) = (x_1,x_2,\dots,x_n).
\]

\def\bfh{h}
\def\calH{H}
\def\myindex{myindex}

\section{Background}\label{sec:background}
In this section we provide the formal problem definition for
transmission of information in networks consisting of error-free
broadcast links. This includes representation of such networks as
hypergraphs, the notions of a multicast connection requirement, network
codes and zero-error or vanishing-error achievability. We then
review existing results on characterisation of the network coding
capacity via the use of entropy functions. In this section, we do not
impose any additional constraints (such as linearity or security)
beyond zero- or vanishing-decoding-error probability.

\subsection{Unconstrained Network Coding for Broadcast Networks}
We represent a communication network by a directed hypergraph
$\graph=(\nodes, \edges)$.  The set of nodes
\begin{equation*}
  \nodes = \left\{\seq V1{|\nodes|}\right\}
\end{equation*}
and the set of hyperedges 
\begin{equation*}
  \edges = \left\{\seq E1{|\edges|}\right\}
\end{equation*}
respectively model the set of communication nodes and error-free
broadcast links.  In particular, each hyperedge $e\in\edges$ is
defined by a pair $(\tail e, \head e)$, where $\tail{e} \in \nodes$ is
the transmit node and $\head{e} \subseteq \nodes$ is the set of nodes
which receive identical error-free transmissions from $\tail{e}$.
When $\head{e}$ is a singleton, $e$ models an ordinary point-to-point
link.

We assume that the network is free of directed cycles (a nonempty
sequence of links $\{f_{1}, \ldots , f_{k}\}$ such that $\tail{f_{i}}
\in \head{f_{i-1}}$ for $i=2, \ldots, k$ and $\tail{f_{1}} \in
\head{f_{k}}$).

\begin{df}[Connection Constraint]\label{df:cc}
  For a given communication network $\graph$, a \emph{connection
    constraint} $\multicastRequirement$ is a tuple $(\sessions,
  \sourceLocation, \destinationLocation)$, where $\sessions$ indexes
  the sources, $\sourceLocation:\sessions\mapsto 2^\nodes$ specifies
  the source locations and
  $\destinationLocation:\sessions\mapsto 2^\nodes$ specifies the sink
  nodes. Unless specified otherwise, we let
  \[
  \sessions=\left\{S_{1},\ldots, S_{|\sessions|}\right\}
  \]
  be the index set of $|\sessions|$ independent sources. Source
  $s\in\sessions$ is available at every node in
  $\sourceLocation(s)\subseteq\nodes$.  Note that in general each
  source can be available to more than one network node. The sink
  nodes $\destinationLocation(s) \subseteq \nodes$ are nodes where source
  $s$  should be reconstructed according to some desired error criteria.
\end{df}

It is conceptually useful to imagine each source $s$ as a message sent
along an imaginary source edge, which for simplicity will also be
labelled $s$. In this case, we can use the notation $\head{s}$ to
denote $\sourceLocation(s)$.  For any  $e \in\edges$ and $u\in
\nodes$, we define
\begin{align}
\incoming(e) & \triangleq 
  \{ f\in\sessions \cup \edges:     \tail{e}\in \head{f}\} \\
  \incoming(u) & \triangleq 
   \{ f\in \sessions\cup \edges:  u \in \head{f}\} .
\end{align}
In other words, $\incoming (\cdot)$ is the set of incoming edges
(including the imaginary source edges).  

Solution of the network coding problem
$\problem=(\graph,\multicastRequirement)$ requires a transmission
scheme allowing source $s$ to be reliably reconstructed at the sink
nodes $\destinationLocation(s)$. We have not yet specified the
transmission capacity of each hyperedge, or the rate of each
source. Characterisation of the capacity region for network coding
means determination of the combinations of source rates and hyperedge
capacities which admit a network coding solution. Before we can
proceed, we need to formalise what we mean by \emph{network code},
\emph{reliable} and \emph{rate-capacity tuple}.
\begin{df}[Network Code]\label{df:nc}
  A network code 
  \begin{equation}\label{eq:df2netcode}
  \networkCoding \triangleq \{\networkcoding_e :\: e \in \edges \}
  \end{equation}
  for the problem
  $\problem=(\graph,\multicastRequirement)$ is a set of
  local encoding functions
  \[
  \networkcoding_e : \prod_{f \in \incoming(e)} \set{Y}_{f} \mapsto \set{Y}_{e}.
  \]
  where $\set{Y}_{s}$ is the alphabet of source $s\in\sessions$ and
  $\set{Y}_{e}$ is the alphabet for messages transmitted on
  hyperedge $e\in\edges$.
\end{df}
Each network code induces a set of  random variables
 \begin{align}\label{eq:3}
   \{Y_{f}, f\in\sessions\cup\edges \}.
 \end{align}
 as follows:  
 \begin{enumerate}
 \item 
  $\{Y_s, s\in\sessions\}$ is a set of mutually independent random variables, each of which is uniformly
 distributed over its support and denotes  a message generated by a source.  
 
 \item For each $e\in\edges$,  
\begin{align}\label{eq:df:nc}
  \edgeRV_e = \networkcoding_e  (\edgeRV_f : f \in \incoming(e) ) 
\end{align}
and denotes the message transmitted on hyperedge $e\in\edges$.  
\end{enumerate}
If the set of
random variables induced by a network code is given, then the local
encoding functions \eqref{eq:df2netcode} are determined with
probability one. In other words, if
\[
\Pr\left(Y_{f} = y_{f}, f\in \incoming(e)\right) > 0,
\]
then for all $y_{e} \neq  \networkcoding_e  \left(y_f : f \in \incoming(e) \right)$,  
\[
\Pr\left(Y_{e}=y_{e} \mid Y_{f} = y_{f}, f\in \incoming(e)\right) =0.
\] 
For this reason, we will often specify a network code by its set of
induced random variables.

The following lemma follows directly from the above definitions and gives a
necessary and sufficient condition under which a set of random
variables is induced by a network code.
\begin{lemma} \label{lemma:networkcodedef} A set of random variables
  $\{Y_{f}, f\in\sessions\cup\edges \}$ defines a network code, with
  respect to a network coding problem $\problem$, if and only if
\begin{enumerate}
\item $\edgeRV_{s}$ is uniformly distributed over its support for all $s\in\sessions$.
\item $H\left(Y_s, s\in\sessions\right)  = \sum_{s\in\sessions} H\left(Y_s\right)$.
\item $H\left(\edgeRV_e \mid \edgeRV_f : f \in \incoming(e)    \right)
  = 0$ for all $ e\in\edges$.  
\end{enumerate}
 \end{lemma}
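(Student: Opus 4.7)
The plan is to prove both directions of the equivalence separately, using the acyclicity of $\graph$ in the nontrivial direction.

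For the forward direction, suppose $\{Y_f, f\in\sessions\cup\edges\}$ is induced by some network code $\networkCoding$ in the sense of Definition~\ref{df:nc}. Item~1 of the induced-random-variable construction \eqref{eq:3} directly gives that each $Y_s$ is uniform on its support (condition 1) and that the sources are mutually independent; the latter is the chain-rule characterisation $H(Y_{\sessions}) = \sum_{s\in\sessions} H(Y_s)$ (condition 2). Condition 3 is immediate from \eqref{eq:df:nc}, since $Y_e = \networkcoding_e(Y_f : f\in \incoming(e))$ makes $Y_e$ a deterministic function of its incoming variables, so $H(Y_e \mid Y_f : f\in\incoming(e)) = 0$.

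For the reverse direction, assume conditions 1--3 hold and construct a network code inducing $\{Y_f\}$. For each $e\in\edges$, condition 3 implies that for every tuple $(y_f)_{f\in\incoming(e)}$ with $\Pr(Y_f = y_f, f\in\incoming(e)) > 0$, there is an almost-surely unique value of $Y_e$ compatible with it. Define $\networkcoding_e$ to return that value on such tuples and define $\networkcoding_e$ arbitrarily on the remaining inputs. This yields a local encoding function as required by Definition~\ref{df:nc}. It remains to show that the random variables $\{\widetilde Y_f\}$ induced by $\networkCoding$ coincide in distribution with the given $\{Y_f\}$. I would set $\widetilde Y_s \defined Y_s$ for all $s\in\sessions$, which matches the source marginals by conditions 1 and 2, and then process the hyperedges in a topological order (which exists because $\graph$ is acyclic). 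Inductively, at each edge $e$ the joint distribution of $\{\widetilde Y_f : f\in\incoming(e)\}$ already equals that of $\{Y_f : f\in\incoming(e)\}$, so applying $\networkcoding_e$ reproduces the distribution of $Y_e$ on the support and contributes zero probability elsewhere.

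The main obstacle is reconciling the two notions of determinism, namely almost-sure determination (as delivered by $H(Y_e\mid \cdot)=0$) against the everywhere-defined local encoding maps in Definition~\ref{df:nc}. Patching $\networkcoding_e$ arbitrarily off the joint support is harmless provided those off-support inputs are never actually encountered when we rebuild the induced distribution; this is guaranteed by the topological induction above, since incoming-edge distributions at each stage have already been shown to match the given ones, placing all probability mass on tuples where $\networkcoding_e$ agrees with the original $Y_e$.
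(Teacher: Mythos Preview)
Your argument is correct and in line with the paper, which does not give an explicit proof but simply says the lemma ``follows directly from the above definitions,'' with conditions~2 and~3 attributed to mutual independence of the sources and the deterministic encoding constraints. Your write-up fills in exactly those details.

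One small tightening in the reverse direction: since you set $\widetilde Y_s \defined Y_s$ as the \emph{same} random variables, the natural inductive invariant is the stronger statement $\widetilde Y_f = Y_f$ almost surely for every already-processed $f$, not merely equality of the joint distribution of the incoming tuple $\{\widetilde Y_f : f\in\incoming(e)\}$. With only the distributional statement as written, you would need to track the full joint over \emph{all} processed variables to propagate the induction; with almost-sure equality (which you have for free from $\widetilde Y_s = Y_s$ and determinism of the $\networkcoding_e$), the step $\widetilde Y_e = \networkcoding_e(\widetilde Y_{\incoming(e)}) = \networkcoding_e(Y_{\incoming(e)}) = Y_e$ a.s.\ is immediate, and your off-support patching is indeed harmless.
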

Conditions 2) and 3) are due to the mutual independence of the sources
and the deterministic encoding constraints.

\begin{df}[Rate-Capacity Tuples]
  For a network coding problem $\problem$ let
  \[
  \chi(\problem) \: \defined \: \Real_{0}^{|\sessions|} \times
  \Real_{0}^{|\edges|}.
  \]
  be the set of all \emph{rate-capacity tuples}
  \[
  (\inputRate,\edgeRate)= (\inputRate(s) : \: s \in \sessions, \:
  \edgeRate(e) :\: e \in \edges )
  \]
  for $\problem$.
\end{df}

\begin{df}[Fitness]  \label{df:ratesupport}
A rate-capacity tuple 
$(\inputRate , \edgeRate ) \in\chi(\problem)$
is \emph{fit} for a network code $\{Y_f,f\in\sessions\cup\edges\}$ on
$\problem$ if there exists $c>0$ such that  for all $e\in \edges$
    and $s\in\sessions$,
\begin{align}
  \lambda({s}) & \le c \log |\support(Y_{s})|, \label{eq:6}\\
  \omega({e}) & \ge c \log |\support(Y_{e})|.
\end{align} 
The tuple is \emph{asymptotically fit} for a sequence of network codes
$\{Y_f^n,f\in\sessions\cup\edges\}$ for $n=0,1,\dots$ if there exists a sequence
 $c_n>0$ such that  for all $e\in \edges$
    and $s\in\sessions$,
    \begin{align}
      \lim_{n\to\infty} c_{n }\log |\support(\edgeRV_s^{n})|  & \ge    \inputRate(s), \\
      \lim_{n\to\infty} c_{n}  \log |\support(\edgeRV_e^{n})|   & \le
      \edgeRate(e).    
    \end{align}
\end{df}
Note that fitness does not imply achievability of a rate-capacity
tuple, rather that the tuple is not impossible.  Fitness indicates
that (up no normalisation) each individual source rate is not too
large to be achieved by the corresponding source variable with the
given alphabet size, and that each hyperedge capacity is large enough
to carry the corresponding edge variable regardless of particular
distribution.
 
\begin{df}[Zero-error Achievable Rate-Capacity Tuples]\label{df:admissible} 
  A  rate capacity tuple
  \[
  (\inputRate , \edgeRate ) = (\inputRate(s) : \: s \in \sessions, \:
  \edgeRate(e) :\: e \in \edges )\] is called \emph{zero-error
    achievable}, or \emph{ \zach } if there exists a sequence of network codes
  $\networkCoding^{n}, n=1,2,\dots$ and corresponding induced random
  variables $\{\edgeRV_f^{n}: f \: \in \edges \cup \sessions\}$ such
  that
  \begin{enumerate}
  \item $(\inputRate , \edgeRate )$ is asymptotically
    fit for $\networkCoding^n$.
  \item for any source $s\in\sessions$ and receiver node $u \in
    \destinationLocation(s)$, the source message $Y_s^{n}$ can be
    uniquely determined from the received messages $\left(\edgeRV_f^{n} :
    f \in \incoming(u) \right)$. In other words, 
    
  \begin{equation}
    H\left( Y_s^{n} \mid \edgeRV_f^{n}, f \in \incoming(u) \right) = 0.\label{eq:decode}
  \end{equation}

  \end{enumerate}
\end{df}

In Definition \ref{df:admissible}, each network code in the sequence
has zero probability of decoding error. Relaxing this criteria to
allow decoding error probability that vanishes in the limit, we have
the following definition.
\begin{df}[Vanishing Error Achievable]\label{df:achievable}
  A rate capacity tuple $(\inputRate , \edgeRate )$ is called
  \emph{vanishing error achievable}, or \vach\ if the tuple is
  asymptotically fit, and 
  \begin{enumerate}
  \item[$2^{\prime}$)] for all $s\in\sessions$ and $u \in
    \destinationLocation(s)$, there exists decoding functions
    $g_{s,u}^{n}$ such that
    \begin{align*}
      \lim_{n\to\infty} \Pr(Y_{s}^{n} \neq g_{s,u}^{n}(\edgeRV_f^{n}:
      f \in \incoming(u) )) = 0 .
    \end{align*}
    In other words, decoding error probabilities vanish
    asymptotically.
  \end{enumerate}
\end{df}

For any subset $\set{R} \subseteq {\chi} (\problem)$, define
$\major(\set{R}) $ as the subset of ${\chi} (\problem)$ containing all
tuples $(\lambda,\omega)$ such that there exists a sequence of
$(\lambda^{n} , \omega^{n}) \in \set{R}$ and positive numbers $c_n$
satisfying
\begin{align}
\lim_{n\to\infty} c_{n}{ \omega^{n}(e)}   \le \omega(e), \\
\lim_{n\to\infty} c_{n}{\lambda^{n} (s)}   \ge \lambda(s).
\end{align}
Clearly, if every tuple in $\set{R}$ is \zach/\vach, then
$\major(\set{R}) $ is also \zach/\vach.
  
The central theme of this paper is the characterisation of \zach\ and
\vach\ regions for network coding via the use of \emph{entropy
  functions}.
\begin{df}[Entropy Function]  
  A set of random variables $\{Y_i, i\in\N\}$ (where $\N$ is some
  index set) induces a real \emph{entropy function} $h:2^\N\mapsto\reals$ such that
  for any $\alpha\subseteq\N$,
  \begin{equation*}
    h\left(\alpha\right) = H\left(Y_i, i\in\alpha\right)
\end{equation*}
is the joint Shannon entropy\footnote{We define $h\left(\alpha\right)=0$ whenever
  $\alpha$ is an empty set.}  of $(Y_i : i\in\alpha)$, which according
to our notational conventions we will also write $H\left(Y_\alpha\right)$. 
\end{df}
 
Let  
\[
\set{H}[\set{N}] \triangleq  \Real^{2^{|\set{N}|}}
\]
be the $2^{|\set{N}|}$-dimensional Euclidean space whose coordinates
are indexed by subsets of $\set{N}$. Thus, any element
$g\in\set{H}[\set{N}]$ has coordinates
$(g(\alpha),\alpha\subseteq\set{N})$. Elements of $\set{H}[\set{N}]$
are called \emph{rank functions}\footnote{This terminology comes from matroid
theory and does not imply that such functions must be defined via
ranks of linear operators -- although such functions are rank functions
by this definition.}.  Clearly, entropy functions are rank functions.
\begin{df}[Entropic Functions]
  A rank function $h\in\set{H}[\N]$ is
  \begin{itemize}
  \item \emph{Entropic} if $h$ is the entropy function of a set of
    $|\N|$ random variables. The set of entropic functions is denoted
    $\Gamma^*(\N)\subset\set{H}[\set{N}]$~\cite{Yeung97framework}. When
    the index set $\N$ for the set of random variables is understood,
    we simply denote the set of entropic functions as $\Gamma^*$.
  \item \emph{Weakly entropic} if  there exists
    $c>0$ such that $c\cdot h$ is entropic.
  \item \emph{Almost entropic} if there exists a sequence of weakly
    entropic functions $h^i$ such that
    $$\lim_{i\rightarrow\infty} h^i = h.$$
    The set of almost entropic functions is $\bar\Gamma^*$.
  \end{itemize}
\end{df}

For any rank function $g\in\set{H}[\set{N}]$, define the notations
\begin{align}
g(\alpha\mid\beta) & \: \defined \: g(\alpha \cup \beta) - g(\beta), \\
g(\alpha \wedge \beta)& \: \defined \: g(\alpha) + g(\beta) - g(\alpha
\cup \beta)  .  
\end{align}
If $g$ is in fact an entropy function induced by random variables
$\{Y_{i}, i\in\N\}$, then $g(\alpha\mid\beta)$ is the usual
conditional entropy $H\left(Y_\alpha \mid Y_\beta\right)$ and
$g(\alpha \wedge \beta)$ is the usual  mutual information $
I(Y_{\alpha} ; Y_{\beta} )$. We avoid the
standard notation $I(\cdot;\cdot)$ since it hides the underlying
entropy function, which will be critical in most of what we do.

The set $\Gamma^*$ plays an important role in information
theory. Characterisation of this set amounts to characterising every
possible information inequality. Thus $\Gamma^*$ essentially fixes the
``laws'' of information theory. However it turns out that $\Gamma^*$
has a very complex structure and an explicit characterisation is still
missing~\cite{Chan2011Recent}.  It has been proved that the closure
$\bar{\Gamma}^*$ is a closed convex cone~\cite{Yeung97framework} and
hence is more analytically manageable than $\Gamma^*$. For many
applications, it is in fact sufficient to consider $\bar{\Gamma}^*$.
However, it was proved in \cite{Zhang.Yeung98characterization} that
when $|\N| \ge 3$,
\[
\Gamma^* \: \neq \:  \bar{\Gamma}^*.
\]

\subsection{Existing Results}\label{sec:existingresulte}
For a given network coding problem
$\problem=(\graph,\multicastRequirement)$, let $\Gamma^{*}(\problem)$
and $\set{H}[\problem]$
respectively denote $\Gamma^{*}(\sessions\cup\edges)$ and
$\set{H}[\sessions\cup\edges]$.  Define the coordinate projection
\[
\proj_{\problem} : \set{H}[\problem] \mapsto {\chi}(\problem)
\]
such that for any $h\in
\set{H}[\problem]$,
\begin{align}
  \proj_{\problem}[h](s) & = h(s) , \: \forall s\in\sessions \\
  \proj_{\problem}[h](e) & = h(e), \: \forall e\in\edges
\end{align}
Similarly, for any subset $\set{R}\subseteq\set{H}[\problem]$, 
\begin{align}
{\proj}_{\problem}[\set{R}] \define \{ \proj_{\problem}[h] : h\in\set{R} \}.
\end{align}
Again, if the underlying network coding problem $\problem$ is understood implicitly,
we will simply use the notations $\proj[h]$ and $\proj[\set{R}]$.


Consider any network coding problem $\problem=(\graph,
\multicastRequirement)$. Define the following subsets of
$\set{H}[\problem]\defined \set{H}[\sessions \cup \edges ]$:
\begin{align}
  \set{C}_{\indep} (\problem)& \define \left\{ h\in\set{H} [\problem]:
    {h}({\set{S}})  = \sum_{s\in\sessions} {h}( s) \right\}, \label{eq:constraint1} \\
\set{C}_{\net} (\problem)& \define 
 \left\{ 
	\begin{array}{l}
\hspace{-0.2cm} h \in\set{H}[\problem]: h\left( s\mid   \incoming(e) \right)  = 0,   \forall  e\in\edges 
 	\end{array}
\right\},\\
\set{C}_{\de} (\problem)& \define \left\{ 
	\begin{array}{l}
\hspace{-0.2cm} h \in\set{H}[\problem]: h\left( s\mid  \incoming(u) \right)  = 0, \\  
  	\hspace{2.5cm} \forall  s\in\sessions, u\in\destinationLocation(s) 
 	\end{array}
\right\}. \label{eq:constraint3}
\end{align}
The above subsets will be denoted by $\set{C}_{\indep} ,
\set{C}_{\net} $ and $\set{C}_{\de} $ respectively if the network
coding problem $\problem$ is understood implicitly. Consider a network
code $\{Y_i, i\in {\sessions}\cup{\edges}\}$ with induced entropy
function $h\in\set{H}[\sessions\cup\edges]$. By
Lemma~\ref{lemma:networkcodedef} we see that $h \in \set{C}_{\indep} $
since the sources are mutually independent, and $h\in \set{C}_{\net}$
due to deterministic transmission through the network. If the network
code is zero-error, $h\in \set{C}_{\de}$ follows from the decodability
constraint~\eqref{eq:decode}.
 
The set of \vach\ rate-capacity tuples can be characterised exactly as
follows~\cite{Yan2007The-Capacity}.
\begin{thm}[Yan, Yeung and Zhang -- \vach\ Region~\cite{Yan2007The-Capacity}]\label{thm:yanbd}
  For a given network coding problem $\problem=(\graph,
  \multicastRequirement)$, a rate-capacity tuple $(\lambda,
  \omega)\in{\chi} (\problem) $ is \vach\ if and only if
  \begin{align}\label{eq:yanbd}
    (\lambda, \omega)  \in \major(  \proj_{\problem} \left[ \cone(
      \Gamma^{*} \cap \set{C}_{\indep} \cap \set{C}_{\net})  \cap
      \set{C}_{\de}  \right]). 
  \end{align}
\end{thm}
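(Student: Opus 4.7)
The plan is to prove the ``only if'' (converse) and ``if'' (achievability) directions separately, using Lemma~\ref{lemma:networkcodedef} as the bridge between network codes and their induced entropy functions.

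For the converse, I would take any sequence of network codes $\networkCoding^n$ with induced random variables $\{Y_f^n\}$ and normalising constants $c_n > 0$ certifying \vachy\ of $(\lambda,\omega)$. By Lemma~\ref{lemma:networkcodedef}, the induced entropy function $h^n$ already satisfies $h^n \in \Gamma^*(\problem) \cap \set{C}_{\indep} \cap \set{C}_{\net}$, so $c_n h^n$ lies in $\cone(\Gamma^* \cap \set{C}_{\indep} \cap \set{C}_{\net})$. Fano's inequality applied to the vanishing-error decoders at each sink $u \in \destinationLocation(s)$ forces
\[
c_n h^n(s \mid \incoming(u)) = c_n H(Y_s^n \mid Y_{\incoming(u)}^n) \longrightarrow 0 .
\]
Asymptotic fitness (Definition~\ref{df:ratesupport}), together with a cut-set bound for the source coordinates, keeps $(c_n h^n)$ bounded coordinate-wise, so after passing to a subsequence $c_n h^n \to h^\star$ with $h^\star$ lying in the closed set $\cone(\Gamma^* \cap \set{C}_{\indep} \cap \set{C}_{\net}) \cap \set{C}_{\de}$, and $\proj_{\problem}[h^\star]$ dominates $(\lambda,\omega)$ as required by the $\major$ envelope.

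For the achievability direction, I would start with $h \in \cone(\Gamma^* \cap \set{C}_{\indep} \cap \set{C}_{\net}) \cap \set{C}_{\de}$ with $\proj_{\problem}[h] = (\lambda,\omega)$, and use the definition of $\cone$ to write $h = \lim_n c_n g^n$ with $g^n \in \Gamma^* \cap \set{C}_{\indep} \cap \set{C}_{\net}$. Lemma~\ref{lemma:networkcodedef} then turns each $g^n$ into a genuine network code with random variables $\{Z_f^n\}$, satisfying the source-independence and edge-determinism constraints exactly but satisfying the decoding constraint only in the limit. I would take an $N_n$-fold product of each such code, with $N_n$ tuned to $c_n$ so that the normalised source rates and edge alphabet sizes converge to $(\lambda,\omega)$, and decode at each sink by joint typicality. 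Because $h(s \mid \incoming(u)) = 0$ and $c_n g^n \to h$, the normalised conditional entropies $c_n H(Z_s^n \mid Z_{\incoming(u)}^n)$ vanish, and a standard Fano / Slepian--Wolf--style typical-set argument converts this into a vanishing block error probability at every sink. Asymptotic fitness of the resulting codes follows directly from convergence of $c_n g^n$ on the source and edge coordinates.

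The main obstacle is the achievability step, specifically bridging the gap between the limiting entropy statement $h(s \mid \incoming(u)) = 0$ and an honest decoder with vanishing error probability: the constraints $\set{C}_{\indep}$ and $\set{C}_{\net}$ can be enforced exactly by taking the approximating $g^n$ from real codes, but $\set{C}_{\de}$ is only met asymptotically at the entropy level and must be upgraded to operational decoders. Coupling the blocklength $N_n$ to the scalars $c_n$ from the cone representation, and simultaneously decoding many overlapping source subsets at different sinks, is the delicate part. A secondary point is verifying that the conic combinations appearing inside $\cone(\cdot)$ translate into feasible product codes respecting the hyperedge alphabet bounds $\omega(e)$; this is absorbed into the same product construction that controls the blocklength.
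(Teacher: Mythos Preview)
The paper does not supply its own proof of this theorem; it is quoted as the main result of \cite{Yan2007The-Capacity}, with only the converse for the weaker Corollary~\ref{cor:ourbd} sketched in Section~\ref{sec:existingresulte}. The paper does remark (at the start of Section~\ref{sec:Proof}) that the Yan--Yeung--Zhang achievability argument is based on strongly typical sequences, which is the direction your proposal gestures toward. Your converse is essentially the sketch the paper gives, with Fano replacing exact decodability; that part is fine.

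Your achievability argument has a real gap at the point where you invoke Lemma~\ref{lemma:networkcodedef} to ``turn each $g^n$ into a genuine network code''. Membership $g^n \in \Gamma^* \cap \set{C}_{\indep} \cap \set{C}_{\net}$ only certifies conditions 2) and 3) of that lemma for \emph{some} random variables $\{Z_f^n\}$ realising $g^n$; it says nothing about condition 1), that each source variable be \emph{uniform} over its support. Uniformity is not an entropy constraint, and there is no reason the $Z_s^n$ furnished by $g^n \in \Gamma^*$ should satisfy it. Taking an $N_n$-fold i.i.d.\ product does not help: a product of non-uniform variables is still non-uniform, so you still do not have a network code in the sense of Definition~\ref{df:nc}, and the asymptotic-fitness condition $\lambda(s) \le c_n \log|\support(Y_s^n)|$ (which for a genuine code equals $c_n H(Y_s^n)$) is not controlled. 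The actual work in \cite{Yan2007The-Capacity}, and in this paper's zero-error analogue via Propositions~\ref{prop:3.support} and~\ref{prop:regular}, is precisely to manufacture \emph{uniform} sources from arbitrary random variables while preserving the deterministic edge maps in $\set{C}_{\net}$: either by restricting to type classes and rebuilding the local encoders on the typical set, or by the quasi-uniform approximation. Your sketch ``product plus joint typicality'' names the neighbourhood of the right idea but skips this step, which is the heart of the achievability proof.
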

 
Inner and outer bounds for the \zach\ region were also investigated
in~\cite{Song.Yeung.ea03zero-error} using a similar framework as
in~\cite{Yan2007The-Capacity,Chan2008Dualities}. However,
\cite{Song.Yeung.ea03zero-error} allowed the use of variable length
coding, where the amount of data traffic on a particular link is
measured as the average number of transmitted bits. 

In contrast, this paper studies the \emph{worst case} scenario where
the amount of traffic is measured by the maximum number of bits
transmitted on  a link (hence, it is sufficient to consider fixed-length
codes).  It is worth pointing out that when decoding error is not
allowed, there is a significant difference between using the average
or the maximum number of transmitted bits. For example, consider a
source $X$ compressed/encoded by an optimal uniquely-decodable code.
The average length of resulting codeword is roughly equal to
$H(X)$. However, for all uniquely-decodable codes, the maximum length
of the encoded codeword must be at least $\log|\support(X)|$, which
can be much greater than $H(X)$ if $X$ is heavily biased.

The following outer bound follows directly from Theorem
\ref{thm:yanbd}, but was proved earlier in~\cite{Yeung02first}
\begin{cor}[Yeung -- Outer Bound~\cite{Yeung02first}]\label{cor:ourbd}
  If a rate-capacity tuple $(\lambda, \omega) $ is \vach, then
\begin{align}\label{eq:ourbd}
(\lambda, \omega)  \in \major(  \proj_{\problem} [ \bar \Gamma^{*}  \cap \set{C}_{\indep} \cap \set{C}_{\net}  \cap \set{C}_{\de} ]).
\end{align}
\end{cor}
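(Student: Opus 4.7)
The plan is to derive Corollary~\ref{cor:ourbd} as a direct relaxation of Theorem~\ref{thm:yanbd}. The starting point is the observation that each of the constraint sets $\set{C}_{\indep}$, $\set{C}_{\net}$, and $\set{C}_{\de}$ is defined by \emph{homogeneous linear} equalities among the coordinates of a rank function (e.g.\ $h(\sessions) - \sum_{s} h(s) = 0$, $h(\sessions\cup\edges) - h(\incoming(e)\cup e) = 0$, and so on). Hence each of these three sets is a closed linear subspace of $\set{H}[\problem]$, and in particular a closed convex cone that is invariant under scaling by nonnegative constants.

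Next, I would invoke the fact from \cite{Yeung97framework} that $\bar\Gamma^{*}$ is a closed convex cone containing $\Gamma^{*}$. Since the intersection of two closed convex cones is again a closed convex cone, $\bar\Gamma^{*} \cap \set{C}_{\indep} \cap \set{C}_{\net}$ is a closed convex cone containing $\Gamma^{*} \cap \set{C}_{\indep} \cap \set{C}_{\net}$, and therefore
\[
\cone(\Gamma^{*} \cap \set{C}_{\indep} \cap \set{C}_{\net}) \; \subseteq \; \bar\Gamma^{*} \cap \set{C}_{\indep} \cap \set{C}_{\net}.
\]
Intersecting both sides with $\set{C}_{\de}$ yields
\[
\cone(\Gamma^{*} \cap \set{C}_{\indep} \cap \set{C}_{\net}) \cap \set{C}_{\de} \; \subseteq \; \bar\Gamma^{*} \cap \set{C}_{\indep} \cap \set{C}_{\net} \cap \set{C}_{\de}.
\]

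Finally, since both $\proj_{\problem}[\cdot]$ and $\major(\cdot)$ are monotone under set inclusion, applying them to the previous display gives the inclusion of achievable-region supersets
\[
\major(\proj_{\problem}[\cone(\Gamma^{*} \cap \set{C}_{\indep} \cap \set{C}_{\net}) \cap \set{C}_{\de}]) \; \subseteq \; \major(\proj_{\problem}[\bar\Gamma^{*} \cap \set{C}_{\indep} \cap \set{C}_{\net} \cap \set{C}_{\de}]).
\]
By Theorem~\ref{thm:yanbd}, any \vach\ tuple lies in the left-hand side, and hence in the right-hand side, which is precisely \eqref{eq:ourbd}. There is no real obstacle here; the only subtle point is recognising that taking the closed conic hull behaves well with respect to intersection by a closed linear subspace, which is essentially the reason the bound loosens from $\cone(\Gamma^{*}\cap\cdots)$ to $\bar\Gamma^{*}\cap\cdots$ when the deterministic-decoding constraint is pulled inside.
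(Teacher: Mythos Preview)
Your proposal is correct and follows essentially the same approach as the paper, which simply states that the corollary ``follows directly from Theorem~\ref{thm:yanbd}'' and then remarks (just after the corollary) that $\cone(\Gamma^{*}\cap\set{C}_{\indep}\cap\set{C}_{\net})\subseteq \bar\Gamma^{*}\cap\set{C}_{\indep}\cap\set{C}_{\net}$ because $\bar\Gamma^{*}$ is a closed convex cone. One small slip: your illustrative equality for $\set{C}_{\net}$ should read $h(e\cup\incoming(e))-h(\incoming(e))=0$ rather than $h(\sessions\cup\edges)-h(\incoming(e)\cup e)=0$, but this does not affect the argument since the point is only that the constraint is homogeneous linear.
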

This bound \eqref{eq:ourbd} is not necessarily tight, since
$\Gamma^{*}$ is not closed and convex in general. Therefore,
\[
 \cone( \Gamma^{*} \cap \set{C}_{\indep} \cap \set{C}_{\net} )
\]
theoretically may be a proper subset of  
\[
 \cone( \bar \Gamma^{*} \cap \set{C}_{\indep} \cap \set{C}_{\net} )
 \nequal{(a)} \bar \Gamma^{*}  \cap \set{C}_{\indep} \cap
 \set{C}_{\net} 
\]
where $(a)$ follows from that $\bar \Gamma^{*}  $ is a closed and convex cone. 
%

It is clear that if $(\lambda, \omega) \in {\chi}(\problem)$ is \zach,
then it is also \vach\ and hence must satisfy the outer bound
\eqref{eq:ourbd} in Corollary \ref{cor:ourbd}.  In fact, it can be
seen directly that~\eqref{eq:ourbd} must be an outer bound for the set
of \zach\ rate-capacity tuples. Suppose $(\lambda, \omega) \in
\chi(\problem) $ is \zach. Then there exists a sequence of network
codes $\{Y_f^{n}, f\in {\sessions\cup\edges}\}$ with induced entropy
functions $h^n$ and positive constants $c_n$ such that for all
$e\in\edges$ and $s \in\sessions$
\begin{align*}
\lim_{n\to\infty}  c_{n}{H (Y_{e}^{n}) }  & \le  \lim_{n\to\infty} c_{n} {\log | \support(Y_{e}^{n})|} \le \omega({e}),   \\
\lim_{n\to\infty} {c_n}{H (Y_{s}^{n}) }  & = \lim_{n\to\infty} {c_n}{\log |\support( Y_{s}^{n})|}  \ge \lambda({s}), 
\end{align*}
and that $h^{n} \in \set{C}_{\indep}\cap \set{C}_{\net} \cap
\set{C}_{\de} $. Consequently,
\[
(\lambda,\omega) \in \major( \proj_{\problem} [   \Gamma^{*} \cap \set{C}_{\indep} \cap \set{C}_{\net} \cap \set{C}_{\de} ]).
\]
The proof that \eqref{eq:ourbd} is an outer bound for \vach\ tuples is
similar. However, as vanishing error is allowed, Fano's inequality is
invoked to ensure $\lim_{n\to\infty} {c_n} h^{n} \in \set{C}_{\de}$.

In the next section, we deliver our first main result, namely
that~\eqref{eq:ourbd} is tight when the sources are colocated.

\section{Tightness of Yeung's Outer Bound}\label{sec:part1}
The analytical challenges in characterising $\Gamma^*$ (let alone its
intersection with $\set{C}_{\indep}$ and $\set{C}_{\net}$) may render
Theorem~\ref{thm:yanbd} unattractive as a characterisation of the
network coding capacity region. In this section we show that the more
manageable bound
 of Corollary~\ref{cor:ourbd}, which involves the closure of
$\Gamma^*$ is in fact tight when the sources are colocated, a notion
that we make precise below in Definition~\ref{df:colocated}. Our proof
will use quasi-uniform random variables, discussed in~\ref{sec:tools},
which are a valuable tool in proving zero-error results. The proof of
the main result, Theorem~\ref{thm:arbitrarymain} is given
in~\ref{sec:Proof}.

\begin{df}[Colocated sources]\label{df:colocated}
  Consider a network coding problem $\problem = (\graph,
  \multicastRequirement)$.  Its sources are called \emph{colocated}
  if
  \[
  \sourceLocation(s)=\sourceLocation(s'), \quad \forall
  s,s'\in\sessions.
  \]
\end{df}
In other words, if a node has an access to any source $s$, it also has access to all the other sources.
 
\begin{thm}[Colocated sources]\label{thm:arbitrarymain}
  Consider a network coding problem
  $\problem=(\graph,\multicastRequirement)$ with colocated sources
  according to Definition~\ref{df:colocated}.  Then
  \begin{enumerate}
  \item A rate-capacity tuple $(\lambda, \omega)$ is \zach\ if and
    only if it is \vach.
  \item The outer bound in Corollary \ref{cor:ourbd} is tight.
\end{enumerate}
\end{thm}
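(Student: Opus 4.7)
The plan is to prove the single nontrivial implication: every rate-capacity tuple satisfying the outer bound~\eqref{eq:ourbd} is \zach. Combined with the trivial implication \zach\ $\Rightarrow$ \vach\ and the known \vach\ $\Rightarrow$~\eqref{eq:ourbd} direction of Corollary~\ref{cor:ourbd}, this collapses the three notions and yields both conclusions of the theorem at once. The colocation hypothesis will enter only once, in the very last step of the construction.

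I first pick a witness $h \in \bar{\Gamma}^{*} \cap \set{C}_{\indep} \cap \set{C}_{\net} \cap \set{C}_{\de}$ whose projection $\major$-dominates $(\lambda,\omega)$, and then aim to construct a sequence of zero-error network codes whose normalised entropy functions converge to $h$. Because $\bar{\Gamma}^{*}$ is the closure of the cone generated by quasi-uniform entropic functions, I approximate $h$ by scaled quasi-uniform entropies $c_n h^n\to h$. The single property of quasi-uniformity (to be developed in Section~\ref{sec:tools}) that I will use repeatedly is that the conditional support $|\support(Y_\alpha \mid Y_\beta = y_\beta)|$ is constant in $y_\beta\in\support(Y_\beta)$; consequently, a conditional entropy of zero forces an exact deterministic dependence and not merely an almost-sure one.

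The core task is then to promote the asymptotic relations $c_n h^n(e\mid \incoming(e))\to 0$, $c_n h^n(s\mid \incoming(u))\to 0$, and the asymptotic independence of $\{Y^n_s : s\in\sessions\}$ into exact equalities at each finite $n$, producing a modified quasi-uniform collection $\hat Y^n$ whose entropy function still approximates $h$. For the network and decoding constraints I process hyperedges in a topological order of $\graph$ and at each step \emph{collapse} the constant-size conditional fibre of $Y^n_e$ over $Y^n_{\incoming(e)}$ to a single representative, defining $\hat Y^n_e$ as an honest deterministic function of $\hat Y^n_{\incoming(e)}$; an analogous collapse at each sink $u\in\destinationLocation(s)$ makes $\hat Y^n_s$ an exact function of its incoming messages, yielding zero-error decoding. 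The cumulative entropy perturbation is bounded by a sum of the vanishing conditional entropies and is therefore negligible. For the source-independence constraint $\set{C}_{\indep}$, the colocation hypothesis is decisive: because every source is available at a common set of nodes, I replace $\hat Y^n_{\sessions}$ by truly independent uniform sources $\tilde Y^n_s$ of matching (approximately equal) support sizes and recreate the joint distribution used by the rest of the code through a deterministic bijection computed locally at those source nodes; this replacement is internal to the source node and costs no edge capacity.

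The main obstacle I anticipate is executing the $\set{C}_{\net}$ and $\set{C}_{\de}$ collapsing steps simultaneously across all hyperedges and sinks without destroying the quasi-uniform structure that the remaining constraints depend on. Handling this will require a careful induction along the topological order, controlling both the size of each conditional support and the cumulative alphabet inflation; the acyclicity of $\graph$ is what makes the induction go through. Once the exact constraints are in place, the rate-capacity inequalities required by asymptotic fitness follow immediately from $c_n h^n\to h$. Crucially, without colocation the prepending trick for $\set{C}_{\indep}$ is unavailable, since independent sources would need to be generated at physically distinct nodes; this is precisely the obstruction that leaves the non-colocated case only conjecturally tight in the paper's introduction.
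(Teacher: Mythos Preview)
Your high-level strategy matches the paper's, but both repair mechanisms you sketch have genuine gaps. For the functional dependencies in $\set{C}_{\net}\cap\set{C}_{\de}$, the paper does not collapse fibres along a topological order. Your scheme is underspecified: once $\hat Y_e$ is declared a function of the already-modified $\hat Y_{\incoming(e)}$, those hatted tuples need not lie in $\support(Y_{\incoming(e)})$ at all, so the ``fibre of $Y_e$'' you want to pick from is not defined, and you have no control over $H(\hat Y_e)$ relative to $H(Y_e)$. Worse, at a sink you cannot ``collapse'' the source $Y_s$---it is a fixed input, possibly demanded by several sinks---so $\set{C}_{\de}$ cannot be enforced that way. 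The paper's device (Proposition~\ref{prop:3.support}) is global rather than inductive: for each desired dependency $h(k\mid\alpha)=0$ it builds, via Lemma~\ref{lemma:quw}, an auxiliary $W_{k,\alpha}^{\ell}$ of entropy $H(U_k^{\ell}\mid U_\alpha^{\ell})$, adjoins the entire bundle $W_{\Delta}^{\ell}$ to \emph{every} coordinate $U_f^{\ell}$, and then re-approximates by quasi-uniform variables (Proposition~\ref{prop:3.supporta}); all dependencies then hold exactly and simultaneously, with inflation $c_{\ell} H(W_{\Delta}^{\ell})\to 0$.

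For $\set{C}_{\indep}$, the ``deterministic bijection'' you invoke does not exist in general: independent $\tilde Y_s$ with $|\support(\tilde Y_s)|\approx 2^{H(U_s)}$ have product support strictly larger than $|\support(U_{\sessions})|$ whenever the $U_s$ are dependent. Even an injection $T:\prod_s\set{A}_s\to\support(U_{\sessions})$ is not enough, because the sink for source $s$ can recover only the $s$-th coordinate $Z_s=T_s(\tilde Y_{\sessions})$, so you additionally need that $T_s(b_{\sessions})$ alone determines $b_s$. This is exactly the paper's \emph{regular partition set} (Definition~\ref{df:regularpartition}): each $\support(U_s)$ is partitioned into $|\set{A}_s|\approx 2^{H(U_s\mid U_1,\dots,U_{s-1})}$ disjoint blocks so that every cross-product of blocks meets $\support(U_{\sessions})$, and existence requires a nontrivial random-partition argument (Proposition~\ref{prop:regular}, Appendix~\ref{appendix:A}). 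Colocation enters exactly where you say---so every source node can compute $T_s(\tilde Y_{\sessions})$ from the full tuple $\tilde Y_{\sessions}$---but constructing such a $T$ is the missing technical core of the proof.
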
 

We fail to prove the tightness of the outer bound in Corollary
\ref{cor:ourbd} when sources are not colocated. However, we will give
evidence in \ref{sec:general} to support our conjecture that the outer
bound should be tight in general.
     
\subsection{Tools: Quasi-Uniform Random Variables}\label{sec:tools}
Before we prove Theorem \ref{thm:arbitrarymain} in
Section~\ref{sec:Proof}, we introduce key tools and intermediate
results.  In particular, the proof relies on the concept of
quasi-uniform random variables, which are crucial for proving
zero-error results.  

\begin{df}[Quasi-Uniform Random
Variables~\cite{Chan2011Recent}]\label{df:qu} A set of random variables $\{X_{i},
i\in\N \}$ is called \emph{quasi-uniform} if for any subset
$\alpha\subseteq \N$, the random variable $X_{\alpha} \define
(X_{i},i\in\alpha)$ is uniformly distributed over its support, or
equivalently,
\[
H\left(X_{\alpha}\right) = \log |\support(X_{\alpha})|.
\]
\end{df}
 
\begin{lemma}\label{lemma:quw}
  Suppose $\{A,B\}$ is quasi-uniform. Then one can construct a random
  variable $W$ such that
\begin{align*}
H(W) &= H\left(A\mid B\right), \\
H\left(A \mid B, W\right)&=0.
\end{align*}
\end{lemma}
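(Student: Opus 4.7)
The plan is to exploit the combinatorial regularity that quasi-uniformity forces on the joint support of $(A,B)$. Since $\{A,B\}$ is quasi-uniform, both $(A,B)$ and $B$ are uniform on their supports, so for each $b\in\support(B)$ the fiber
\[
\set{F}_b \defined \{a : (a,b)\in\support(A,B)\}
\]
has a size $N_b$ that satisfies $\Pr(B=b) = N_b/|\support(A,B)|$. But $B$ is uniform on $\support(B)$, which forces $N_b$ to be a constant $N = |\support(A,B)|/|\support(B)|$, independent of $b$. Note that $H(A\mid B) = H(A,B)-H(B) = \log N$ by the definition of quasi-uniformity.

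Next, I would build $W$ explicitly as an index into these fibers. For each $b\in\support(B)$, choose an arbitrary bijection $\phi_b: \{1,\ldots,N\}\to\set{F}_b$, and define the random variable
\[
W \defined \phi_B^{-1}(A),
\]
which is well-defined with probability one because $(A,B)\in\support(A,B)$ almost surely implies $A\in\set{F}_B$. By construction $A = \phi_B(W)$, so $A$ is a deterministic function of $(B,W)$, giving $H(A\mid B, W)=0$ immediately.

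It remains to compute $H(W)$. For any $b\in\support(B)$ and $w\in\{1,\ldots,N\}$, $\Pr(W=w\mid B=b) = \Pr(A=\phi_b(w)\mid B=b) = 1/N$, using that the conditional distribution of $A$ given $B=b$ is uniform on $\set{F}_b$ (another consequence of quasi-uniformity of $(A,B)$). Thus $W$ is uniform on $\{1,\ldots,N\}$ (and, as a bonus, independent of $B$), so
\[
H(W) = \log N = H(A\mid B),
\]
as required.

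There is no real obstacle: the lemma is essentially a bookkeeping statement, and the only thing to verify carefully is the fiber-size equality, which is where the quasi-uniformity hypothesis on the joint variable $(A,B)$ (not just on $A$ and $B$ individually) is crucial. Without joint uniformity, fibers could have different sizes and no single index alphabet $\{1,\ldots,N\}$ would suffice.
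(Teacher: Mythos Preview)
Your proof is correct and takes essentially the same approach as the paper: the paper's enumeration $q(w,b)$ of the fiber over $b$ is exactly your bijection $\phi_b$, and the paper's $W$ (defined via $\Pr(W=w\mid A=a,B=b)=1$ iff $a=q(w,b)$) is your $W=\phi_B^{-1}(A)$. Your write-up is in fact more explicit than the paper's sketch, since you spell out the fiber-size computation and the uniformity of $W$ rather than leaving them as ``direct verification.''
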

\begin{proof}[Sketch of proof]
  As $\{A,B\}$ is quasi-uniform, it can be proved from
  Definition~\ref{df:qu} that for any $b\in\support(B)$,
  \[
  \Pr(A=a \mid  B=b)=
  \begin{cases}
    2^{-H\left(A\mid B\right)} & \text{ if } \Pr(A=a , B=b) >0 \\
    0 & \text{ otherwise.}
  \end{cases}
  \]
  Assume without loss of generality that 
  \[
  \{q(1,b), \ldots, q(2^{H\left(A\mid B\right)},b) \}
  \]
  is the set of all elements in $\support(A)$ such that
  \[
  \Pr(A=a \mid  B=b) > 0.
  \]
  Let $W$ be a random variable such that for any $(a,b)$ in $
  \support(A,B)$,
  \[
  \Pr(W=w \mid  A=a, B=b) =
  \begin{cases}
    1 & \text{ if } a = q(w,b) \\
    0 &\text{ otherwise.}
  \end{cases}
  \] 
  The lemma can then be verified directly. 
\end{proof}

\df[Quasi-Uniform Rank Functions]\label{df:classification}
A rank function $h \in \set{H}[\N]$  is called  
\begin{itemize}
\item \emph{Quasi-uniform} if $h$ is the entropy function of a set of
  $|\set{N}|$ quasi-uniform random variables.

\item \emph{Weakly quasi-uniform} if there exists $c > 0$ such that $
  c\cdot h$ is quasi-uniform;

\item \emph{Almost quasi-uniform} if there exists a sequence of weakly
  quasi-uniform rank functions $ h^{i} $ such that
  \[
  \lim_{i\to\infty} h^{i} = h.
  \]
\end{itemize}
\enddf      

\begin{lemma}\label{lem:qusum}
  If $h_{1},h_{2}\in\set{H}[\N]$ are quasi-uniform, then their sum,
  defined for all $\set{A}\subseteq\set{N}$ as
  $h_1(\set{A})+h_2(\set{A})$, is also quasi-uniform.
\end{lemma}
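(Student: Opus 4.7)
The plan is to prove Lemma \ref{lem:qusum} by an explicit independent-product construction that realises $h_1 + h_2$ as the entropy function of a quasi-uniform ensemble.

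First, I would invoke Definition \ref{df:classification} to obtain quasi-uniform ensembles $\{X_i, i \in \N\}$ and $\{Z_i, i \in \N\}$ whose entropy functions are $h_1$ and $h_2$ respectively. Take these two ensembles to be defined on a product probability space so that $(X_i, i \in \N)$ and $(Z_i, i \in \N)$ are mutually independent; this is without loss of generality because only the joint distribution within each ensemble is constrained. Define a new ensemble $Y_i \defined (X_i, Z_i)$ for each $i \in \N$, and let $h$ denote its entropy function.

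Next I would verify two things simultaneously for every $\alpha \subseteq \N$. On the entropy side, independence of $X_\alpha$ and $Z_\alpha$ gives
\[
h(\alpha) = H(X_\alpha, Z_\alpha) = H(X_\alpha) + H(Z_\alpha) = h_1(\alpha) + h_2(\alpha).
\]
On the support side, the support of $Y_\alpha$ is exactly the Cartesian product $\support(X_\alpha) \times \support(Z_\alpha)$, and the joint distribution on this product support is the product of two uniform distributions, hence itself uniform. Therefore $H(Y_\alpha) = \log|\support(Y_\alpha)|$, which is precisely the quasi-uniform condition of Definition \ref{df:qu}. Combining the two displays yields the claim.

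There is no real obstacle here: the whole argument reduces to the elementary fact that the product of independent uniform distributions is uniform on the product space, together with additivity of Shannon entropy under independence. The only point requiring a word of care is ensuring that one may indeed realise $\{X_i\}$ and $\{Z_i\}$ on a common space so that the ensembles are mutually independent — once this is stated, the verification of quasi-uniformity and of the additivity identity for $h$ are immediate from Definition \ref{df:qu}.
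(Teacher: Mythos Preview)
Your proposal is correct and follows essentially the same approach as the paper: both take independent realisations of the two quasi-uniform ensembles and form the paired variables $Y_i = (X_i, Z_i)$, observing that the resulting ensemble is quasi-uniform with entropy function $h_1 + h_2$. Your write-up supplies the explicit verification of uniformity on the product support that the paper leaves to the reader, but the construction and idea are identical.
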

\begin{proof}
  Suppose $A_\N$ and $B_\N$ are two independent sets of quasi-uniform
  random variables whose entropy functions are $h_{1}$ and $h_{2}$
  respectively. It is straightforward to construct a new set of
  variables $X_\N$ with entropy function $h_1+h_2$, via 
  \[
  X_{i} = (A_{i}, B_{i}), \quad \forall i \in \N.
  \]
  The lemma follows, since $ X_\N$, and hence $h_1+h_2$, is quasi-uniform.
\end{proof}

For any weakly entropic function $h$, \cite{Chan.Yeung02relation}
explicitly constructed a sequence of weakly quasi-uniform functions
with limit $h$.  It can be verified directly that this sequence of
weakly quasi-uniform functions satisfies the same functional
dependency constraints as $h$. Hence, we have the following
proposition.
\begin{prop}\label{prop:3.supporta}
  For any weakly entropic rank function $ \bfh$, there exists a
  sequence of  quasi-uniform random variables $\{U^{\ell}_{i},
  i\in\N \}$ and positive numbers $c_{\ell}$ such that
  \begin{enumerate}
  \item For any $\alpha \subseteq \N$,
    \begin{align}\label{eq:13}
      \lim_{\ell \to\infty} c_\ell H\left( U^{\ell}_{\alpha}\right) = h\left(\alpha\right).
    \end{align}
  \item If $h\left(k\mid \alpha\right) = 0$, then 
    \begin{align}\label{eq:13b}
      H\left( U^{\ell}_{k} | U^{\ell}_{\alpha}\right)  = 0, \quad \text{for all } \ell.
    \end{align}
    In other words, $h$ is the limit of a sequence of weakly quasi-uniform functions $f^{\ell}$ where 
    \[
    h\left(k\mid\alpha\right) = 0 \implies f^{\ell}(k\mid\alpha) = 0.
    \]
  \end{enumerate}
\end{prop}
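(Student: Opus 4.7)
The plan is to invoke the explicit method-of-types construction of Chan and Yeung \cite{Chan.Yeung02relation} and then observe that it preserves functional dependencies for free. First I would reduce to the case where $h$ itself is entropic (rather than merely weakly entropic) by rescaling $c_{\ell}$ at the end, and fix random variables $\{X_i, i\in\N\}$ whose joint entropy function is $h$.

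Next I would recall the construction at the heart of \cite{Chan.Yeung02relation}: for each $\ell$, take $\ell$ i.i.d. copies $(X_i^{(1)},\ldots, X_i^{(\ell)})_{i\in\N}$ of the joint source and condition on the event that the joint empirical type of the sample equals the joint distribution of $\{X_i, i\in\N\}$ (for definiteness, assume rational probabilities; the general case follows by a standard perturbation). Define
\[
U_i^{\ell} \triangleq (X_i^{(1)}, \ldots, X_i^{(\ell)})
\]
after this conditioning. Standard type-class cardinality estimates give, for every $\alpha\subseteq\N$, that $U_\alpha^{\ell}$ is uniform on its support and $\log|\support(U_\alpha^{\ell})| = \ell H(X_\alpha) + o(\ell)$. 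Setting $c_\ell = 1/\ell$ then yields
\[
c_\ell H(U_\alpha^{\ell}) = H(X_\alpha) + o(1) \longrightarrow h(\alpha),
\]
which, together with the quasi-uniformity of every marginal, establishes property (1).

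The key observation for property (2) is that if $h(k\mid \alpha) = 0$, then $X_k = \psi(X_\alpha)$ almost surely for some deterministic $\psi$, and hence in the i.i.d. product the componentwise relation $X_k^{(j)} = \psi(X_\alpha^{(j)})$ holds with probability one for every $j$. Restricting to the joint type class only removes sequences having zero probability under the original law, so this componentwise deterministic relation is preserved after conditioning, and $U_k^{\ell}$ is a deterministic function of $U_\alpha^{\ell}$ for every $\ell$. This gives $H(U_k^{\ell}\mid U_\alpha^{\ell}) = 0$ exactly, not merely asymptotically. The main obstacle I anticipate is purely bookkeeping: one must choose the type class so that, after conditioning, each marginal $U_\alpha^{\ell}$ is simultaneously uniform on its support for all $\alpha\subseteq\N$, not just for $\alpha=\N$. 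Conditioning on the full joint type of $(X_i, i\in\N)^{\otimes \ell}$ (rather than on individual marginal types) makes every sub-tuple uniform on the corresponding projected type class, which is a standard property of joint types; once this is set up correctly, both conclusions follow directly.
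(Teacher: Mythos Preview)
Your proposal is correct and follows exactly the paper's approach: the paper simply cites the explicit type-class construction of \cite{Chan.Yeung02relation} and asserts that ``it can be verified directly'' that the resulting quasi-uniform sequence inherits every functional dependency of $h$, which is precisely what you unpack. Your observation that a zero conditional entropy forces a coordinatewise deterministic relation that survives restriction to the joint type class is the verification the paper leaves implicit; just be sure that the ``standard perturbation'' to rational probabilities is chosen to preserve the support of the joint law, so that no functional dependency is destroyed in the approximation step.
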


In fact, Proposition \ref{prop:3.supporta} remains valid even if $h$ is almost entropic.

\begin{prop}\label{prop:3.support}
  For any almost entropic rank function $\bfh \in \bar\Gamma^*(\N)$,
  there exists a sequence of quasi-uniform random
  variables $\{U^{\ell}_{i}, i\in\N \}$ and positive numbers
  $c_{\ell}$ such that \eqref{eq:13} and \eqref{eq:13b} hold.
\end{prop}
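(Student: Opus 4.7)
The plan is to derive Proposition~\ref{prop:3.support} from Proposition~\ref{prop:3.supporta} by a diagonal argument, preceded by a preliminary modification that forces the functional dependencies of $\bfh$ to hold exactly in the approximating sequence.

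By definition of almost entropic, I would fix weakly entropic $\bfh^{j} \to \bfh$ with random variables $X^{j} = (X^{j}_{i} : i \in \N)$ realising $\bfh^{j}$ up to scale. The obstacle is that a dependency $(k,\alpha) \in \set{D}(\bfh) \define \{(k,\alpha): \bfh(k\mid\alpha)=0\}$ need not satisfy $\bfh^{j}(k\mid\alpha)=0$, so Proposition~\ref{prop:3.supporta} applied directly to $\bfh^{j}$ cannot guarantee \eqref{eq:13b} for $\bfh$. To fix this I would construct modified variables $\tilde X^{j}$ by enumerating $\set{D}(\bfh)$ in an order compatible with a topological sort of the dependency structure (after collapsing equivalence classes of variables linked by symmetric dependencies) and iteratively setting $\tilde X^{j}_{k} \define \phi^{j}_{k}(\tilde X^{j}_{\alpha})$ with $\phi^{j}_{k}(x_{\alpha}) = \arg\max_{x}\Pr(X^{j}_{k}=x \mid X^{j}_{\alpha}=x_{\alpha})$. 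A Fano-style bound $\Pr(\phi^{j}_{k}(X^{j}_{\alpha}) \neq X^{j}_{k}) \leq 1 - \exp(-H(X^{j}_{k}\mid X^{j}_{\alpha}))$, which follows from the inequality $H(X\mid Y=y) \geq -\log\max_{x}\Pr(X=x\mid Y=y)$ together with convexity of $-\log(1-\cdot)$, then propagates inductively along the topological order to give $\Pr(\tilde X^{j} \neq X^{j}) \to 0$; a telescoping entropy estimate shows that the entropy function $\tilde\bfh^{j}$ of $\tilde X^{j}$ still satisfies $\tilde\bfh^{j} \to \bfh$, while by construction $\tilde\bfh^{j}(k\mid\alpha) = 0$ for every $(k,\alpha) \in \set{D}(\bfh)$.

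With this modified sequence in hand, I would apply Proposition~\ref{prop:3.supporta} to each weakly entropic $\tilde\bfh^{j}$, obtaining quasi-uniform $\{U^{j,\ell}_{i}\}$ and constants $c_{j,\ell} > 0$ with $\lim_{\ell\to\infty} c_{j,\ell} H(U^{j,\ell}_{\alpha}) = \tilde\bfh^{j}(\alpha)$ and $H(U^{j,\ell}_{k} \mid U^{j,\ell}_{\alpha}) = 0$ for every $\ell$ and every $(k,\alpha) \in \set{D}(\bfh)$. A diagonal selection of $\ell(j)$ with $|c_{j,\ell(j)}H(U^{j,\ell(j)}_{\alpha}) - \tilde\bfh^{j}(\alpha)| < 1/j$ uniformly in $\alpha \subseteq \N$ then produces the required sequence $U^{j} \define U^{j,\ell(j)}$, $c_{j} \define c_{j,\ell(j)}$ satisfying both \eqref{eq:13} and \eqref{eq:13b}. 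The hard part is unquestionably the preliminary modification: enforcing the exact functional dependencies of an almost (rather than weakly) entropic function while preserving the entropy limit, which demands a careful telescoping control over the entropy perturbation because alphabet sizes and scale factors may grow unboundedly with $j$ and naive continuity of entropy is therefore unavailable.
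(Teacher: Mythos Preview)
Your high-level plan---modify the approximants so that the functional dependencies of $\bfh$ hold exactly, then apply Proposition~\ref{prop:3.supporta} termwise and diagonalise---matches the paper's. The gap is in the modification step, and it is twofold. First, the MAP replacement $\tilde X^{j}_{k}\defined\phi^{j}_{k}(\tilde X^{j}_{\alpha})$ enforces only one pair $(k,\alpha)$ per variable: if $\set{D}(\bfh)$ contains both $(k,\alpha)$ and $(k,\beta)$ with $\alpha,\beta$ incomparable, defining $\tilde X^{j}_{k}$ as a function of $\tilde X^{j}_{\alpha}$ will in general leave $H(\tilde X^{j}_{k}\mid\tilde X^{j}_{\beta})>0$, and re-processing $(k,\beta)$ overwrites $\tilde X^{j}_{k}$ and destroys the first dependency; a topological sort does not help, since the obstruction is multiple incoming constraints on the same $k$, not cycles. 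Second, your Fano-type bound controls the \emph{unscaled} quantity $H(X^{j}_{k}\mid X^{j}_{\alpha})=d_{j}\,\bfh^{j}(k\mid\alpha)$, where $d_{j}$ is the (uncontrolled) scale at which $\bfh^{j}$ is realised; nothing forces $d_{j}\,\bfh^{j}(k\mid\alpha)\to 0$, so $\Pr(\tilde X^{j}\neq X^{j})$ need not vanish and the entropy telescoping---which must absorb a factor $\Pr(\tilde X^{j}\neq X^{j})\log|\support(X^{j}_{k})|$---cannot be closed. You correctly flagged this as the crux, but the proposed mechanism does not cross it.

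The paper avoids both issues by working additively in entropy rather than through error probabilities. From a quasi-uniform approximation $U^{\ell}$ with $c_{\ell}H(U^{\ell}_{\alpha})\to\bfh(\alpha)$, Lemma~\ref{lemma:quw} supplies for each $(k,\alpha)\in\set{D}(\bfh)$ an auxiliary $W^{\ell}_{k,\alpha}$ with $H(W^{\ell}_{k,\alpha})=H(U^{\ell}_{k}\mid U^{\ell}_{\alpha})$ and $H(U^{\ell}_{k}\mid U^{\ell}_{\alpha},W^{\ell}_{k,\alpha})=0$; one then sets $V^{\ell}_{i}\defined(U^{\ell}_{i},W^{\ell})$ for \emph{every} $i$, where $W^{\ell}$ collects all the $W^{\ell}_{k,\alpha}$. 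Appending the same block to every coordinate enforces all dependencies simultaneously, and the perturbation is bounded directly at the scaled level by $c_{\ell}H(W^{\ell})\le\sum_{(k,\alpha)}c_{\ell}H(U^{\ell}_{k}\mid U^{\ell}_{\alpha})\to 0$, with no alphabet size appearing. One then applies Proposition~\ref{prop:3.supporta} to the entropy function of $V^{\ell}$ and diagonalises, exactly as you propose.
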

\begin{proof}
  By \cite{Chan.Yeung02relation}, there exists a sequence of
  quasi-uniform random variables $\{U^{\ell}_{i}, i\in\N \}$ and
  positive numbers $c_{\ell}$ such for all $\alpha \subseteq \N$,
  \begin{align} 
    \lim_{\ell \to\infty} c_\ell H\left( U^{\ell}_{\alpha}\right) = h\left(\alpha\right). \label{eq:24}
  \end{align}
  The challenge however is that \eqref{eq:13b} may not hold if $h$ is
  not weakly entropic (we only know that $h$ is the limit of a sequence
  of weakly entropic functions).  In the following, we will show how to modify
  the  $\{U^{\ell}_{i}, i\in\N \}$ such that \eqref{eq:13b} indeed holds.   
  First, notice that  $\{U^{\ell}_{i}, i\in\N  \} $ is quasi-uniform. 
  Hence, for any  $k\in\N$ and $\alpha\subseteq\N$, 
  $\{U^{\ell}_{k}, U^{\ell}_{\alpha} \}$ is quasi-uniform. 
  By Lemma \ref{lemma:quw}, one can construct a random variable $W_{k,\alpha}^{\ell}$ such that  
  \begin{align}
    H\left(W_{k,\alpha}^{\ell}\right) & = H (U_{k}^{\ell} \mid U_{\alpha}^{\ell}), \label{g.eq.25}\\
    H\left(U_{k}^{\ell} \mid U_{\alpha}^{\ell}, W_{k,\alpha}^{\ell}\right) & = 0.\label{g.eq.26}
  \end{align}
  Let 
  \begin{align*}
W_{\Delta}^{\ell} &\defined \{W_{k,\alpha}^{{\ell}}, (k,\alpha) \in
  \Delta\} \\\intertext{where}
    \Delta &\define \{ (k,\alpha) : {\bfh}( {k} \mid {\alpha})=0 ,
    k\in \N , \alpha \subseteq \N \} 
  \end{align*}
 Then
  \begin{align*}
    0 & \le \lim_{\ell \to \infty} c_{\ell}H\left( W_{\Delta}^{\ell} \right) \\
    & \le \lim_{\ell \to \infty} c_{\ell}  \sum_{(k,\alpha) \in \Delta} H\left(W^{\ell}_{k,\alpha}\right) \\
    & =  \sum_{(k,\alpha) \in \Delta}  \lim_{\ell \to \infty} c_{\ell} H (U_{k}^{\ell} \mid  U_{\alpha}^{\ell})  \\
    & = \sum_{(k,\alpha) \in \Delta}  \lim_{\ell \to \infty} c_{\ell}
    ( H (U_{k}^{\ell}, U_{\alpha}^{\ell}) - H ( U_{\alpha}^{\ell})  ) \\ 
    & \nequal{(a)} \sum_{(k,\alpha) \in \Delta} (h\left(k,\alpha\right) - h\left(\alpha\right)) \\
    & =0
  \end{align*}
  where ($a$) follows from \eqref{eq:24}. Consequently,
  \begin{equation}
  \lim_{\ell \to \infty} c_{\ell}H\left( W_{\Delta}^{\ell} \right) =0.\label{eq:WDeltalimit}
\end{equation}
We now construct our new set of random variables, $\{{V}^{\ell}_{i},
i\in\N \}$ by defining
\[
{V}^{\ell}_{i} \: \defined \: (U^{\ell}_{i}, W_{\Delta}^{\ell} ), \quad \forall  i\in \N.
\] 
It is obvious that $H ( {V}_{k}^{\ell} \mid {V}_{\alpha}^{\ell})  =  0$ for all $(k,\alpha) \in \Delta$. 
Let $f^{\ell}$ be the entropy function of $V^\ell_\N$.
 Then by~\eqref{eq:24}, for any $\beta \subseteq \N$,   
\begin{align}
  h\left(\beta\right) & = \lim_{\ell\to\infty} c_{\ell} H\left(U_{\beta}^{\ell}\right)  \label{eq:29}\\
  & \le \lim_{\ell\to\infty} c_{\ell} H\left(U_{\beta}^{\ell},
  W_{\Delta}^{\ell}\right)  \\ 
  & \le \lim_{\ell\to\infty} c_{\ell} H\left(U_{\beta}^{\ell} \right) +
  \lim_{\ell\to\infty} c_{\ell} H\left(W_{\Delta}^{\ell} \right)  \\  
  & \stackrel{(b)}{=} \lim_{\ell\to\infty} c_{\ell} H\left(U_{\beta}^{\ell}\right) \\
  & = h\left(\beta\right). \label{eq:34}
\end{align}
where (b) is by~\eqref{eq:WDeltalimit}.
Consequently, 
 \[
 \lim_{\ell\to\infty} c_{\ell} f^{\ell }(\beta) = \lim_{\ell\to\infty} c_{\ell} H\left(U_{\beta}^{\ell},
  W_{\Delta}^{\ell}\right) = h(\beta), \: \forall \beta\subseteq \N.
 \]
   Since $f^{\ell}$ is weakly
entropic and $f^{\ell}(k\mid\alpha) = 0$ for all $(k,\alpha) \in
\Delta$, we can once again use Proposition \ref{prop:3.supporta} to
construct a sequence of weakly quasi-uniform functions $g^{j}$ such
that $\lim_{j\to\infty } g^{j} = h$ and $g^{j}(k\mid\alpha) = 0$ for all
$(k,\alpha) \in \Delta$. 
\end{proof}

\subsection{Proof for Theorem \ref{thm:arbitrarymain}}\label{sec:Proof}
\def\binparticleEmptyFn{\kappa} 

The first claim of Theorem \ref{thm:arbitrarymain} is that the \vach\
and \zach\ regions are equivalent when sources are colocated.  For any
network coding problem $\problem = (\graph,\multicastRequirement)$, it
is clear that if $(\lambda, \omega) \in {\chi}(\problem)$ is
\zach, then it is also \vach\ and hence must satisfy the
outer bound \eqref{eq:ourbd} in Corollary \ref{cor:ourbd}.  Thus, to
prove Theorem \ref{thm:arbitrarymain}, it suffices to show that for
colocated sources, the rate-capacity tuple $\proj_{\problem}[h] $ is
\zach\ for all $h \in \bar\Gamma^{*} \cap \set{C}_{\indep} \cap
\set{C}_{\net} \cap \set{C}_{\de}$.

Our proof technique is similar to that used
in~\cite{Yan2007The-Capacity}. However, instead of constructing
network codes from strongly typical sequences, we use quasi-uniform
random variables.  Codes constructed from typical sequences admit a
small (but vanishing) error. However, as we shall see, codes
constructed from quasi-uniform random variables can be carefully
designed to ensure zero decoding error probability.

Consider a network coding problem
$\problem=(\graph,\multicastRequirement)$ where all sources are
colocated.  Suppose
\begin{equation*}
h \in \bar\Gamma^{*} \cap \set{C}_{\indep} \cap   \set{C}_{\de} \cap \set{C}_{\net}.
\end{equation*}
Since $h$ is almost entropic, Proposition \ref{prop:3.support} implies
the existence of a sequence of quasi-uniform random variables
\begin{align}\label{eq:37}
  \{ U^{n}_{f}, f\in\sessions \cup \edges \}
\end{align}
 and positive numbers $c_n$ such that
\begin{align}
  \lim_{n\to\infty} c_{n}H\left(U^{n}_{\alpha} \right) & = h\left(\alpha\right), \quad \forall  \alpha \subseteq \sessions \cup \edges  \label{eq:3.35}  \\
  H \left(U^{n}_{e}\mid U^{n}_{\incoming(e)}  \right)   & \nequal{(a)} 0 , \quad \forall e\in\edges \label{eq:15} \\
  H \left(U^{n}_{s}\mid U^{n}_{\incoming(u)} \right)  & \nequal{(b)} 0, \quad \forall  s\in\sessions, u\in\destinationLocation(s) \label{eq:16}
\end{align} 
where $(a)$ is due to $h\in\set{C}_\net$ and $(b)$ is due to $h\in
\set{C}_{\de}$.  Furthermore, by Lemma~\ref{lem:qusum} the sum of
two quasi-uniform rank functions is quasi-uniform. Hence,  we can assume
without loss of generality that
\begin{align}\label{eq:40}
  \lim_{n\to\infty} c_n = 0
\end{align}
and $H(U_{1}^{n})$ grows unbounded. This assumption \eqref{eq:40} will be used in the latter part when we construct a zero-error network code.
 
In the following, for each $n$, we will construct a zero-error network
code $\{Y^{n}_f, f\in {\sessions\cup\edges}\}$ from each set of
quasi-uniform random variables $\{ U^{n}_{f}, f\in\sessions \cup
\edges \}$ such that
\begin{align}
\lim_{n\to\infty} c_{n } H\left(Y^{n}_{s}\right)   & = h(s) \label{eq:17}\\
\lim_{n\to\infty} c_{n}H\left(Y^{n}_{e}\right)& \le h(e) \label{eq:18}
\end{align}
and consequently, $\proj_{\problem}[h]$  is \zach\ and the outer bound is tight.

\subsubsection*{Code construction}

For simplicity of notation, we will drop the superscript $n$ in
\eqref{eq:37} and directly denote the set of quasi-uniform random
variables by
\[
\{U_f, f\in {\sessions\cup\edges}\}.
\]

Suppose first that the $U_{s}, s\in\sessions$ are mutually
independent. The $U_{s}$ are quasi-uniform and hence uniformly
distributed over their support. Thus \eqref{eq:15} holds and Lemma
\ref{lemma:networkcodedef} implies that $\{U_f, f\in
{\sessions\cup\edges}\}$ in fact defines a network code. Furthermore,
by \eqref{eq:16}, the decoding error probability is zero, and Theorem
\ref{thm:arbitrarymain} is proved for this special case of independent
sources.

Unfortunately, $\{ U_{s}, s\in\sessions \}$ need not be mutually
independent in general. To address this problem, we will modify these
variables to satisfy the independency constraint. This is when we require all sources to be colocated.

Since the network $\graph$ is acyclic, repeated application of
\eqref{eq:15} can be used to prove  that
\[
H\left(U_{f}\mid U_{\sessions}\right) = 0, \:\forall  f\in \sessions\cup\edges.
\]
 Hence, for any $e\in\edges$, there exists functions $G_{e}$ such that
 \[
U_{e} = G_{e}(U_\sessions).  
 \]
Similarly, for any $e\in\edges$, $s\in\sessions$ and $v\in \destinationLocation(s)$,
there exists functions $g_{e}$ and $g_{s,v}$ such that for all
$u_\sessions \in \support(U_\sessions)$,  
\begin{align}
G_{e}(u_\sessions)  & \nequal{(a)} g_{e}(G_{f}(u_\sessions), f \in \incoming(e)) , \quad \forall e\in\edges  \label{eq:3.40} \\
U_{s} & \nequal{(b)} g_{s,v} (G_{f}(u_\sessions), f \in \incoming(v) ), \nonumber \\ 
& \hspace{2.8cm} \quad \forall s\in\sessions, \: v \in \destinationLocation(s). \label{eq:3.41}
\end{align}
where $(a)$ follows from \eqref{eq:15} and $(b)$ from \eqref{eq:16}.

\begin{df}[Partition]
  Let $\set{A}=\{1,2,\dots,|\set{A}|\}$ be an index set.  A
  \emph{partition} of a set $\set{X}$ into $|\set{A}|$ partitions is a
  mapping $$\Xi:\set{A}\mapsto 2^{\set{X}}$$ where
  $\Xi(i)\subseteq \set{X}$ is the set of elements in partition
  $i\in\set{A}$.
If the $\Xi(i)$ are disjoint then the partition $\Xi$ is called
  \emph{disjoint}.
\end{df}
\begin{df}[Regular Partition Set]\label{df:regularpartition}
  Let $U_s,s\in\sessions$ be  random variables with supports $\support(U_s)$. For
  $s\in\sessions$, define the index sets $ \set{A}_{s} \define \{1,
  \ldots, |\set{A}_{s}| \}$, where $|\set{A}_{s}| \le
  |\support(U_{s})|$, and let
\begin{equation*}
\Xi_s: \set{A}_s \mapsto 2^{\support(U_s)}
\end{equation*}
be a disjoint partition of $\support(U_s)$ into $|\set{A}_{s}|$
subsets. We call the set of partitions $\{ \Xi_{s}, s\in\sessions\}$ a
\emph{regular partition set} for $\{U_{s},s\in\sessions\}$ if and only if for all
$b_\sessions\in\set{A}_{\sessions}$
  \begin{align}\label{eq:47}
  \support(U_{\sessions}) \cap \prod_{s\in \set{S}} \Xi_{s}(b_{s})  \neq \emptyset.
  \end{align}

\end{df}
Note that~\eqref{eq:47} is a non-trivial condition, since in general
$\support(U_\sessions) \neq \prod_s \support(U_s)$ when the $U_s$ are
not necessarily independent.

We will now construct a zero-error network code $\{Y_f, f\in\sessions\cup\edges\}$ from a regular
partition set.  Let $\{ \Xi_{s}, s\in\sessions\}$ be a regular
partition set according to Definition~\ref{df:regularpartition}.  By
\eqref{eq:47}, for each $s\in\sessions$, there exists mappings

\begin{align}
  T_{s}: \prod_{i\in\set{S}} \set{A}_{i} \mapsto  \support(U_{s} ),
\end{align}
such that for $s\in\sessions$ and $b_\sessions\in\prod_{i\in\set{S}}
\set{A}_{i}$,
\begin{align*}
  \left(T_1(b_\sessions), T_2(b_\sessions), \dots, T_{|\sessions|}(b_\sessions)\right) &\in \support(U_\sessions)\quad\text{and} \\
  \left(T_1(b_\sessions), T_2(b_\sessions), \dots,
    T_{|\sessions|}(b_\sessions)\right) &\in
\Xi(b_1)\times\Xi(b_2)\times\dots\times \Xi(b_{|\sessions|}).
\end{align*}
We can write this more concisely as
\begin{align}
  (T_{s}(b_{\sessions}), s\in\sessions ) \in \support(U_{\sessions} )
  \cap \prod_{i\in \set{S}} \Xi_{i}(b_{i}).\label{eq:49}  
\end{align}
By \eqref{eq:49}, 
\begin{align}\label{eq:3.regular}
  T_{s}(b_{\sessions})  \in  \Xi_{s}(b_{s}).
\end{align}
And as $\Xi_{s}$ is a disjoint partition,  $b_{s}$ can be uniquely determined from $T_{s}(b_{\sessions}) $.

Now let $\{Y_{s},s\in\sessions\}$ be a set of mutually independent
random variables such that for each $s\in\sessions$, $Y_{s}$ is
uniformly distributed over $\set{A}_{s}$. Also, for each
$s\in\sessions$, define auxiliary random variables $Z_{s}$ such that
\begin{align}\label{eq:g51}
  Z_{s} = T_{s}(Y_\sessions).
\end{align}
By \eqref{eq:49} and \eqref{eq:3.regular}, it is easy to see that 
\begin{align}
\support(Z_\sessions) &\subseteq \support(U_\sessions), \\
H\left(Y_{s} \mid Z_{s}\right) &= 0, \quad s\in\sessions \label{eq:a} .
\end{align}
Further define
\begin{align}\label{eq:g51a}
  Z_{e} \defined G_{e}(Z_\sessions), e\in\edges.
\end{align}
It is now easy to see that\footnote{Recalling our convention to denote
set union by juxtaposition.} 
\begin{align}\label{eq:g:47}
  \support(Z_{\sessions\edges}) \subseteq  \support(U_{\sessions\edges}).
\end{align}
Following from \eqref{eq:g:47}, we have that
\begin{align}\label{eq:52g}
  H\left(Z_{k} \mid Z_\alpha \right) = 0
\end{align}
whenever $H\left(U_{k} \mid U_\alpha \right) = 0$ for some $k \in
\sessions\cup\edges$ and $\alpha \subseteq \sessions\cup\edges$.

Let $Y_{e} = Z_{e}$ for all $e\in\edges$. If $\incoming(e) \cap \sessions = \emptyset$,  then 
\begin{align*}
  0 & \nequal{(i)} H\left( U_{e} \mid U_{\incoming(e)}\right) \\
  & \nequal{(ii)}  H\left( Z_{e} \mid Z_{\incoming(e)}\right) \\
  & \nequal{(iii)} H\left( Y_{e} \mid Y_{\incoming(e)}\right)
\end{align*}
where ($i$)  follows from \eqref{eq:15}, ($ii$) from \eqref{eq:52g} and ($iii$) from
the fact that $\incoming(e) \subseteq \edges$.
Now, suppose $\incoming(e) \cap \sessions \not\neq \emptyset$. As all
sources are colocated in $\problem$, then $\incoming(e) \cap
\sessions = \sessions$. In this case, application of \eqref{eq:g51}
and \eqref{eq:g51a} yields
\begin{equation*}
0 \le H(Y_{e} \mid Y_\incoming(e) ) \le H\left(Y_{e} \mid Y_\sessions\right) = H\left(Z_{e} \mid Y_\sessions\right) = 0.
\end{equation*}
Finally, for any $s\in\sessions$ and $u\in \destinationLocation(s)$, 
\begin{align*}
  0 &= H\left( U_{s} \mid U_{\incoming(u)} \right) \\
    &= H\left( Z_{s} \mid Z_{\incoming(u)} \right) \\
    &= H\left( Z_{s} \mid Y_{\incoming(u)} \right) \\
    &\nequal{(a)} H\left( Y_{s} \mid Y_{\incoming(u)} \right) 
\end{align*}
 where $(a)$ follows from \eqref{eq:a}. Hence $\{ Y_{\sessions\edges} \}$
 defines a zero-error network code for $\problem$. Furthermore, it is easy to see that 
\begin{align}
  \log \left|\support(Y_{s}) \right| & = \log \left|\set{A}_{s}\right| \\
  \log \left|\support(Y_{e}) \right| & \le  \log \left|\support( {U}_{e})\right| = H\left(U_{e}\right).
\end{align}
Hence, $(\eta,\zeta)$ is \zach\ where 
\begin{align}
  \eta(s) & = \log \left|\set{A}_{s}\right|, \quad  \forall s\in\sessions,  \\
  \zeta(e) & = \log \left|U_{e}\right|, \quad \forall  e\in\edges. 
\end{align}

The final ingredient of the proof is the following proposition which
will be proved in Appendix \ref{appendix:A}.
\begin{prop}\label{prop:regular}
  Let $\{\seq U1S \}$ be a set of quasi-uniform random variables.  If
  $H\left(U_{1}\right)$ is sufficiently large, then there exists at least one
  regular partition set $\{\Xi_{s}, s\in\sessions\}$ for
  $\{\support(\set{U}_{s}), s\in\sessions\}$ where
  \[
  \set{A}_{s} \define \left\{1, \ldots,  \frac{2^{H\left(U_{s}\mid  \seq U1{s-1}\right)}}{H\left(\seq U1s\right)^{2}}
  \right\}.
  \]
\end{prop}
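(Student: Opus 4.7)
The plan is to establish Proposition~\ref{prop:regular} by a probabilistic construction: for each $s\in\sessions$, form $\Xi_s$ by assigning each element of $\support(U_s)$ independently and uniformly at random to one of $|\set{A}_s|$ blocks, and then show that with positive probability every cell $\prod_s\Xi_s(b_s)$ contains at least one element of $\support(U_\sessions)$. Since $\{U_1,\dots,U_S\}$ is quasi-uniform, $|\support(U_\alpha)|=2^{H(U_\alpha)}$ for every $\alpha\subseteq\sessions$, and the chain rule yields
\begin{equation*}
\prod_{s\in\sessions}|\set{A}_s| \;=\; \frac{2^{H(U_\sessions)}}{\prod_{s} H(U_1,\dots,U_s)^2}.
\end{equation*}

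The first step is to compute, for any fixed cell $b_\sessions$, the expected size $\mathbb{E}[Z_{b_\sessions}]$ of $\support(U_\sessions)\cap\prod_s\Xi_s(b_s)$; by linearity this equals $|\support(U_\sessions)|/\prod_s|\set{A}_s|=\prod_s H(U_1,\dots,U_s)^2$, which is at least $H(U_1)^{2|\sessions|}$ and therefore grows unboundedly with $H(U_1)$. The second step is to bound the variance of $Z_{b_\sessions}$ by classifying ordered pairs $(u,v)\in\support(U_\sessions)^2$ according to their coincidence set $T=\{s:u_s=v_s\}$. Quasi-uniformity bounds the number of pairs with coincidence set $T$ by $2^{H(U_\sessions)+H(U_{\sessions\setminus T}\mid U_T)}$, and each such pair contributes $\prod_{s\in T}|\set{A}_s|^{-1}\prod_{s\notin T}|\set{A}_s|^{-2}$ to $\mathbb{E}[Z_{b_\sessions}^2]$. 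The $T=\emptyset$ contribution matches $\mathbb{E}[Z_{b_\sessions}]^2$ exactly, while each $T\neq\emptyset$ contribution carries a factor of order $1/H(U_1)^2$ relative to $\mathbb{E}[Z_{b_\sessions}]^2$, giving $\mathrm{Var}(Z_{b_\sessions})=O\bigl(\mathbb{E}[Z_{b_\sessions}]^2/H(U_1)^2\bigr)$.

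The third step is to beat the union bound over the $\prod_s|\set{A}_s|\le 2^{H(U_\sessions)}$ cells. A plain Chebyshev estimate yields only $\Pr[Z_{b_\sessions}=0]=O(1/H(U_1)^2)$, which is too weak on its own, so I would strengthen the tail bound by a Hoeffding/Azuma-type martingale argument: reveal the random labels of elements of $\bigcup_s\support(U_s)$ one at a time, observe that each single label changes $Z_{b_\sessions}$ by at most a bounded amount (because the fibers over each coordinate have balanced sizes by quasi-uniformity), and apply Azuma's inequality to obtain $\Pr[Z_{b_\sessions}=0]\le\exp(-c\,\mathbb{E}[Z_{b_\sessions}])$ for some constant $c>0$. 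The overall failure probability is then at most $2^{H(U_\sessions)}\exp(-c H(U_1)^{2|\sessions|})$, which is strictly less than $1$ once $H(U_1)$ is sufficiently large, completing the existence argument.

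The main obstacle is the concentration step: the indicator variables $\mathbf{1}[u\in\prod_s\Xi_s(b_s)]$ for different $u\in\support(U_\sessions)$ are strongly positively correlated whenever the tuples share source-coordinates, so standard Chernoff bounds for independent sums are inapplicable. The key technical insight making the proof work is that the polynomial factor $H(U_1,\dots,U_s)^{-2}$ built into $|\set{A}_s|$ provides exactly the right slack: it makes $\mathbb{E}[Z_{b_\sessions}]$ a high enough power of $H(U_1)$ that the Azuma tail dominates the exponential union bound over cells, which is precisely why the proposition requires $H(U_1)$ to be sufficiently large.
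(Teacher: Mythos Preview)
Your second-moment calculation is correct, but the concentration step via Azuma fails. When you reveal the label of a single element $v\in\support(U_s)$, the indicator $\mathbf{1}[u\in\prod_t\Xi_t(b_t)]$ can flip for \emph{every} $u\in\support(U_\sessions)$ with $u_s=v$; by quasi-uniformity that fiber has size $2^{H(U_\sessions)-H(U_s)}$. Hence the bounded-difference constants are $c_i=2^{H(U_\sessions)-H(U_s)}$, not $O(1)$, and
\[
\sum_i c_i^2 \;=\; \sum_{s\in\sessions} 2^{H(U_s)}\cdot 2^{2(H(U_\sessions)-H(U_s))}\;=\;\sum_{s\in\sessions} 2^{\,2H(U_\sessions)-H(U_s)},
\]
which is exponentially large while $\mathbb{E}[Z_{b_\sessions}]^2$ is only polynomial in the entropies. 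Azuma therefore gives a trivial bound, and you are left with only the Chebyshev estimate $\Pr[Z_{b_\sessions}=0]=O(1/H(U_1)^2)$, which cannot survive a union bound over $\prod_s|\set{A}_s|\approx 2^{H(U_\sessions)}$ cells.

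The paper avoids the correlation problem entirely by building the partitions \emph{sequentially} rather than jointly. At step $s$ it sets $U=(U_1,\dots,U_{s-1})$, $V=U_s$, partitions $\support(V)$ uniformly at random into blocks of size $m=H(UV)^2\cdot 2^{H(V)-H(V\mid U)}$, and asks only that each block meet the fiber $\{v:(u,v)\in\support(U,V)\}$ for every fixed $u\in\support(U)$. This is a one-dimensional occupancy problem: the fiber has density $2^{-(H(V)-H(V\mid U))}$, each block samples $m$ elements, so a single block misses a single fiber with probability at most $(1-\text{density})^m\le\delta^{H(UV)^2}$ for some $\delta<1$. A union bound over at most $2^{H(U)+H(V)}$ (fiber, block) pairs then gives failure probability $2^{H(U)+H(V)}\delta^{H(UV)^2}\to 0$. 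Iterating over $s=1,\dots,|\sessions|$ yields the regular partition set. The point you are missing is that demanding the stronger fiberwise property at each step decouples the coordinates, so only an elementary tail bound is needed and no multivariate concentration inequality enters.
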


By Proposition \ref{prop:regular} (and \eqref{eq:40}), we can
construct a sequence of zero-error network codes such that
\begin{align*}
  H\left(Y^{n}_{s}\right)  & =  H\left( U_{s}^{n} \mid  \seq {U^{n}}1{s-1}\right) - 2 \log H\left(\seq {U^{n}}1s\right)    \\
  H\left(Y^{n}_{e}\right) & \le H\left(U^{n}_{e}\right). 
\end{align*}
Finally, from \eqref{eq:3.35} and that $h\in\set{C}_{\indep}$, we can prove that  
\begin{align*}
  \lim_{n\to \infty}  c_{n}H\left(Y^{n}_{s}\right) &= h(s), \\
  \lim_{n\to\infty} c_{n}{H\left(U^{n}_{e}\right)} & \le h(e)
\end{align*}
 for all $s\in\sessions$ and $e\in\edges$. Thus, $\proj_{\problem}[h]$ is \zach\ and 
Theorem \ref{thm:arbitrarymain}  follows.


\subsection{Generalisation -- non-colocated sources}\label{sec:general}
Assuming that all sources are colocated, we proved in the previous
subsection that a rate-capacity tuple is \vach\ if and only if it is
indeed \zach, and that the outer bound in Corollary \ref{cor:ourbd} is
indeed tight. 
We conjecture that the same outer bound remains tight even for sources
are not colocated. In this subsection, we will give some arguments to
justify our conjecture, which hinges on whether or not removing a
zero-rate link can change the capacity of a certain modification of
the original network.

Let $\problem = (\graph, \multicastRequirement)$ where $\graph =
(\set{V}, \set{E})$ and $\multicastRequirement = (\sessions,
\sourceLocation, \destinationLocation)$. We make no assumption
that the sources are colocated.  Now consider two variations on
the network coding problem $\problem$. 

\emph{Variation 1 -- addition of a ``super node'':} Let
$\problem^{1}\define (\graph^{\dagger},\multicastRequirement^{1})$,
where the underlying network $\graph^{\dagger}= (\set{V}^{\dagger},
\set{E}^{\dagger} )$ is obtained from $\graph=(\set{V}, \set{E})$ via
inclusion of a ``super node'' $v^{*}$ and links $f_{s},
s\in\sessions$,
\begin{align}
  \set{V}^{\dagger} & = \set{V} \cup \{ v^{*}\}, \\
  \set{E}^{\dagger} & = \set{E} \cup \{ f_{s}, s\in\sessions \},
\end{align}
such that $\tail{f_{s}} = v^{*}$ and $\head{f_{s}} = \sourceLocation
(s)$. Here, the links $f_{s}$ are just like an imaginary source edge.

The connection constraint $\multicastRequirement^{1}$ is $(\sessions,
\sourceLocation^{1}, \destinationLocation)$ where for all
$s\in\sessions$,
\[
O^{1}(s) = \sourceLocation(s) \cup \{  v^{*} \}.
\] 
In other words, source $s$ is available not only at source node
$\sourceLocation(s)$ but also at the super node $v^{*}$.


\emph{Variation 2:}
Let $\problem^{2}\define
(\graph^{\dagger},\multicastRequirement^{2})$, where the underlying
network $\graph^{\dagger}= (\set{V}^{\dagger}, \set{E}^{\dagger} )$ is the same as in $\problem^1$,  but the
connection constraint $\multicastRequirement^{2}=(\sessions,
\sourceLocation^{2}, \destinationLocation^{2})$ is modified as follows:
\begin{align}
  O^{2}(s) & = \{v^{*}\}, \quad \forall s\in\sessions \\
  D^{2}(s) & = D(s) \cup \sourceLocation(s).
\end{align}
Hence, all sources are available only at the super node $v^{*}$.  In
addition, source $s$ is required to be reconstructed not only at the
nodes in $D(s)$ (the sinks in the original multicast problem
$\problem$) but also at the original source nodes $\sourceLocation(s)$
in $\problem$.

Figure \ref{fig:variations} illustrates the differences between the
original problem $\problem$ and its two variations $\problem^1$ and
$\problem^2$.  In this figure, a sink node is denoted by an open
square. Labels beside each sink node indicate the set of sources
required for reconstruction. Sources are indicated by a double circle
with an imaginary link (labeled with the source index) directed from
it to the nodes where that source is available according to the
connection constraint. 

\def\scalefactor{0.8}
\begin{figure}[htbp]
\begin{center}
\subfigure[The original problem $\problem$.]
{\includegraphics*[scale=\scalefactor]{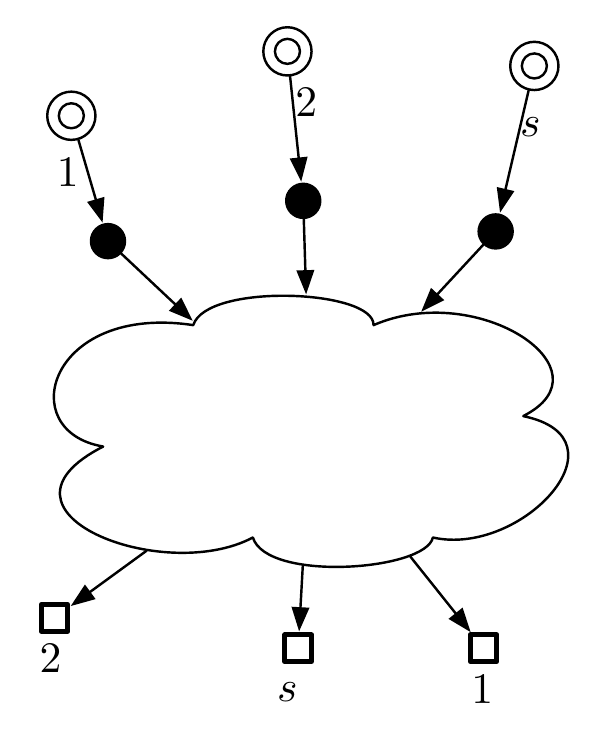}}
\subfigure[The first variation $\problem^{1}$.]
{\includegraphics*[scale=\scalefactor]{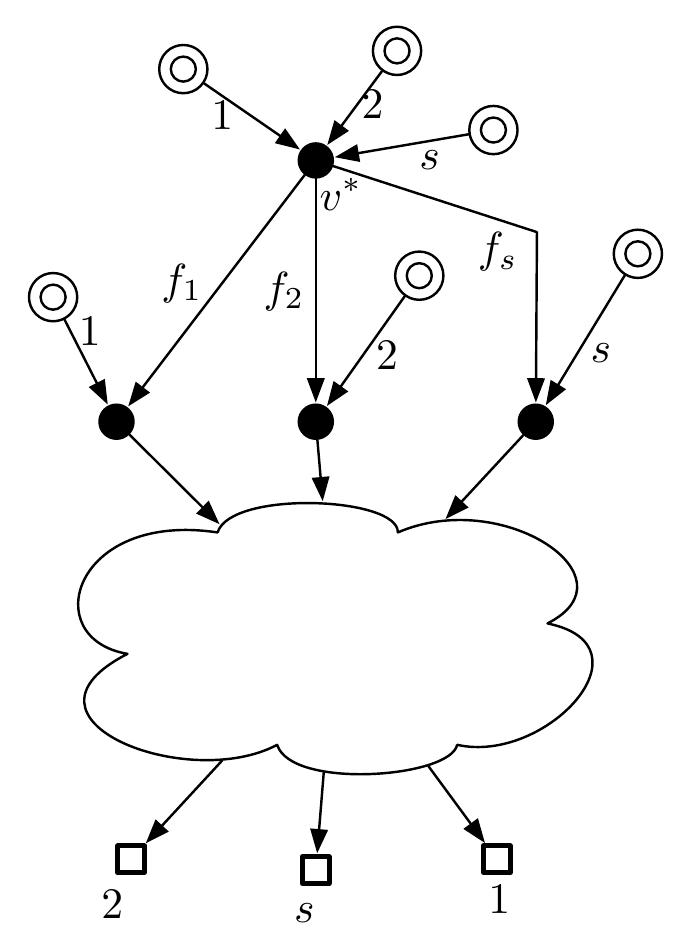}}
\subfigure[The second variation $\problem^{2}$.]
{\includegraphics*[scale=\scalefactor]{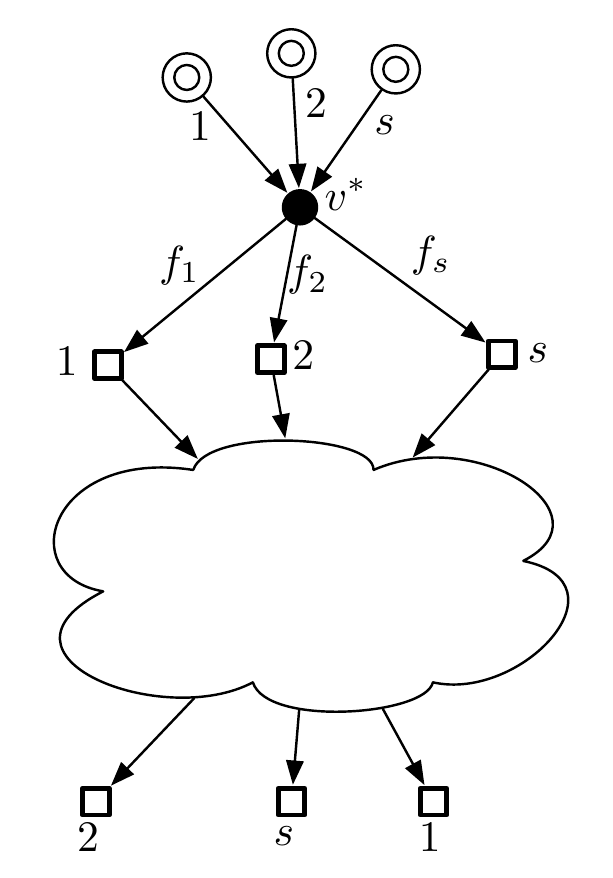}}
\caption{Variations of a multicast problem $\problem$}
\label{fig:variations}
\end{center}
\end{figure}

Given a rate-capacity tuple $(\lambda, \omega) \in {\chi}(\problem) $,
let $T^{1}[\lambda,\omega]$ be a rate capacity tuple in
${\chi}(\problem^{1}) $ (for the network coding problem
$\problem^{1}$) such that
\begin{align}
T^{1}[\lambda,\omega](s) & = \lambda(s), \quad \forall s\in\sessions, 
\\  
T^{1}[\lambda,\omega](e) & = \omega(e), \quad \forall e\in\edges,
\\
T^{1}[\lambda,\omega](f_{s}) & = 0 , \quad \forall s\in\sessions.
\end{align}
In other words,  the source rates and the hyperedge capacities remain the same for
those elements existing in the original network, and the hyperedge
capacities of the new links from the super-node $v^*$ to each of the
original source nodes are all zero.

Similarly, let $T^{2}[\lambda,\omega] $ be respectively the rate
capacity tuple in ${\chi}(\problem^{2}) $ (for the network coding
problem $\problem^{2}$) such that
\begin{align}
 T^{2}[\lambda,\omega](s) & = \lambda(s), \quad \forall s\in\sessions,
\\  
 T^{2}[\lambda,\omega](e)  & = \omega(e), \quad \forall e\in\edges,
\\
T^{2}[\lambda,\omega](f_{s}) & =  \lambda(s), \quad \forall s\in\sessions.
\end{align}

Then, it is straightforward to prove the following:
\begin{enumerate}
\item $T^{1}[\lambda,\omega]$ is \vach\ with respect to
  $\problem^{1}$ if and only if $T^{2}[\lambda,\omega]$ is \vach\
  with respect to $\problem^{2}$.
\item If $(\lambda,\omega)$ is \vach\ with respect to $\problem$,
  then $T^{1}[\lambda,\omega]$ and $T^{2}[\lambda,\omega]$ are
  \vach\ with respect to $\problem^{1}$ and $\problem^{2}$
  respectively.
\end{enumerate} 

\begin{conjecture}\label{conj:two}
  A rate-capacity tuple $(\lambda,\omega)$ is \vach\ with respect to a
  network coding $\problem$ if $T^{1}[\lambda,\omega]$ is \vach\ with
  respect to the modified problem $\problem^{1}$.
\end{conjecture}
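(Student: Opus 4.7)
The plan is to prove the forward direction: starting from a sequence of network codes on $\problem^{1}$ with vanishing decoding error and asymptotically fit for $T^{1}[\lambda,\omega]$, construct a sequence of network codes on $\problem$ that is asymptotically fit for $(\lambda,\omega)$ with vanishing error. The structural feature I would exploit is that the added hyperedges $f_{s}$ from $v^{*}$ to $O(s)$ carry capacity $\omega(f_{s})=0$, so the messages $Y^{n}_{f_{s}}$ satisfy $c_{n}\log|\support(Y^{n}_{f_{s}})|\to 0$. Morally, these transmissions should be ``removable'' and the restriction of the code to the original network $\graph$ should itself already be a near-valid $\problem$-code.

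First, I would decompose the overall error probability on $\problem^{1}$ as a mixture over the super-node's outputs, $P_{e}^{n} = \sum_{m_{\sessions}}\Pr(Y^{n}_{f_{\sessions}}=m_{\sessions})\, P_{e}^{n}(m_{\sessions})$, where $P_{e}^{n}(m_{\sessions})$ is the conditional decoding error given $Y^{n}_{f_{\sessions}}=m_{\sessions}$. Since sources are mutually independent under $\problem^{1}$, the identity
\[
\mathbb{E}_{m_{\sessions}}\,D\!\left(P_{Y^{n}_{\sessions}\mid m_{\sessions}}\,\big\|\,P_{Y^{n}_{\sessions}}\right) \;=\; I\!\left(Y^{n}_{\sessions}\,;\,Y^{n}_{f_{\sessions}}\right)
\;\le\; H\!\left(Y^{n}_{f_{\sessions}}\right),
\]
combined with Pinsker's inequality, suggests that for a ``typical'' $m_{\sessions}$ the conditional source law is close in total variation to the original product law. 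Given such a typical $m_{\sessions}$, I would hard-code $Y^{n}_{f_{s}}=m_{s}$ at every node in $O(s)$ in the graph $\graph$, producing a valid $\problem$-code whose decoding error under the true (independent, uniform) source distribution is bounded by the conditional error plus the total variation deviation. Picking $m_{\sessions}$ good for both quantities simultaneously (by a union bound) would then deliver the desired sequence of vanishing-error $\problem$-codes.

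The main obstacle is precisely the gap this sketch glosses over: because $c_{n}\to 0$, the fitness bound $c_{n}\,H(Y^{n}_{f_{\sessions}})\to 0$ does \emph{not} force the absolute mutual information $I(Y^{n}_{\sessions};Y^{n}_{f_{\sessions}})$ to vanish, so the averaged KL divergence need not tend to zero and the Pinsker step collapses. In fact, Conjecture~\ref{conj:two} is essentially an instance of the well-studied \emph{edge removal problem} in network coding --- whether removing a vanishing-rate link can alter the capacity region --- which is open in substantial generality. A successful proof will likely have to either exploit the restrictive structure of the added edges (all emanating from a dedicated super-node and terminating at original source locations, a kind of broadcast side-information that need not couple distinct sources) to sharpen the mutual-information bound, or introduce auxiliary randomization at each $O(s)$ that simulates $Y^{n}_{f_{s}}$ using only locally available sources while matching the joint statistics the code on $\problem^{1}$ induces.
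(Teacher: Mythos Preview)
The paper does not prove this statement; it is explicitly presented as a conjecture and left open. The paper's surrounding discussion makes exactly the point you arrive at in your final paragraph: the zero-capacity links $f_{s}$ can still carry a finite (non-scaling) amount of information because the fitness condition only requires $c_{n}\log|\support(Y^{n}_{f_{s}})|\to 0$, and since $c_{n}\to 0$ this does not force $H(Y^{n}_{f_{\sessions}})$ (or the mutual information $I(Y^{n}_{\sessions};Y^{n}_{f_{\sessions}})$) to vanish in absolute terms. The paper notes that zero-capacity links are known to change capacity regions in some multi-terminal settings with correlated or non-ergodic sources, and frames the conjecture as the assertion that this cannot happen when sources are independent and ergodic.

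Your sketch is a reasonable first attack, and you correctly diagnose where it breaks: the Pinsker-based total-variation control requires the \emph{unnormalised} KL divergence to vanish, which the fitness assumption does not provide. Your identification of the problem with the edge-removal question in network coding is apt; this is precisely why the paper does not claim a proof. So there is no discrepancy to flag between your attempt and the paper's argument --- the paper has none --- and your self-identified gap is the genuine obstruction.
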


The main difference between $\problem$ and $\problem^{1}$ are the
zero-capacity links $f_{s}$ for $s\in\sessions$. At first sight it
might be tempting to think that zero-capacity links cannot change the
capacity region, and as a result that the conjecture is trivially
true. However proving the conjecture is not straightforward. A
zero-capacity link does not mean that absolutely nothing can be
transmitted on the link. In fact according to the definitions, a
finite amount of information (that does not scale with $n$) could be
transmitted along the link. Thus the links $f_{s} $ can in fact be
used (as long as their capacities vanish asymptotically) in any
sequence of network codes achieving $T^{1}[\lambda,\omega]$. In fact
there exists known examples where zero-capacity links can indeed modify
the capacity of certain multi-terminal problems, in particular when
there are correlated sources, or non-ergodic sources.

If Conjecture~\ref{conj:two} does not hold, it is equivalent to saying that a
link with ``vanishing capacity'' can indeed change the set of
achievable tuples, even when all sources are independent and ergodic.

\begin{thm}
  Suppose Conjecture ~\ref{conj:two} holds. Then the outer bound in
  Corollary \ref{cor:ourbd} is tight even when sources are not
  colocated.
\end{thm}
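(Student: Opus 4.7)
The plan is to reduce the non-colocated case to $\problem^{2}$, whose sources are all colocated at the super node $v^{*}$, so that Theorem~\ref{thm:arbitrarymain} applies directly; Conjecture~\ref{conj:two} will then transport the resulting achievability back to $\problem$.

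First I would fix any $(\lambda,\omega)\in\major(\proj_{\problem}[\bar\Gamma^*\cap\set{C}_{\indep}\cap\set{C}_{\net}\cap\set{C}_{\de}])$. Since achievability is preserved under $\major$, it suffices to treat tuples of the form $(\lambda,\omega)=\proj_{\problem}[h]$ for a single $h\in\bar\Gamma^*\cap\set{C}_{\indep}(\problem)\cap\set{C}_{\net}(\problem)\cap\set{C}_{\de}(\problem)$. I would \emph{extend} $h$ to a rank function $\tilde h$ on $\sessions\cup\edges^{\dagger}$ by identifying the new edge $f_{s}$ with source $s$: letting $\pi(f_{s})=s$ and $\pi$ be the identity on $\sessions\cup\edges$, define $\tilde h(\alpha)\defined h(\pi(\alpha))$. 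Intuitively, the new edge $f_{s}$ merely carries a copy of $Y_{s}$. If $h$ is entropic, realised by $\{Y_{f}\}$, then assigning $\tilde Y_{f_{s}}\defined Y_{s}$ realises $\tilde h$, so $\tilde h\in\Gamma^*$; approximating an almost entropic $h$ by entropic functions and passing to the limit shows $\tilde h\in\bar\Gamma^{*}(\problem^{2})$ in general.

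Next I would verify that $\tilde h$ satisfies the remaining constraints for $\problem^{2}$ and projects to $T^{2}[\lambda,\omega]$. Independence of the sources is inherited from $h\in\set{C}_{\indep}(\problem)$. For $\set{C}_{\net}(\problem^{2})$: on the new edge, $\incoming_{\problem^{2}}(f_{s})=\sessions$ (the only edges arriving at $v^{*}$ are the imaginary source edges, because in $\problem^{2}$ all sources live at $v^{*}$), so $\tilde h(f_{s}\mid\sessions)=h(s\mid\sessions)=0$; on an original edge $e\in\edges$, applying $\pi$ to $\incoming_{\problem^{2}}(e)$ recovers precisely $\incoming_{\problem}(e)$, so $\tilde h(e\mid\incoming_{\problem^{2}}(e))=h(e\mid\incoming_{\problem}(e))=0$. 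The decoding constraint $\set{C}_{\de}(\problem^{2})$ splits similarly: for $u\in D(s)$, the $\pi$-image of $\incoming_{\problem^{2}}(u)$ equals $\incoming_{\problem}(u)$ and the original decodability carries over; for $u\in O(s)$, the new edge $f_{s}$ is incoming to $u$, so $\tilde h(s\mid f_{s})=h(s\mid s)=0$. Finally, by construction $\tilde h(f_{s})=h(s)=\lambda(s)$, hence $\proj_{\problem^{2}}[\tilde h]=T^{2}[\lambda,\omega]$.

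Now Theorem~\ref{thm:arbitrarymain}, applied to $\problem^{2}$ (whose sources are colocated at $v^{*}$), yields that $T^{2}[\lambda,\omega]$ is \zach, hence \vach, for $\problem^{2}$. By the equivalence recorded after Variation~2, this implies $T^{1}[\lambda,\omega]$ is \vach\ for $\problem^{1}$. Invoking Conjecture~\ref{conj:two} then gives that $(\lambda,\omega)$ is \vach\ for $\problem$, and closing under $\major$ completes the proof. The main subtlety in this plan is the bookkeeping of $\incoming(\cdot)$ across the three topologies $\problem$, $\problem^{1}$, $\problem^{2}$ and confirming that the extension $\tilde h$ remains in $\bar\Gamma^{*}$ when $h$ is only almost entropic; both points are routine once the identification $\pi(f_{s})=s$ is in place.
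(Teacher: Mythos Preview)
Your proposal is correct and follows essentially the same approach as the paper. Your extension $\tilde h(\alpha)=h(\pi(\alpha))$ via the identification $\pi(f_{s})=s$ is exactly the paper's rank function $g(\beta)=h(\alpha_{1}\cup\alpha_{2})$ with $\alpha_{1}=\beta\setminus\{f_{s}:s\in\sessions\}$ and $\alpha_{2}=\{s:f_{s}\in\beta\}$, and the chain of implications (membership in $\bar\Gamma^{*}(\problem^{2})\cap\set{C}_{\indep}\cap\set{C}_{\net}\cap\set{C}_{\de}$, then Theorem~\ref{thm:arbitrarymain}, then the $\problem^{2}\Leftrightarrow\problem^{1}$ equivalence, then Conjecture~\ref{conj:two}) is identical; you simply spell out the $\incoming(\cdot)$ bookkeeping that the paper leaves as ``straightforward to prove.''
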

\begin{proof}
  Let $h \in \bar\Gamma^{*}(\problem)$. Define a rank function 
  \[
  g \in\set{H}[\sessions\cup\edges^{\dagger}]
  \]
  such that for any $\beta \subseteq \sessions\cup\edges^{\dagger}$
\begin{align}
g(\beta) = h\left(\alpha_{1} \cup \alpha_{2} \right) 
\end{align}
where 
\begin{align}
\alpha_{1} & = \beta \setminus \{ f_{s}, s\in\sessions \}   \\
\alpha_{2} & =\{ s\in\sessions:  f_{s} \in \beta  \}.
\end{align}
It is straightforward to prove that if 
\[
h\in \bar \Gamma^{*}(\problem)  \cap \set{C}_{\indep}(\problem) \cap \set{C}_{\net}(\problem)  \cap \set{C}_{\de}(\problem),
\] 
then
\[
g\in \bar \Gamma^{*}(\problem^{2})  \cap \set{C}_{\indep}(\problem^{2}) \cap \set{C}_{\net} (\problem^{2}) \cap \set{C}_{\de}(\problem^{2}).
\]
By Theorem \ref{thm:arbitrarymain}, $\proj_{\problem^{2}}[g]$ is
\vach\ with respect to network coding problem $\problem^{2}$.  As
\[
T^{2}[\proj_{\problem}[h ]] = \proj_{\problem^{2}}[g ]
\]
and is achievable with respect to $\problem^{2}$,
$T^{1}[\proj_{\problem}[h ]]$ is achievable with respect to
$\problem^{1}$. By Conjecture~\ref{conj:two}, $\proj_{\problem}[h ]$ is
achievable with respect to $\problem$. Consequently, the outer bound
in Corollary \ref{cor:ourbd} is tight.
\end{proof}

\def\hyperplane{{\cal H}}
  
\def\vA{{\mathbb A}}
\def\vB{{\mathbb B}}
\def\vC{{\mathbb C}}
\def\vU{{\mathbb U}}
\def\vV{{\mathbb V}}
\def\vW{{\mathbb W}}

\section{Linear Network Codes}\label{sec:part2}
In the previous section, the network codes were not subject to any
constraints, other than those required by Definition~\ref{df:nc} and
that decoding error probabilities must be zero or vanishing, according
to Definition~\ref{df:admissible} or
Definition~\ref{df:achievable}. 

For the remainder of the paper, we will consider various subclasses
of network codes which result from imposing different kinds of
additional constraints. 

To begin with, in this section we will study \emph{linear network
  codes}~\cite{Li.Yeung.ea03linear}, which have relatively low
encoding and decoding complexities, making them more attractive for
practical implementation.

\begin{df}[Linear network codes]
  Let
  \begin{align}\label{eq:df:5alnc}
    \{\edgeRV_f: f \: \in \edges \cup \sessions\}
  \end{align}
  be a network code (according to Definition~\ref{df:nc}) for a
  problem $\problem$ on a network $\graph=(\set{V},\set{E})$, with
  local encoding functions
  \[
  \networkCoding \triangleq \{\networkcoding_e :\: e \in \edges \}.
  \]
  The code is called \emph{$q$-linear} (or simply linear) if it
  satisfies the following conditions:
  \begin{enumerate}
  \item For $s\in\sessions$, $Y_{s}$ is a random row vector such that
    each of its entries is selected independently and uniformly over
    $\field(q)$.
\item All the local encoding functions 
are linear.
\end{enumerate}
A network coding problem is said to be subject to a \emph{$q$-linearity
  constraint} if only $q$-linear network codes are allowed.
\end{df} 

Let the row vector $Y_{s}$ have $\lambda_{s}$ elements.  Clearly, for
a linear network code \eqref{eq:df:5alnc}, one can construct a
$\sum_{i\in\sessions} \lambda_{i} \times \lambda_{s}$ matrix $G_{s}$
for any $s\in \sessions$ such that
\begin{align}\label{eq:70}
  Y_{s} = Y \times  G_{s}.
\end{align}
where $Y\defined[Y_1 Y_2 \dots Y_{|\set{S}|}]$ is the length
$\sum_{i\in\sessions} \lambda_{i} $ row vector obtained from the
concatenation of the $Y_i$ and $\times$ is the usual vector-matrix
multiplication.

Similarly, as all local encoding functions are linear, for each
$e\in\edges$, there exists a $\sum_{i\in\sessions} \lambda_{i} \times
\omega_{e}$ matrix $G_{e}$ such that
\begin{align} 
Y_{e} = Y \times  G_{e}.
\end{align}
Hence, the symbol transmitted on link $Y_{e}$ is a length $\omega_{e}$ vector over $\field(q)$.

Following the nomenclature in \cite{Yeung02first}, the matrices
$G_{f}$, $f\in\sessions\cup\edges$ will be called the \emph{global
  encoding kernels}.  They define the linear relation between $Y_{e}$
(the message sent along edge $e$) and $\{Y_{s},s\in\sessions\}$ (the
symbols generated at the sources).  It is easy to prove that
\begin{align}
  |\support(Y_{s})| & = q^{ \lambda_{s} } \label{eq:74} \\
  |\support(Y_{e})| & \le q^{\omega_{e}}. \label{eq:75}
\end{align}
with equality holding in \eqref{eq:75}  when $G_{e}$ has full column rank.  

A sink node $u\in\destinationLocation(s)$ can uniquely decode a source
$Y_{s}$ if and only if it can solve for $Y_{s}$ (but not necessarily
all other $Y_{i}$, $i \in \sessions \setminus s$) from the following
linear system with unknowns $Y$:
\begin{align}\label{eq:systemofeq}
  Y_{e} & =  Y \times  G_e, \quad \forall  e\in\incoming(u).
\end{align} 
It is clear that if $Y_{s}$ cannot be uniquely determined from
\eqref{eq:systemofeq}, then there must be at least $q$ solutions to
\eqref{eq:systemofeq} such that the values of $Y_{s}$ in each solution
are all different.  We can easily show that with maximum likelihood
decoding, the decoding error probability must be at least $1-1/q$.
Consequently, for a network coding problem subject to a $q$-linearity
constraint, a rate-capacity tuple $(\lambda,\omega)$ is \zach\ if and
only if it is \vach. Therefore, we will always assume that linear
codes are zero-error codes.

As in Theorem \ref{thm:arbitrarymain} for general (possibly
non-linear) network codes, a characterisation for the set of \zach\
rate-capacity tuples subject to a $q$-linearity constraint can be
obtained by using entropy functions. However as we shall see, the
entropy functions for linear network codes will be constrained to be
\emph{representable}~\cite{oxley92} in the sense of matroid theory.
\begin{df}\label{df:repfn}
  A rank function $h \in \set{H}[\sessions\cup\edges]$ is called
  \emph{$q$-representable} if there exists vector subspaces
  \[
  \vU_{i}, i\in \sessions\cup\edges
  \]
  over $\field(q)$ such that for all $\alpha\subseteq \sessions\cup\edges$,
  \begin{align}\label{eq:77}
    h(\alpha) = \dim \la \vU_{i} , i\in\alpha \ra ,
  \end{align}
  where $\dim\la \vU_i, i\in\alpha \ra$ denotes the dimension of the smallest vector
  space containing all of the $\vU_i, i\in\alpha$.
\end{df}

In an abstract sense, representable rank functions are similar to
entropy functions, where the entropies of random variables are
replaced by dimensions of vector spaces.  
It is well-known that representable functions are indeed entropy
functions (and hence also
polymatroidal)~\cite{Yeung97framework,Chan2011Recent}. Hence we will
sometimes use the following conventions. For vector spaces $\vU$ and
$\vV$, we will use
\[
H\left(\vV\right), \:H\left(\vU,\vV\right), \text{ and } H\left(\vU \mid  \vV\right)
\]
to respectively denote 
\[
\dim \vV, \:  \dim \la \vU,\vV \ra,   \text{ and } \dim \la \vU,\vV \ra - \dim \vU,
\]
as if $\vU$ and $\vV$ were random variables.  

Similar to Definition \ref{df:classification} we can define weakly-
and almost-representable functions.
\begin{df}[Weakly/almost representable]
  A function $h$ is called
  \begin{itemize}
  \item \emph{weakly $q$-representable} if $c \cdot h$ is
    $q$-representable for some $c > 0$
  \item \emph{almost
      $q$-representable} if it is the limit of a sequence of weakly
    $q$-representable functions.
  \end{itemize}
  We use $\Upsilon^{*}_{q}(\sessions\cup\edges)$ and
  $\bar\Upsilon^{*}_{q}(\sessions\cup\edges)$ to respectively denote
  the sets of $q$-representable and almost $q$-representable rank
  functions in $\set{H}[\sessions\cup\edges]$.
\end{df}

 
\thm[Achievability by linear codes]\label{thm:mainlinear} For any
network coding problem $\problem=(\graph, \multicastRequirement)$
subject to a $q$-linearity constraint, a rate-capacity tuple $(\lambda, \omega)$ is \zach\ if
and only if
\begin{align} 
  (\lambda, \omega) \in \major \left(  \proj_{\problem}
    \left[\bar\Upsilon^{*}_{q}(\problem) \cap
      \set{C}_{\indep}(\problem) \cap \set{C}_{\net}(\problem)  \cap
      \set{C}_{\de}(\problem)  \right]\right) . 
\end{align} 
\endthm 
Theorem \ref{thm:mainlinear} (for linear network codes) is a counterpart to
Theorem \ref{thm:arbitrarymain} (for general network codes codes). However, unlike in
Theorem \ref{thm:arbitrarymain}, Theorem \ref{thm:mainlinear} holds
\emph{even when the sources are not colocated}.

Before we proceed to prove Theorem \ref{thm:mainlinear} (which
involves proving both an if-part and an only-if part), we will
illustrate the main idea by proving the following special case of the
if-part of Theorem \ref{thm:mainlinear}.
\begin{prop}\label{prop:5.1}
Consider a $q$-representable function $h\in \Upsilon^{*}_{q}(\problem)$ such that
\begin{align} \label{eq:78}
  h\in \set{C}_{\indep}(\problem) \cap \set{C}_{\net}(\problem)  \cap \set{C}_{\de}(\problem)
\end{align}
and let $(\lambda,\omega) =\proj_{\problem}[h] $.
Then $(\lambda,\omega)$ is 0-achievable by $q$-linear network codes.
\end{prop}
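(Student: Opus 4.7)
The plan is to translate the subspace representation of $h$ directly into a $q$-linear network code whose induced entropy function is (a scaled copy of) $h$. Let $\{\vU_i : i \in \sessions\cup\edges\}$ be subspaces over $\field(q)$ witnessing that $h$ is $q$-representable, per Definition~\ref{df:repfn}. The three constraints on $h$ translate into geometric statements about these subspaces. First, $h \in \set{C}_{\indep}$ says the source subspaces $\vU_s$ are in direct sum. Second, $h \in \set{C}_{\net}$ forces $\vU_e \subseteq \la \vU_f : f\in\incoming(e)\ra$ for every $e \in \edges$. Third, $h \in \set{C}_{\de}$ forces $\vU_s \subseteq \la \vU_f : f\in\incoming(u)\ra$ for every $s \in \sessions$ and $u \in \destinationLocation(s)$. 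These are precisely the ingredients one needs for local linear encoding and linear zero-error decoding.

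For the construction, take the ambient space to be $\vW \defined \la \vU_s : s \in \sessions \ra$, which has dimension $N = \sum_s h(s)$ by the direct-sum property; by acyclicity of $\graph$ combined with the network constraint, $\vU_i \subseteq \vW$ for every $i \in \sessions \cup \edges$. Choose $N \times h(i)$ matrices $M_i$ whose columns form bases of $\vU_i$, aligned so that $[M_s : s \in \sessions]$ is the $N \times N$ identity. Generate $Y$ uniformly over $\field(q)^N$ and read off sources by $Y_s = Y M_s$; the source vectors then have i.i.d.\ uniform entries over $\field(q)$ and are mutually independent. For each $e \in \edges$, the inclusion $\vU_e \subseteq \la \vU_f : f \in \incoming(e)\ra$ yields a coefficient matrix $K_e$ with $M_e = [M_f : f\in\incoming(e)]\,K_e$; declare the local encoder $\networkcoding_e$ to be left-multiplication by $K_e$. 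A topological induction along $\edges$ (using acyclicity) then gives $Y_e = Y M_e$ globally, so $Y_e$ agrees with the prescribed linear form. The analogous argument at each decoder $u$, using $\vU_s \subseteq \la \vU_f : f \in \incoming(u)\ra$, produces a linear map from $\{Y_f : f \in \incoming(u)\}$ that recovers $Y_s$ with zero error.

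What remains is to verify fitness. Since $M_s$ has full column rank $h(s)$ and $Y$ is uniform on $\field(q)^N$, $|\support(Y_s)| = q^{h(s)}$, while $|\support(Y_e)| \le q^{h(e)}$ with equality when $M_e$ has full column rank. Taking $c = 1/\log q$, the constant sequence of network codes is asymptotically fit for $(\lambda,\omega) = \proj_{\problem}[h]$, so $(\lambda,\omega)$ is \zach\ by $q$-linear codes. The only non-routine step is ensuring that the locally defined kernels $K_e$ mesh consistently into the global identity $Y_e = Y M_e$; this is the one place acyclicity is essential, and it is handled by the topological induction above.
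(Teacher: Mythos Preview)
Your proof is correct and follows essentially the same approach as the paper: both translate the $q$-representable $h$ into subspaces, use $\set{C}_{\indep}$ to get the source subspaces in direct sum, use $\set{C}_{\net}$ and $\set{C}_{\de}$ for the subspace inclusions that yield linear local encoders and decoders, and verify fitness via $|\support(Y_s)|=q^{h(s)}$ and $|\support(Y_e)|\le q^{h(e)}$. The only cosmetic differences are that you normalise so $[M_s:s\in\sessions]$ is the identity (the paper works with a general invertible $M$ and its inverse) and that you define edge variables via local encoders and prove $Y_e=YM_e$ by topological induction, whereas the paper defines $Y_e$ globally and then exhibits the local linear map.
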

\begin{proof}
By Definition \ref{df:repfn}, there exists a collection of subspaces 
\[
\{ \vV_{i} , i\in\sessions\cup\edges  \}
\]  
over $\field(q)$ such that 
\[
h(\alpha) = \dim \la \mathbb V_{i} , i\in\alpha \ra .
\]
As $h \in \set{C}_{\indep}(\problem)   $, 
\[
h(\sessions) = \sum_{s\in\sessions} h(s).
\]
Therefore,  
\begin{align}\label{eq:81}
\dim \la \vV_{s}, s\in\sessions \ra = \sum_{s\in\sessions} \dim \vV_{s}.  
\end{align}
Similarly, as $h\in   \set{C}_{\net}(\problem) \cap \bar\Gamma^{*}(\problem)$, we can prove that 
\[
h(e, \sessions) =  h(\sessions)
\]
for all $e\in\edges$. Consequently, 
 \begin{align}\label{eq:82}
\vV_{e} \subseteq \la \vV_{s}, s\in\sessions \ra , \quad \forall e\in\edges.
\end{align} 
Let $k=\dim \la \vV_{s}, s\in\sessions \ra$. By \eqref{eq:81} and
\eqref{eq:82}, we may assume without loss of generality that all
elements in $\vV_{f}$ are length $k$ column vectors for
$f\in\sessions\cup\edges$.  For each $s\in\sessions$, let $M_{s}$ be a
$k\times h(s)$ full column rank matrix such that the space spanned by
its columns is equal to $\vV_{s}$. Similarly, for each $e\in\edges$,
let $M_{e}$ be a $k\times h(e)$ full column rank matrix such that the
space spanned by its columns is equal to $\vV_{e}$.

Let $Z$ be a length $k$ row vector such that each of its entries is
independently and uniformly selected from $\field(q)$. Then $Z$,
together with the matrices $\{M_{f}, f\in\sessions\cup\edges\}$
induces a set of random variables
\[
\{Y_{f}, f\in\sessions\cup\edges\}\] such that $Y_{f} \triangleq Z
\times M_{f}$ for all
$f\in\sessions\cup\edges$. 
 Setting $Y=[Y_{s},s\in\sessions]$, and similarly $M = [M_{s},
 s\in\sessions]$ we have
\[
Y = Z \times  M .
\]
By \eqref{eq:81}, it is easy to see that $M$ must be a $k \times k$ invertible matrix.
Therefore
\begin{align}
  Z = Y  \times M^{-1}
\end{align}
and consequently, $Y_{f} = Y \times G_{f}$ where $G_{f} = M^{-1}
\times M_{f}$.

In the following, we will prove that $\{Y_{f},
f\in\sessions\cup\edges\}$ is indeed a zero-error linear network
code. To see this, first notice that $Y_{s}$ is a random row vector of
length $h(s)$ and all its entries are independently and uniformly
distributed over $\field(q)$.  Now, consider any $e\in\edges$.  As
$h\in\set{C}_{\net}$,
\[
h(e, \incoming (e)) = h(\incoming (e)).
\]
Therefore,  
\[
\dim \la \vV_{e}, \vV_{f}, f\in{\incoming(e)}  \ra = \dim \la  \vV_{f}, f\in\incoming(e)  \ra
\]
or equivalently, 
\[
 \vV_{e}\subseteq \la  \vV_{f}, f\in\incoming(e)  \ra.
\]

Consequently,  we can construct a matrix $\psi_{e}$ such that 
\[
M_{e}  =  [ M_{f}, f\in\incoming(e)  ]\times\psi_{e}
\]
Therefore, 
\begin{align}
Y_{e} &= Z \times M_{e} \nonumber\\
&= Z \times[ M_{f}, f\in\incoming(e)  ] \times \psi_{e} \nonumber \\
&=  [ Z\times M_{f}, f\in\incoming(e)  ] \times\psi_{e}  \nonumber\\
&=  [ Y_{f}, f\in\incoming(e)  ] \times\psi_{e} \label{eq:85}.
\end{align}
The equation \eqref{eq:85} clearly indicates that the local encoding functions are indeed linear. 

Similarly, as $h\in \set{C}_{\de}(\problem) $, 
\[
h(s, {\incoming(u)}) = h(\incoming(u))
\]
for any $u\in\destinationLocation(s)$.
We can once again construct a ``decoding function'' $\psi_{s,u}$ such that
\[
Y_{s} = [ Y_{f}, f\in\incoming(u)  ] \times\psi_{s,u}.
\] 
Hence, the receiver node $u$ can uniquely decode $Y_{s}$ from
$\{Y_{f}, f\in\incoming(u)\}$ and the probability of decoding failure
is zero.  As $\{Y_{f}, f\in\sessions\cup\edges\}$ is a zero-error
linear network code, together with \eqref{eq:74}-\eqref{eq:75},
$(\lambda,\omega)$ is \zach\ by linear network codes. 
\end{proof}

Our next step is to prove that Proposition \ref{prop:5.1} holds, even
when the function $h$ in \eqref{eq:77} is almost
$q$-representable. Before we prove this extension, we will need a few
basic results from linear algebra.
\begin{lemma}\label{lemma:5.1}
  Let $\vA, \vB$ be vector subspaces. Then there exists a subspace $\vC \subseteq \vB$  such that
  \begin{align*}
    \la \vA,\vC \ra & = \la \vA, \vB\ra \\
    \vA \cap \vC & = \{ \bf 0 \} .
  \end{align*}
  Consequently, $H\left(\vC\right) = H\left(\vB \mid  \vA \right) = H\left(\vA,\vB \right)  - H\left(\vA\right)$.
\end{lemma}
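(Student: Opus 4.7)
The plan is to construct $\vC$ explicitly as a complement of $\vA \cap \vB$ inside $\vB$. First I would pick a basis $\{d_1, \ldots, d_k\}$ of the intersection $\vA \cap \vB$, and then extend it to a basis $\{d_1, \ldots, d_k, c_1, \ldots, c_m\}$ of the whole space $\vB$, which is possible by the standard basis-extension theorem for finite-dimensional vector spaces. Define
\[
\vC \defined \la c_1, \ldots, c_m \ra .
\]
By construction $\vC \subseteq \vB$ is immediate, so only the two spanning/intersection properties need verification.

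For the span equality $\la \vA, \vC \ra = \la \vA, \vB \ra$, note that each $d_i$ already lies in $\vA$, so $\la \vA, \vC \ra$ contains $\vA$ together with all of $c_1, \ldots, c_m, d_1, \ldots, d_k$, hence contains $\vB$; the reverse inclusion is trivial. For the trivial intersection $\vA \cap \vC = \{\mathbf{0}\}$, any $x \in \vA \cap \vC$ satisfies $x \in \vA \cap \vB$ since $\vC \subseteq \vB$, so $x$ is simultaneously a linear combination of the $d_i$ and of the $c_j$; linear independence of $\{d_1, \ldots, d_k, c_1, \ldots, c_m\}$ then forces $x = \mathbf{0}$.

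For the dimension identity, the construction gives $H(\vC) = \dim \vC = m = \dim \vB - k = \dim \vB - \dim(\vA \cap \vB)$. Combining this with the standard dimension formula
\[
\dim \la \vA, \vB \ra = \dim \vA + \dim \vB - \dim(\vA \cap \vB)
\]
yields $\dim \vC = \dim \la \vA, \vB \ra - \dim \vA$, which under the paper's convention is exactly $H(\vA, \vB) - H(\vA) = H(\vB \mid \vA)$.

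There is no real obstacle here: the result is a routine linear-algebra fact, and in fact the vector-space analogue of Lemma~\ref{lemma:quw} used earlier in the paper (where a random variable $W$ with $H(W) = H(A \mid B)$ and $H(A \mid B, W) = 0$ is constructed from a quasi-uniform pair). The only care needed is to invoke basis extension correctly so that the selected $c_j$ are guaranteed to be linearly independent from $\vA \cap \vB$, which is what simultaneously secures both the span condition and the trivial intersection.
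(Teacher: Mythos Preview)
Your argument is correct; this is the standard basis-extension construction of a complement of $\vA\cap\vB$ inside $\vB$, and every step checks out. The paper in fact states Lemma~\ref{lemma:5.1} without proof, treating it as an elementary linear-algebra fact, so there is nothing to compare against---your proof is exactly what one would supply if asked to fill in the details.
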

  
Using Lemma \ref{lemma:5.1}, for any subspace $\vA$ of  $\vB$, one can easily construct a vector subspace
 ${\vA}^{*}$  of ${\vB}$ such that
 \begin{align}
 \la{\vA},{\vA}^{*}\ra & ={\vB}, \\
 {\vA} \cap {\vA}^{*} & = \{{\bf 0}\}.
 \end{align}
Any vector $  u\in{\vB}$ can be uniquely expressed as the  sum of vectors ${u}_{1}\in {\vA}^{*}$ and $ {u}_{2}\in {\vA}$.  For notational simplicity, we use $T_{\vA}({u})$ to denote ${ u}_{1}$.  Similarly, 
\[
T_{\vA}(\vV) \defined \{ T_{\vA}({u}): \: u\in \vV  \}.
\]
 Clearly, if $\vV$ is a subspace, then so is  $T_{\vA}(\vV) $.

 In general, $T_{\vA} ({u})$ depends on the specific choice of
 $\vA^{*}$. However, all the results mentioned in this paper involving
 $T_{\vA}$ remain valid for any legitimate choice of ${\vA}^{*}$. The
 following lemma may be directly verified.
\begin{lemma}[Properties]\label{lemma:5.properties}
For any subspaces $\vA, \vB_{1},\vB_{2}$,
\begin{align}
  T_{\vA}(\vB_{1}) \cap \vA & =  \{ {\bf 0} \} \\
  \la T_{\vA}(\vB_{1}), \vA \ra & = \la \vA,\vB_{1}\ra \\
  H(T_{\vA}(\vB_{1})) & = H\left(\vB_{1} \mid \vA \right) \\
  \la T_{\vA}(\vB_{1}) , T_{\vA}(\vB_{2}) \ra & = T_{\vA}(\la \vB_{1}, \vB_{2} \ra)
\end{align}
Furthermore, if $\vB_{1}\subseteq \vB_{2}$, then 
\[
T_{\vA}(\vB_{1}) \subseteq T_{\vA}(\vB_{2}).
\]
\end{lemma}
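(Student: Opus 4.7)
The plan is to treat $T_{\vA}$ as the linear projection onto $\vA^{*}$ along $\vA$ associated with the chosen direct sum decomposition, and then verify each of the five properties directly from that description. Implicitly, I take the ambient space in which $T_{\vA}$ acts to be any fixed space containing $\vA$, $\vB_{1}$, $\vB_{2}$, with $\vA^{*}$ a complement of $\vA$ there; once $\vA^{*}$ is fixed, $T_{\vA}$ is a well-defined linear map whose image lies in $\vA^{*}$ and whose kernel is $\vA$.

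First, for the property $T_{\vA}(\vB_{1}) \cap \vA = \{\mathbf{0}\}$, I would note that $T_{\vA}(u) \in \vA^{*}$ for every $u$, so $T_{\vA}(\vB_{1}) \subseteq \vA^{*}$, and $\vA \cap \vA^{*} = \{\mathbf{0}\}$ by construction of $\vA^{*}$. For the second identity $\la T_{\vA}(\vB_{1}), \vA\ra = \la \vA,\vB_{1}\ra$, I would argue both inclusions from the defining decomposition: each $u \in \vB_{1}$ splits as $u = T_{\vA}(u) + w$ with $w \in \vA$, which places $u$ inside $\la T_{\vA}(\vB_{1}), \vA\ra$ and so gives $\la \vA,\vB_{1}\ra \subseteq \la T_{\vA}(\vB_{1}), \vA\ra$; conversely, $T_{\vA}(u) = u - w \in \la \vA, \vB_{1}\ra$, giving the reverse containment.

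Third, for $H(T_{\vA}(\vB_{1})) = H(\vB_{1}\mid\vA)$, I would combine the first two properties with the modular dimension formula: from $\la T_{\vA}(\vB_{1}),\vA\ra = \la \vA,\vB_{1}\ra$ and $T_{\vA}(\vB_{1}) \cap \vA = \{\mathbf{0}\}$, the inclusion-exclusion for dimensions yields $\dim T_{\vA}(\vB_{1}) = \dim\la \vA,\vB_{1}\ra - \dim \vA$, which by definition is $H(\vB_{1}\mid\vA)$. For the span identity $\la T_{\vA}(\vB_{1}), T_{\vA}(\vB_{2})\ra = T_{\vA}(\la \vB_{1},\vB_{2}\ra)$, I would appeal to linearity of $T_{\vA}$: a linear map sends the span of a union to the span of the images, so $T_{\vA}(\la\vB_{1},\vB_{2}\ra)$ is exactly $\la T_{\vA}(\vB_{1}), T_{\vA}(\vB_{2})\ra$. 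Finally, monotonicity $\vB_{1}\subseteq\vB_{2} \Rightarrow T_{\vA}(\vB_{1})\subseteq T_{\vA}(\vB_{2})$ is immediate from the set-builder definition of $T_{\vA}(\vV)$.

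There is no substantive obstacle here; the only thing to be careful about is that the apparent dependence of $T_{\vA}$ on the choice of $\vA^{*}$ is suppressed, as flagged in the paragraph before the lemma. Each statement of the lemma is an equality either of subspaces defined by spans or intersections, or of dimensions, all of which are preserved under any legitimate choice of complement, so the proof goes through for any fixed $\vA^{*}$.
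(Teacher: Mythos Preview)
Your proof is correct and matches the paper's approach: the paper simply states that the lemma ``may be directly verified,'' and your argument---treating $T_{\vA}$ as the linear projection onto $\vA^{*}$ along $\vA$ and reading off each property from that description---is exactly such a direct verification.
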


Using  $T_{\vA}$, we can ``transform'' a set of subspaces 
\[
\{\seq {\vB}1n \}
\]
into another set of subspaces
\[
\left\{T_{\vA}(\vB_{1}) , \ldots , T_{\vA}(\vB_{n}) \right \}
\]
satisfying Lemmas~\ref{lemma:5.3} and ~\ref{lemma:5.4} below, which are direct consequences of  Lemma \ref{lemma:5.properties}.
\begin{lemma}[Conditioning] \label{lemma:5.3}
\begin{align}
  H\left(T_{{\vA}}({\vB_{i}}), i\in\alpha \right) = H\left(T_{\vA} \la \vB_{i},i\in\alpha\ra\right)
  =  H\left(\vB_{i}, i\in\alpha\mid \vA\right).
\end{align}
\end{lemma}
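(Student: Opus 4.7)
The plan is to reduce Lemma~\ref{lemma:5.3} to the four properties already collected in Lemma~\ref{lemma:5.properties}, extending them from singletons/pairs to arbitrary finite families by induction. Both equalities are essentially corollaries, so no new structural argument about the choice of the complement $\vA^{*}$ is needed beyond what is implicit in the construction of $T_{\vA}$.

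First, I would prove by induction on $|\alpha|$ that
\[
\bigl\langle T_{\vA}(\vB_{i}), \, i\in\alpha \bigr\rangle = T_{\vA}\bigl(\langle \vB_{i}, \, i\in\alpha\rangle\bigr).
\]
The base case $|\alpha|=1$ is trivial. For the inductive step, write $\alpha = \alpha' \cup \{j\}$, apply the inductive hypothesis to $\alpha'$, and then use the pairwise identity $\la T_{\vA}(\vU), T_{\vA}(\vV)\ra = T_{\vA}(\la \vU,\vV\ra)$ from Lemma~\ref{lemma:5.properties} with $\vU = \la \vB_{i}, i\in\alpha'\ra$ and $\vV = \vB_{j}$, together with associativity of $\la\cdot\ra$.

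Taking the dimension of both sides now gives the first equality of Lemma~\ref{lemma:5.3}, since by the notational convention introduced after Definition~\ref{df:repfn} the symbol $H(T_{\vA}(\vB_{i}), i\in\alpha)$ denotes $\dim\la T_{\vA}(\vB_{i}), i\in\alpha\ra$ and $H(T_{\vA}\la\vB_{i},i\in\alpha\ra)$ denotes the dimension of the same subspace.

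For the second equality, set $\vW \defined \la \vB_{i}, i\in\alpha\ra$ and apply the identity $H(T_{\vA}(\vB_{1})) = H(\vB_{1}\mid\vA)$ from Lemma~\ref{lemma:5.properties} with $\vB_{1}$ replaced by $\vW$; this yields $H(T_{\vA}(\vW)) = H(\vW\mid\vA) = H(\vB_{i}, i\in\alpha\mid\vA)$, where the last step again unpacks the conventional notation. There is no real obstacle here — the only point to watch is that the manipulation $T_{\vA}(\la\vB_{i},i\in\alpha\ra)$ is well-defined and independent of which vectors one picks inside each $\vB_{i}$, which is already handled by the remark immediately preceding Lemma~\ref{lemma:5.properties}.
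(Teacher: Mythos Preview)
Your proposal is correct and matches the paper's approach: the paper does not spell out a proof but simply states that Lemma~\ref{lemma:5.3} is a direct consequence of Lemma~\ref{lemma:5.properties}, and your induction on $|\alpha|$ using the pairwise identity $\la T_{\vA}(\vU),T_{\vA}(\vV)\ra = T_{\vA}(\la\vU,\vV\ra)$ followed by the single-subspace identity $H(T_{\vA}(\vW))=H(\vW\mid\vA)$ is exactly the intended unpacking of that remark.
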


%
\begin{lemma}[Preserving functional dependencies] \label{lemma:5.4}
  If 
  \[
  H\left(\vB_{k}\mid \vB_{i},i\in\alpha\right)=0,
  \]
  or equivalently, $\vB_{k} \subseteq \la \vB_{i},i\in\alpha \ra$,  then 
  \[
  H\left(T_{{\vA}}({\vB_{k}}) \mid T_{{\vA}}({\vB_{i}}), i\in\alpha\right) = 0.
  \]
\end{lemma}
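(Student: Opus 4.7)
The plan is to show directly that $T_{\vA}(\vB_k) \subseteq \langle T_{\vA}(\vB_i), i\in\alpha\rangle$, which under the entropy/dimension dictionary established just before the lemma is exactly the conclusion $H(T_{\vA}(\vB_k)\mid T_{\vA}(\vB_i), i\in\alpha)=0$. So the real content is a subspace inclusion, and everything follows from Lemma~\ref{lemma:5.properties}, which was stated without needing any additional machinery.

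First I would restate the hypothesis in its equivalent geometric form: $H(\vB_k \mid \vB_i, i\in\alpha)=0$ says exactly that $\vB_k \subseteq \langle \vB_i, i\in\alpha \rangle$. Next, I would apply the monotonicity clause of Lemma~\ref{lemma:5.properties}, namely that $\vB_1 \subseteq \vB_2$ implies $T_{\vA}(\vB_1) \subseteq T_{\vA}(\vB_2)$, to the pair $\vB_k \subseteq \langle \vB_i, i\in\alpha\rangle$. This yields
\[
T_{\vA}(\vB_k) \;\subseteq\; T_{\vA}\bigl(\langle \vB_i, i\in\alpha\rangle\bigr).
\]

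Then I would invoke the join-preservation identity $\langle T_{\vA}(\vB_1), T_{\vA}(\vB_2)\rangle = T_{\vA}(\langle \vB_1, \vB_2\rangle)$ from Lemma~\ref{lemma:5.properties}, extended inductively across $|\alpha|$ elements (the two-space version iterated, which requires nothing beyond the stated identity), to obtain
\[
T_{\vA}\bigl(\langle \vB_i, i\in\alpha\rangle\bigr) \;=\; \langle T_{\vA}(\vB_i), i\in\alpha\rangle.
\]
Combining the two displays gives $T_{\vA}(\vB_k) \subseteq \langle T_{\vA}(\vB_i), i\in\alpha\rangle$, and thus $H(T_{\vA}(\vB_k) \mid T_{\vA}(\vB_i), i\in\alpha)=0$, completing the argument.

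There is no serious obstacle here; the only thing to be careful about is the induction step, since Lemma~\ref{lemma:5.properties} is stated for two subspaces at a time, and one should verify that the iteration goes through independently of how the join $\langle \vB_i, i\in\alpha\rangle$ is parenthesised. This is immediate because $T_{\vA}$ of a sum of subspaces is the sum of their $T_{\vA}$-images, and finite sums are associative, so the conclusion does not depend on the order of application.
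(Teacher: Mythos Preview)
Your proposal is correct and matches the paper's approach: the paper states that Lemma~\ref{lemma:5.4} is a direct consequence of Lemma~\ref{lemma:5.properties}, and you have spelled out precisely that derivation using the monotonicity and join-preservation clauses. There is nothing to add.
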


\begin{prop}\label{prop:5.5}
  Consider any almost $q$-representable function $ \bfh \in
  \set{H}[\sessions\cup\edges]$.  Let
  \begin{equation*}
    \Delta \define \{ (k,\alpha) : k\in \sessions\cup\edges , \alpha \subseteq \sessions\cup\edges \text{ such that } {\bfh}( {k} \mid {\alpha})=0\}.
\end{equation*}
  Then there exists a sequence of weakly $q$-representable rank
  functions $ \bfh^{\ell}\in \set{H}[\sessions\cup\edges]$ such that
  \begin{align}
    \lim_{\ell\to\infty}{h^{\ell}}  & = {h} 
  \end{align}
  and
  \begin{align}
    {\bfh^{\ell}}( {k}  \mid {\alpha}  ) & =0,  
  \end{align}
  for all positive integers $\ell$,   and $(k,\alpha) \in \Delta$.
\end{prop}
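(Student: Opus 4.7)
The plan is to mirror the argument used for Proposition~\ref{prop:3.support}, replacing the construction of auxiliary random variables $W_{k,\alpha}^{\ell}$ by an analogous construction of \emph{correction subspaces}. Since $h \in \bar{\Upsilon}^{*}_{q}$, there is already a sequence of weakly $q$-representable functions $g^{\ell}$ converging to $h$; each $g^{\ell}$ is realised (after an appropriate scaling $c_{\ell} > 0$) by a collection of subspaces $\{\vV^{\ell}_{i}, i\in\sessions\cup\edges\}$ over $\field(q)$. What these subspaces may fail to satisfy is the functional-dependency conditions: for $(k,\alpha)\in\Delta$ we only know $c_{\ell}\,g^{\ell}(k\mid\alpha)\to 0$ rather than $g^{\ell}(k\mid\alpha)=0$.

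To remedy this, for every $(k,\alpha) \in \Delta$ I would invoke Lemma~\ref{lemma:5.1} with $\vA=\la \vV^{\ell}_{i},i\in\alpha\ra$ and $\vB=\vV^{\ell}_{k}$ to obtain a subspace $\vW^{\ell}_{k,\alpha}\subseteq \vV^{\ell}_{k}$ such that
\begin{align*}
\la \vW^{\ell}_{k,\alpha},\vV^{\ell}_{i}, i\in\alpha\ra
&= \la \vV^{\ell}_{k},\vV^{\ell}_{i}, i\in\alpha\ra, \\
\dim \vW^{\ell}_{k,\alpha} &= g^{\ell}(k\mid\alpha).
\end{align*}
Let $\vW^{\ell}_{\Delta} \defined \la \vW^{\ell}_{k,\alpha}: (k,\alpha)\in\Delta\ra$ and define the corrected subspaces
\begin{align*}
\vV'^{\,\ell}_{i} \defined \la \vV^{\ell}_{i},\vW^{\ell}_{\Delta}\ra,\qquad i\in\sessions\cup\edges.
\end{align*}
Since $\dim\vW^{\ell}_{\Delta}\le \sum_{(k,\alpha)\in\Delta} g^{\ell}(k\mid\alpha)$ and each summand, when scaled by $c_{\ell}$, vanishes, the rank function $h^{\ell}(\beta) \defined \dim\la \vV'^{\,\ell}_{i},i\in\beta\ra$ differs from $g^{\ell}(\beta)$ by at most $\dim\vW^{\ell}_{\Delta}$, so $\lim_{\ell\to\infty}c_{\ell}\,h^{\ell}=h$ and $h^{\ell}$ is $q$-representable (hence weakly $q$-representable after absorbing $c_{\ell}$).

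Finally, for each $(k,\alpha)\in\Delta$ with $\alpha\neq\emptyset$, every $\vV'^{\,\ell}_{i}$ with $i\in\alpha$ contains $\vW^{\ell}_{\Delta}\supseteq\vW^{\ell}_{k,\alpha}$, so
\begin{align*}
\vV'^{\,\ell}_{k} = \la \vV^{\ell}_{k},\vW^{\ell}_{\Delta}\ra
\subseteq \la \vV^{\ell}_{i},\vW^{\ell}_{k,\alpha},i\in\alpha\ra\subseteq \la \vV'^{\,\ell}_{i},i\in\alpha\ra,
\end{align*}
which gives exactly $h^{\ell}(k\mid\alpha)=0$. The main obstacle I anticipate is purely technical: tracking the scaling constants $c_{\ell}$ carefully so that the limit equality $\lim_{\ell}c_{\ell}h^{\ell}=h$ is preserved simultaneously with the exact (non-asymptotic) equality $h^{\ell}(k\mid\alpha)=0$ for every $\ell$. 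The edge case $\alpha=\emptyset$ in $\Delta$, when it arises, is handled by observing that $h(k)=0$ forces $\dim\vV^{\ell}_{k}$ to shrink to zero after scaling, so we may without loss of generality take $\vV^{\ell}_{k}=\{\mathbf{0}\}$ at the outset for such indices, which is preserved by the construction above.
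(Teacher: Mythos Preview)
Your proposal is essentially identical to the paper's proof: both invoke Lemma~\ref{lemma:5.1} to build correction subspaces $\vW^{\ell}_{k,\alpha}$, adjoin their span $\vW^{\ell}_{\Delta}$ to every original subspace, and use $c_{\ell}\dim\vW^{\ell}_{\Delta}\to 0$ to preserve the limit while enforcing the required containments. One small slip: in your final displayed chain the first inclusion fails as written (the left side contains all of $\vW^{\ell}_{\Delta}$, not just $\vW^{\ell}_{k,\alpha}$); replace the middle term by $\la \vV^{\ell}_{i},i\in\alpha;\,\vW^{\ell}_{\Delta}\ra$ and the argument goes through.
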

\begin{proof}
  By definition,  for any almost $q$-representable function $\bfh$,  
  there exists a sequence $\{ \vU^{\ell}_{f},
  f\in\sessions\cup\edges\}$, $\ell=1,2,\dots$ of collections of subspaces 
  over $\field(q)$ and $c_{\ell}>0$ such that for any $\alpha \subseteq \sessions\cup\edges$,
  \begin{align} \label{eq:97}
    \lim_{\ell \to\infty} c_\ell H\left( \vU^{\ell}_{\alpha}\right) = h\left(\alpha\right).
  \end{align}
For every pair of $(k,\alpha) \in \Delta$,  by using Lemma \ref{lemma:5.1},
we can  construct a subspace $\vW^{\ell}_{k,\alpha}$ such that 
\begin{align}
H\left(\vW_{k,\alpha}^{\ell}\right) & = H (\vU_{k}^{\ell} , \vU_{\alpha}^{\ell}) - H ( \vU_{\alpha}^{\ell})  \label{eq:98}\\
  \vU_{k}^{\ell}  & \subseteq \la \vU_{\alpha}^{\ell}, \vW_{k,\alpha}^{\ell} \ra.
\end{align}
Let $\vW_{\Delta}^{\ell} = \la \vW_{k,\alpha}^{{\ell}}, (k,\alpha) \in
\Delta \ra$.  By \eqref{eq:97}-\eqref{eq:98} and the fact that
\[
h\left(k,\alpha\right) = h\left(\alpha\right), \quad \forall (k,\alpha) \in \Delta,
\] 
 we have 
\begin{align}
\lim_{\ell\to\infty} c_{\ell}H\left(\vW_{\Delta}^{\ell}\right) & \le 
	\lim_{\ell\to\infty} c_{\ell} \sum_{(k,\alpha) \in\Delta}H\left(\vW_{k.\alpha}^{\ell}\right) \nonumber \\
& = \sum_{(k,\alpha) \in\Delta} \lim_{\ell\to\infty} c_{\ell} H\left(\vW_{k.\alpha}^{\ell}\right) \nonumber\\
& = \sum_{(k,\alpha) \in\Delta} \left(h\left(k,\alpha\right)  - h\left(\alpha\right) \right) \nonumber\\
& = 0.	\label{eq:100}
\end{align} 
 
Define a new collection of subspaces
\begin{align}
\vV^{\ell}_{f} \defined \la U^{\ell}_{f}, W_{\Delta}^{\ell} \ra, \quad \forall f\in\sessions\cup\edges
\end{align}
and let $g^{\ell}$ be the representable function induced by 
\[
\{ \vV^{\ell}_{f} , f\in \sessions\cup\edges\}.
\]
Obviously, for all $(k,\alpha) \in \Delta$, $\vV_{k}^{\ell}$ is a
subspace of $\vV_{\alpha}^{\ell}$, or equivalently,
$g^{\ell}(k,\alpha) = g^{\ell}(\alpha)$.  By \eqref{eq:100}, using a
similar argument as given in \eqref{eq:29}-\eqref{eq:34} in the proof
of Proposition \ref{prop:3.support}, we can also prove that
\[
\lim_{\ell\to\infty} c_{\ell} g^{\ell } = h.
\] 
The proposition then follows by letting $h^{\ell} = c_{\ell} g^{\ell}$.
\end{proof}

In the proof for Theorem \ref{thm:arbitrarymain}, an extra step (via
the introduction of a regular partition set) is taken to construct a
network code from a set of quasi-uniform random variables. This extra
step requires that all sources are colocated. As we shall see, when
we construct linear codes from a set of subspaces, the colocated
assumption is no longer needed.
\begin{lemma}\label{lemma:5.5}
  For any subspaces $\vA ,\vB$,
  \begin{align}
    H\left(\vA\mid  \vB \right) = H\left(\vA\mid  \vA \cap \vB\right).
  \end{align}  
\end{lemma}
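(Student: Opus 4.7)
The plan is to unpack both sides using the earlier convention that $H(\vX)=\dim\vX$ and $H(\vX\mid\vY)=\dim\langle\vX,\vY\rangle-\dim\vY$, and then invoke the modular (dimension) law
\[
\dim\langle\vA,\vB\rangle=\dim\vA+\dim\vB-\dim(\vA\cap\vB).
\]
Applying this to the left-hand side gives
\[
H(\vA\mid\vB)=\dim\langle\vA,\vB\rangle-\dim\vB=\dim\vA-\dim(\vA\cap\vB).
\]

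For the right-hand side, I would first observe that $\vA\cap\vB\subseteq\vA$, so $\langle\vA,\vA\cap\vB\rangle=\vA$. Hence
\[
H(\vA\mid\vA\cap\vB)=\dim\vA-\dim(\vA\cap\vB),
\]
which matches the expression obtained for $H(\vA\mid\vB)$, completing the proof.

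There is essentially no obstacle here: the entire argument is the modular law for finite-dimensional subspaces, which is a standard identity in linear algebra and is in fact the defining reason why representable functions sit inside $\Gamma^*$ (as was noted in the paper's discussion following Definition \ref{df:repfn}). The only care required is to remember that, under the paper's conventions, $H(\vA\mid\vB)$ is shorthand for $\dim\langle\vA,\vB\rangle-\dim\vB$ rather than any probabilistic conditional entropy, so that the reduction is a purely combinatorial/dimensional identity rather than something needing an information-theoretic argument.
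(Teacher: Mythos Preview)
Your proof is correct and is precisely the ``direct verification'' that the paper invokes as its entire proof of this lemma. There is nothing to add.
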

\begin{proof} Direct verification. \end{proof}

\begin{lemma}\label{lemma:5.6}
  Let $\{ \vV_{f}, f\in\sessions\cup\edges \}$ be a collection of subspaces. Then there exists a subspace $\vA$ such that
  \[
  H(T_{\vA}( \vV_{s}), s\in\sessions  )  =\sum_{s\in\sessions} H(T_{\vA}( \vV_{s}))
  \]
  and 
  \[
  H\left(\vA\right) = H\left( \vV_{\sessions} \right) -
  \sum_{s\in\sessions} H\left(\vV_{s}\mid \vV_{\sessions \setminus s}\right).
  \]
\end{lemma}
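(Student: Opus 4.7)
The plan is to construct $\vA$ explicitly from pieces of the source subspaces and then verify both the dimension formula and the direct-sum-in-projection condition. Set $d_{s} \defined H(\vV_{s}\mid \vV_{\sessions\setminus s})$, and for each $s$ use Lemma~\ref{lemma:5.1} to choose $\vC_{s}\subseteq \vV_{s}$ complementary to $\vV_{s}\cap \vV_{\sessions\setminus s}$ inside $\vV_{s}$, so that $\vV_{s}=\vC_{s}\oplus(\vV_{s}\cap \vV_{\sessions\setminus s})$ and $\dim \vC_{s} = d_{s}$. The target subspace $\vA$ will be any extension of $\sum_{s\in\sessions}(\vV_{s}\cap \vV_{\sessions\setminus s})$ to a complement of $\bigoplus_{s}\vC_{s}$ inside $\vV_{\sessions}$.

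Two direct-sum claims must be established. First, the $\vC_{s}$ are jointly in direct sum: for $s'\neq s$, $\vC_{s'}\subseteq \vV_{s'}\subseteq \vV_{\sessions\setminus s}$, so $\vC_{s}\cap \sum_{s'\neq s}\vC_{s'}\subseteq \vC_{s}\cap \vV_{\sessions\setminus s}=\{0\}$ by construction. Hence $\dim \bigoplus_{s}\vC_{s} = \sum_{s} d_{s}$. The decisive second claim is that $\sum_{s}(\vV_{s}\cap \vV_{\sessions\setminus s})$ meets $\bigoplus_{s}\vC_{s}$ only at zero. To prove it, suppose $v = \sum_{s} u_{s} = \sum_{s} c_{s}$ with $u_{s}\in \vV_{s}\cap \vV_{\sessions\setminus s}$ and $c_{s}\in \vC_{s}$, and for each fixed $s$ project into the quotient $\vV_{\sessions}/\vV_{\sessions\setminus s}$. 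Every $u_{t}$ lies in $\vV_{\sessions\setminus s}$ (when $t \neq s$ because $u_{t}\in \vV_{t}\subseteq \vV_{\sessions\setminus s}$, and when $t = s$ directly from $u_{s}\in \vV_{s}\cap \vV_{\sessions\setminus s}$), so $v$ vanishes in the quotient; likewise $c_{s'}$ for $s'\neq s$ vanishes; and the quotient map is injective on $\vC_{s}$ since $\vC_{s}\cap \vV_{\sessions\setminus s} = \{0\}$. This forces $c_{s} = 0$; running the argument for every $s$ yields $v = 0$.

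With both claims in hand, extend $\sum_{s}(\vV_{s}\cap \vV_{\sessions\setminus s})$ to a complement $\vA$ of $\bigoplus_{s}\vC_{s}$ in $\vV_{\sessions}$. Immediately $\dim \vA = \dim \vV_{\sessions} - \sum_{s} d_{s}$, giving the second identity of the lemma. For the first identity, Lemma~\ref{lemma:5.3} recasts it as $H(\vV_{\sessions}\mid \vA) = \sum_{s} H(\vV_{s}\mid \vA)$. The left-hand side equals $\sum_{s} d_{s}$ by the dimension count. For each term on the right, the inclusion $\vV_{s}\cap \vV_{\sessions\setminus s} \subseteq \vA$ gives $\dim(\vA \cap \vV_{s}) \geq \dim(\vV_{s}\cap \vV_{\sessions\setminus s})$, while $\vA \cap \vV_{s}$ being disjoint from $\vC_{s}$ inside $\vV_{s}$ gives $\dim(\vA \cap \vV_{s}) \leq \dim \vV_{s} - d_{s} = \dim(\vV_{s}\cap \vV_{\sessions\setminus s})$. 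Hence $H(\vV_{s}\mid \vA) = d_{s}$ and summing completes the argument.

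The main obstacle is the second direct-sum claim: it is a genuinely representation-theoretic statement relying on the modular-lattice structure of subspaces via the quotient $\vV_{\sessions}/\vV_{\sessions\setminus s}$. There is no analogue for arbitrary random variables, which is consistent with the fact that the colocated-sources assumption of Theorem~\ref{thm:arbitrarymain} was required in the almost-entropic setting but is avoided here.
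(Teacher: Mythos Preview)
Your proof is correct, and your choice of $\vA$ in fact coincides with the paper's: the paper takes $\vA=\la \vW_{s},\,s\in\sessions\ra$ with $\vW_{s}=\vV_{s}\cap\vV_{\sessions\setminus s}$, and your ``extension'' step is vacuous because once your two direct-sum claims hold, $\sum_{s}\vW_{s}$ already complements $\bigoplus_{s}\vC_{s}$ in $\vV_{\sessions}$ (their sum is $\sum_{s}\vV_{s}=\vV_{\sessions}$). What differs is the verification. The paper runs a sandwich of inequalities in the $H(\cdot)$ notation, the key step being Lemma~\ref{lemma:5.5}, $H(\vV_{s}\mid \vV_{\sessions\setminus s})=H(\vV_{s}\mid \vW_{s})$, which squeezes $H(\vV_{\sessions}\mid\vA)$ and $\sum_{s}H(\vV_{s}\mid\vA)$ to the common value $\sum_{s}H(\vV_{s}\mid\vV_{\sessions\setminus s})$. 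You instead establish the direct-sum structure explicitly via the quotient $\vV_{\sessions}/\vV_{\sessions\setminus s}$ and then read off all dimensions by counting. Both arguments exploit the same modular-lattice property of subspaces (your quotient injectivity on $\vC_{s}$ is exactly what underlies Lemma~\ref{lemma:5.5}); yours is a little more explicit about the linear-algebraic structure, while the paper's is shorter once Lemma~\ref{lemma:5.5} is available.
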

\begin{proof}
  Let 
  \begin{align}
    \vW_{s} \defined \vV_{s} \cap \la \vV_{{j}},
    j \in \sessions\setminus s \ra , \: \forall s\in\sessions \label{eq:63}
  \end{align} 
  and  $\vA \defined \la \vW_{s}, s \in\sessions \ra $.  Then
  %
  %
  \begin{align*}
    H\left(\vV_{\sessions} \mid  \vA\right) & = H\left(\vV_{\sessions}  \mid  \vW_{i}, i\in\sessions\right)  \\
    & \ge \sum_{ s\in \sessions} H\left(\vV_{s} \mid \vW_{i}, i\in\sessions,  \vV_{{k}}, k\neq s  \right) \\  
    & \nequal{(a)} \sum_{ s\in \sessions} H\left(\vV_{s} \mid  \vW_{s},   \vV_{{k}}, k\neq s  \right) \\  
    & \nequal{(b)} \sum_{ s\in \sessions} H\left(\vV_{s} \mid  \vV_{{k}}, k \neq s  \right) \\  
    &\nequal{(c)} \sum_{ s\in \sessions} H\left(\vV_{s} \mid  \vW_{s}\right) \\
    & \ge  \sum_{ s\in \sessions} H\left(\vV_{s} \mid  \vW_{i } , i\in\sessions\right) \\
    & =\sum_{ s\in \sessions} H\left(\vV_{s}\mid  \vA\right) \\
    & \ge H\left(\vV_{\sessions} \mid  \vA\right)  
  \end{align*}
  where $(a)$ follows from that 
  $\vW_{{k}} \subseteq \vV_{{k}}$,  $ (b) $   from that
  $\vW_{{k}} \subseteq \la \vV_{{i}} , i\neq k \ra$, and $(c)$ from 
  Lemma \ref{lemma:5.5}.
  As  all the above inequalities are in fact equalities,  we have
  \[
  H\left(\vV_{\sessions} \mid  \vA\right)  = \sum_{s\in\sessions} H\left(\vV_{s}\mid  \vA\right) = \sum_{ s\in \sessions} H\left(\vV_{s} \mid  \vV_{{k}}, k \neq s  \right).
  \]
  Finally, as  $\vA \subseteq \la  \vV_{s}, s\in\sessions  \ra$, 
  \begin{align}
    H\left(\vA\right) & = H\left(\vV_{s},s\in\sessions\right) - H\left(\vV_{\sessions} \mid  \vA\right) \\
    & = H\left(\vV_{s},s\in\sessions\right) - \sum_{ s\in \sessions} H\left(\vV_{s} \mid  \vV_{{k}}, k \neq s  \right).
  \end{align}
\end{proof}

We now have all the elements required to prove Theorem \ref{thm:mainlinear}.
\begin{proof}[Proof of Theorem \ref{thm:mainlinear} ]

We begin with the if-part. Suppose 
\[
h \in \bar\Upsilon^{*}_{q} \cap \set{C}_{\indep} \cap \set{C}_{\net}  \cap \set{C}_{\de}.
\]
Using Proposition \ref{prop:5.5}, we can construct a sequence of
$q$-representable functions $ f^{\ell} $ and $c_{\ell} > 0$ such that
\[
\lim_{\ell\to\infty} c_{\ell} f^{\ell} = h
\] 
and each $f^{\ell}$ satisfies all of the same functional dependencies
as $h$, i.e. 
$$h\left(k\mid\alpha\right)=0\implies f^{\ell}(k\mid\alpha) = 0.$$
In particular, 
\[
f^{\ell} \in \set{C}_{\net}  \cap \set{C}_{\de}.
\]
For each $\ell$, by definition, there exists subspaces 
\[
\{ \vU_{i}^{\ell}, i\in\sessions\cup\edges \}
\]
over $\field(q)$ such that $f^{\ell}(\alpha) = H\left(\vU_{i}^{\ell},i\in\alpha\right)$. 

Then by Lemma \ref{lemma:5.6}, there exists a subspace $\vA^{\ell}$ such that
\begin{align}\label{eq:g.104}
H(T_{\vA^{\ell}} ( \vU^{\ell}_{s}) , s\in\sessions ) = \sum_{s\in\sessions} H(T_{\vA^{\ell}} ( \vU^{\ell}_{s}) )
\end{align}
and
\begin{align}\label{heq:109}
H\left(\vA^{\ell}\right) = H\left(\vU^{\ell}_{s}, s\in\sessions\right) - \sum_{s\in\sessions} H\left(\vU^{\ell}_{s}\mid  \vU^{\ell}_{\sessions\setminus s}\right).
\end{align}
Let $g^{\ell}$ be the representable function induced by the subspaces 
\[
\{T_{\vA^{\ell}} (\vU^{\ell}_i ), i\in\sessions\cup\edges    \}.
\]
Then by \eqref{eq:g.104}, $g^{\ell } \in \set{C}_{\indep}$. 
As each $f^{\ell} \in \set{C}_{\net} \cap \set{C}_{\de}$, by Lemma \ref{lemma:5.4}, 
$g^{\ell } \in \set{C}_{\net}\cap\set{C}_{\de}$. By Proposition \ref{prop:5.1}, 
$\proj_{\problem}[g^{\ell }]$ is \zach.

Due to \eqref{heq:109}, 
\[
\lim_{\ell \to\infty} c_{\ell} H\left(\vA^{\ell}\right) = 0,
\]
and hence,
\[
\lim_{\ell\to\infty} c_{\ell} g^{\ell} = \lim_{\ell\to\infty} c_{\ell} f^{\ell} = h.
\]
%
Thus, $\proj_{\problem}[h]$ is also \zach\ and the if-part of Theorem
\ref{thm:mainlinear} is proved.


Now, we will prove the only-if part.  Suppose $(\lambda, \omega) \in
{\mathsf T}(\problem) $ is \zach\ subject to a $q$-linearity
constraint. Then there exists a sequence of zero-error linear network
codes $\{Y_f^{n}, f\in {\sessions\cup\edges}\}$ and positive constants
$c_n$ such that
\begin{align*}
  \lim_{n\to\infty} {c_n} {H (Y_{e}^{n}) } & \le  \lim_{n\to\infty} {c_n}{\log | \support(Y_{e}^{n})|}  \le \omega({e}), \quad \forall e\in\edges, \\
  \lim_{n\to\infty} {c_n} {H (Y_{s}^{n}) }  & = \lim_{n\to\infty} {c_n}{\log | \support( Y_{s}^{n})|}  \ge \lambda({e}), \quad \forall s \in\sessions.
\end{align*}
Again, each set of random variables $\{ Y^{n}_{f}, f\in \sessions \cup
\edges \}$ induces a $q$-representable function $h^{n}$ such that
$H\left(Y_{f}^{n}, f\in\alpha\right) = h^{n}(\alpha)$ for all $\alpha \subseteq
\sessions \cup\edges$.  Since $\{ Y^{n}_{f}, f\in \sessions \cup
\edges \}$ is a zero-error linear code, we have $h^{n} \in
\bar\Upsilon^{*}_{q} \cap \set{C}_{\indep} \cap \set{C}_{\net} \cap
\set{C}_{\de}$.  Therefore,
\[
(\lambda,\omega) \in \major(\bar\Upsilon^{*}_{q} \cap \set{C}_{\indep} \cap \set{C}_{\net}  \cap \set{C}_{\de})
\]
and the theorem is proved. 
\end{proof}

\section{Routing}\label{sec:part3}
Another class of network coding constraints that is of great practical
importance is \emph{routing}, which requires that network nodes
perform only store-and-forward operations.  We will consider two main
cases. In Section~\ref{sec:routingonly} we consider networks where \emph{all }
nodes must perform routing. In Section~\ref{sec:routingcoding} we
consider heterogeneous networks consisting of both routing and network
coding nodes.

\subsection{Routing-only schemes}\label{sec:routingonly}
We first consider networks where the nodes are only able to perform
routing. We will formalise what we mean by ``routing'' later, and
proposed a generalisation. In such routing-based schemes, information
is transmitted from the sources to the destinations via a collection
of ``routing subnetworks''.
\begin{df}[Routing subnetworks]\label{df:routingsubnetwork}
  For any given network coding problem $\problem =
  (\graph,\multicastRequirement)$, a routing subnetwork is a subset
  $\set{T}$ of $\sessions \cup \edges$ such that
  \begin{enumerate}
  \item $|\set{T}\cap \sessions| = 1$. Thus, $\set{T}$ is associated
    with a source and we denote that unique source in $\set{T}\cap
    \sessions$ by $\nu(\set{T})$.
  \item For any link $e\in\set{T}\cap\edges$, $\incoming(e) \cap
    \set{T} \neq \emptyset$.  In other words, either there exists
    another link $f\in\set{T} $ such that
    \[
    f \in \incoming(e),
    \]
    or $\nu(\set{T})\in \incoming({e})$, i.e., the originating node of
    link $e$ is a source node of $\nu(\set{T})$. Hence, the subnetwork
    formed by the set of links in $\set{T}$ is in fact ``connected''
    and is ``rooted'' at $\nu(\set{T}) $.
  \end{enumerate}
\end{df}
A routing subnetwork is in fact a simple generalisation of the usual
multicast trees used in networks with point-to-point links (i.e. the
underlying network is a directed graph) for constructing a routing
solution (where messages are being forwarded and relayed at
intermediate nodes without coding). While it is sufficient to consider
multicast trees in such networks, the concept of multicast trees does
not extend naturally to wireless networks (where the underlying
network is a directed \emph{hypergraph}). In particular, in our
hypergraph model, links $\edges$ are broadcast, i.e., the message sent
over a link $e$ can be received by more than one node. Therefore, it
is not reasonable (and also not necessary) to insist that there is a
unique path connecting a source to a sink. According to our definition
of a multicast constraint, sources may also be available at more than
one node. Therefore, the condition $s \in \incoming(e)$ means that
there exists a node $u \in \sourceLocation(s)$ such that $u =
\tail{e}$.


\begin{df}[Achievability]\label{df:routingadmissible}
  A rate-capacity tuple 
  \[
  (\lambda,\omega)  \in \chi(\problem)
  \] is called \zach\ subject to a \emph{routing constraint} if there
  exists a collection of routing subnetworks $\set{T}_{i}$ and
  \emph{subnetwork capacities} $c_{i} \ge 0$ such that
  \begin{description}
  \item [(R1)]
    For any edge $e\in\edges$, 
    \begin{align}
      \omega({e}) \ge \sum_{i: e \in \set{T}_{i}} c_{i}.
    \end{align}
  \item [(R2)] For any $i$ and $u\in
    \destinationLocation(\nu(\set{T}_{i}))$, there exists
    $e\in\set{T}_{i}$ such that $u \in \head{e}$. In other words, the
    node $u$ is on the routing subnetwork.
  \item [(R3)]
    For any source $s\in\sessions$,  
    \begin{align} 
      \lambda({s}) = \sum_{i: \nu(\set{T}_{i}) = s } c_{i}.  
    \end{align}
%
%
  \end{description}
\end{df}

Clearly, these three conditions are not chosen arbitrarily but have a
meaning in practice. Suppose $(\lambda,\omega)$ is \zach\ subject
to a routing constraint. This tuple corresponds to a zero-error
routing solution defined as follows:
Assume without loss of generality that $\lambda(s), \omega(e)$ and
$c_{i}$ are all positive integers.  For each $s\in\sessions$, let the
source message $Y_{s}$ be a $q$-ary row vector of length
$\lambda(s)$. For each $i$, one can use the routing subnetwork
$\set{T}_{i}$ to transmit $c_{i}$ $q$-ary symbols of $Y_{s}$ from
the source nodes (which have access to the source $Y_{s}$) to all sink
nodes $u\in\destinationLocation(s)$.  By (R2), it is guaranteed that
all sink nodes receive all $\lambda(s)$ $q$-ary symbols of
$Y_{s}$ and hence can decode $Y_{s}$. Furthermore, a link $e\in\edges$
is used in the routing subnetwork $\set{T}_{i}$ if $e\in
\set{T}_{i}$. Therefore,
\[
\sum_{i: e\in\set{T}_{i}} c_{i}
\]
is the total number of $q$-ary symbols that have been transmitted on
link $e$. Clearly, the rate-capacity tuple $(\lambda,\omega)$ is fit for
this routing based solution.

In this routing solution, a source node does not perform any coding,
except for partitioning a source message into several independent
segments, and forwarding each segment via a routing subnetwork to the
corresponding sink nodes. This corresponds to the usual concept of
routing in networks consisting of point-to-point links.  For
successful decoding, a sink node must receive every segment of the
source message from the required sources.

In the following, we consider a slight generalisation of the concept
of routing, where source nodes can encode source messages into
\emph{correlated} segments (corresponding to intra-session coding). By
doing so, we can weaken the conditions (R2) and (R3).
\begin{df}[Generalised routing constraint]\label{df:generalisedrouting}
  A rate-capacity tuple $(\lambda,\omega)$ is called \zach\ subject to
  a \emph{generalised routing constraint} if there exists a collection
  of routing subnetworks $\set{T}_{i}$ and \emph{subnetwork
    capacities} $c_{i} \ge 0$ satisfying (R1) and the following
  condition:
  \begin{description}
  \item [(R2$\mbox{}^\prime$)]
    for any source $s\in\sessions$ and any sink node $u\in\destinationLocation(s)$,  
    \begin{align} 
      \lambda({s}) \le \sum_{i: \incoming(u) \cap \set{T}_{i} \neq \emptyset \text{ and }\nu(\set{T}_{i}) = s } c_{i}.  
    \end{align}
  \end{description}
\end{df}
Again, each \zach\ tuple $(\lambda,\omega)$ subject to a generalised
routing constraint is fit for a zero-error routing scheme as follows:
Let $Y_{s}$ be a $q$-ary row vector of length $\lambda(s)$. Instead of
partitioning a source message $Y_{s}$ into independent pieces, one can
encode (e.g. using simple codes for erasure channels) $Y_{s}$ into
$\sum_{i: \nu(\set{T}_{i}) =s} c_{i}$ $q$-ary symbols such that any
$\lambda(s)$ encoded symbols can reconstruct $Y_{s}$ with no
error. These  $\sum_{i: \nu(\set{T}_{i}) =s} c_{i}$
symbols will be forwarded via the routing subnetworks $\set{T}_{i}$
(where $\nu(\set{T}_{i})=s$) to sink nodes in
$\destinationLocation(s)$. As before, all intermediate network nodes
 perform only store-and-forward operations. The condition
(R2$\mbox{}^\prime$) then guarantees that each sink node
$u\in\destinationLocation(s)$ receives at least $\lambda(s)$ coded
symbols of $Y_{s}$. Hence, the node can decode $Y_{s}$ without 
error.

In the following, we will characterise the set of 0-achievable  tuples
subject to a (generalised) routing constraint using a similar
framework as developed in Sections \ref{sec:part1} and
\ref{sec:part2}. Developing a characterisation within this same
framework provides a convenient way to evaluate how a routing
constraint may reduce the set of \zach\ tuples.  We should point
out that we are not the first to characterise  \zach\
rate-capacity tuples subject to (generalised) routing constraints. In
fact, if
\[
|\head{e}| = 1, \quad \forall  e\in\edges,
\]
then the characterisation of \zach\ rate-capacity tuples subject to
(generalised) routing constraint can be obtained by solving variations
of the fractional Steiner tree packing
problem~\cite{Wu.Chou.ea04comparison}.  Our characterisation is
however unified with the entropy function formulation used for network
coding and highlights the differences (and similarities) between
different characterisations with or without (generalised) routing
constraints.

So far we have seen that entropy functions and representable entropy
functions were the key ingredients in characterising the capacity
regions for general network codes and for linear codes.  For networks
with routing constraints, we introduce \emph{almost atomic functions}
which in Theorem~\ref{thm:routing1} below will provide the corresponding
characterisation of the set of \zach\ tuples.
\begin{df}[Atomic rank function] 
A rank function $h\in\set{H}[\sessions\cup\edges]$ is called \emph{atomic}  if 
there exists  $\set{T} \subseteq \sessions\cup\edges $ such that  
\begin{align}\label{eq:atomic}
h  (\beta) = 
\begin{cases}
1 & \text{ if } \beta\cap \set{T}   \neq \emptyset \\
0 & \text{ otherwise. }
\end{cases}
\end{align}
It is called  \emph{almost atomic} if it can be written
\[
h = \sum_{i} c_{i} h^{i}
\]
where for all $i$,  $c_{i} \ge 0$ and $h^{i}$ is atomic.
\end{df}

\def\aa{{\sf AA}} 

Let $\Gamma_{\aa}(\problem)$, or simply $\Gamma_{\aa}$, be the set of
all almost atomic rank functions in $\set{H}[\sessions\cup\edges]$. It
can be easily proved that $\Gamma_{\aa}$ is a closed and convex cone
contained in $\Gamma^{*}$. Thus, almost atomic rank functions are
entropic.
 
\begin{prop}\label{prop:routingequivalence}
  Let $h$ be an atomic function such that there exists nonempty subset $\set{T}
  \subseteq \sessions \cup \edges$ and \eqref{eq:atomic} holds.  Then
  $\set{T}$ is a routing subnetwork (for network coding problem
  $\problem$) if and only if
\[
h \in  \set{C}_{\net}(\problem)\cap\set{C}_{\indep}(\problem).
\]
\end{prop}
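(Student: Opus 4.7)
The plan is to prove both directions by unpacking what the atomic function constraints $\set{C}_{\indep}$ and $\set{C}_{\net}$ say about the set $\set{T}$, exploiting the fact that $h$ only takes values $0$ and $1$.

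\textbf{Forward direction.} Suppose $\set{T}$ is a routing subnetwork. For $\set{C}_{\indep}$, the right-hand side is $\sum_{s\in\sessions} h(s) = |\set{T}\cap\sessions|$, which equals $1$ by condition (1) of Definition~\ref{df:routingsubnetwork}; the left-hand side is $h(\sessions) = 1$ as well since $\set{T}\cap\sessions \neq \emptyset$. For $\set{C}_{\net}$, I would consider an arbitrary $e\in\edges$ and split into two cases. If $e \notin \set{T}$, adjoining $e$ to $\incoming(e)$ cannot change whether the union meets $\set{T}$, so $h(e \mid \incoming(e))=0$. If $e \in \set{T}\cap\edges$, then condition (2) of Definition~\ref{df:routingsubnetwork} ensures $\incoming(e)\cap \set{T}\neq\emptyset$, giving $h(\incoming(e))=1=h(e\cup\incoming(e))$, and the conditional entropy vanishes.

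\textbf{Reverse direction.} Assume $h\in\set{C}_{\indep}\cap\set{C}_{\net}$. I would first establish condition~(1). Since $h$ is $\{0,1\}$-valued, $\set{C}_{\indep}$ forces
\[
|\set{T}\cap\sessions| = \sum_{s\in\sessions} h(s) = h(\sessions) \in \{0,1\},
\]
so $|\set{T}\cap\sessions|\leq 1$. To exclude $|\set{T}\cap\sessions|=0$, I would use acyclicity of $\graph$: if $\set{T}$ contained only hyperedges, pick $e^{*}\in\set{T}$ minimal in a topological ordering of the edges. Then no element of $\set{T}$ lies in $\incoming(e^{*})$ (no source, by hypothesis; no edge, by minimality), so $h(\incoming(e^{*}))=0$ while $h(\{e^{*}\}\cup \incoming(e^{*}))=1$, contradicting $h(e^{*}\mid\incoming(e^{*}))=0$. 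Together with nonemptiness of $\set{T}$, this gives $|\set{T}\cap\sessions|=1$, yielding condition~(1).

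Condition~(2) is immediate from $\set{C}_{\net}$: for any $e \in \set{T}\cap\edges$, since $e\in\set{T}$ we have $h(\{e\}\cup\incoming(e))=1$, so $h(\incoming(e))=1$ as well, i.e., $\incoming(e)\cap\set{T}\neq\emptyset$. The only nontrivial step in the whole argument is the topological ordering trick used to rule out $\set{T}\cap\sessions = \emptyset$; once acyclicity is invoked there, the remaining verifications are purely bookkeeping with the indicator structure of $h$.
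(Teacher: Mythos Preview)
Your proof is correct and follows essentially the same approach as the paper's: both directions proceed by the same case analysis on whether $e\in\set{T}$, and in the reverse direction both arguments use $\set{C}_{\indep}$ to bound $|\set{T}\cap\sessions|\le 1$ and then invoke acyclicity (together with the already-established condition~(2), or equivalently $\set{C}_{\net}$) to rule out $\set{T}\cap\sessions=\emptyset$. Your topological-minimum argument just makes explicit what the paper states tersely as ``since the network $\graph$ is acyclic and there are only finitely many links, there must exist at least one $s\in\sessions$ such that $s\in\set{T}$.''
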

\begin{proof}
We first prove the \emph{only-if} part.  
Let $\set{T}$ be a routing subnetwork. 
By definition, $|\set{T} \cap \sessions | =1 $. 
It can be verified directly from definition that
\[
h(\sessions) = \sum_{s\in\sessions}h(s).
\]
Hence, $h\in\set{C}_{\indep}(\problem)$.
It remains to prove that $h\in \set{C}_{\net}(\problem)$. 

For any $e\in\edges$, if $e\not\in\set{T}$, then 
$h(\incoming(e), e) = h(\incoming(e))$ by \eqref{eq:atomic}. 
On the other hand, suppose $e\in\set{T}$. 
 Again as $\set{T}$ is a routing subnetwork,   $\incoming(e) \cap \set{T}$ is nonempty.
Thus,  
\[h(\incoming(e), e) = 1 = h(\incoming(e)).\] Hence, $h(\incoming(e),
e) = h(\incoming(e))$ for all $e\in\edges$. Consequently, $h \in
\set{C}_{\net}$. The only-if part follows.

Now, we will prove the \emph{if}-part. 
 Suppose  $e \in \set{T}\cap\edges $. Then $h(e) =1$.  
As $h\in\set{C}_{\net}$,  
\begin{align}
 h( e, \incoming(e) ) = h(\incoming(e) )
\end{align}
and consequently $h(\incoming(e)) = 1$. By \eqref{eq:atomic},   
\begin{align}
\set{T} \cap \incoming(e) \neq \emptyset
\end{align}
and condition 2) of Definition~\ref{df:routingsubnetwork} is satisfied.

As $h\in\set{C}_{\indep}$, it can verified directly that $| \set{T}
\cap \sessions | \le 1 $.  Since the network $\graph$ is acyclic and
there are only finite number of links, there must exist at least one
$s\in\sessions$ such that $s\in\set{T}$. Consequently, $| \set{T} \cap
\sessions | =1 $. Hence condition 1) of
Definition~\ref{df:routingsubnetwork} is satisfied and the proposition
follows.
\end{proof}

\begin{thm}[Routing capacity]\label{thm:routing1}
  A rate-capacity tuple $(\lambda,\omega)$ is \zach\ subject to a
  routing constraint if and only if
\[
(\lambda,\omega) \in \major(\proj_{\problem} [  \Gamma_{\aa}  \cap \set{C}_{\net} \cap \set{C}_{\de} \cap\set{C}_{\indep} ]).
\]
\end{thm}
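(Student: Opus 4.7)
The plan is to establish a dictionary between routing solutions $\{\set{T}_i, c_i\}$ and conic combinations of atomic functions $h = \sum_i c_i h^i$, using Proposition \ref{prop:routingequivalence} as the bridge: it identifies exactly which atomic functions correspond to routing subnetworks. The decoding condition (R2) will correspond to $\set{C}_\de$, and (R1), (R3) will be read off from $\proj_\problem[h]$.

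For the only-if direction, take a routing-achievable $(\lambda,\omega)$ with witnesses $\{\set{T}_i, c_i\}$ satisfying (R1)--(R3), and set $h \define \sum_i c_i h^i$ where $h^i$ is the atomic function for $\set{T}_i$. Proposition \ref{prop:routingequivalence} gives $h^i \in \set{C}_\net \cap \set{C}_\indep$ for each $i$; since these sets are defined by linear equalities, they are preserved under conic combination, so $h$ inherits both memberships. To check $h \in \set{C}_\de$ I verify $h(s \cup \incoming(u)) - h(\incoming(u)) = 0$ summand-by-summand: the term vanishes automatically when $s \neq \nu(\set{T}_i)$, and when $s = \nu(\set{T}_i)$ condition (R2) supplies some $e \in \set{T}_i$ with $u \in \head{e}$, so $\incoming(u) \cap \set{T}_i \neq \emptyset$ and the term again vanishes. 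The projection then reads $\proj_\problem[h](s) = \sum_{i: \nu(\set{T}_i) = s} c_i = \lambda(s)$ by (R3) and $\proj_\problem[h](e) = \sum_{i: e \in \set{T}_i} c_i \le \omega(e)$ by (R1), which places $(\lambda,\omega)$ in $\major(\proj_\problem[h])$.

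For the if direction, fix $h \in \Gamma_\aa \cap \set{C}_\indep \cap \set{C}_\net \cap \set{C}_\de$ and write $h = \sum_i c_i h^i$ with $c_i > 0$ and each $h^i$ the atomic function of some nonempty $\set{T}_i \subseteq \sessions \cup \edges$. The main technical step is to show that each $h^i$ individually lies in all three constraint sets. Each constraint has the form $h(A \cup B) - h(A) = 0$ with $A \subseteq A \cup B$, and decomposes as $\sum_i c_i (h^i(A \cup B) - h^i(A)) = 0$; every summand is non-negative because atomic functions are monotone (if $A \subseteq A'$ then $A \cap \set{T}_i \subseteq A' \cap \set{T}_i$, so $h^i(A) \le h^i(A')$), so vanishing of the sum forces vanishing of each summand whenever $c_i > 0$. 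Proposition \ref{prop:routingequivalence} now declares each $\set{T}_i$ a routing subnetwork, and the per-$i$ $\set{C}_\de$ membership translates (by the same case analysis as above, reversed) into condition (R2) for every pair $(s, u)$ with $u \in \destinationLocation(s)$. Conditions (R1) and (R3) follow immediately from the projection formulas with equality, so $\proj_\problem[h]$ is 0-achievable by routing.

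Finally, to absorb the $\major$: the set of routing-achievable tuples is the image of the nonnegative orthant under a linear map parametrised by the finitely many subsets of $\sessions \cup \edges$ that form routing subnetworks, hence is a closed (finitely generated) convex cone; moreover it is downward-closed in $\lambda$ (shrink the $c_i$'s) and upward-closed in $\omega$ (slack in (R1)), so $\major$ applied to any subset of it remains inside it. The main technical hurdle is the ``nonnegative sum equals zero forces each term to zero'' step in the if direction, which hinges on monotonicity of atomic functions and is what allows the aggregate constraints on $h$ to be decomposed onto the individual summands $h^i$.
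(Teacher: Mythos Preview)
Your proof is essentially correct and follows the same approach as the paper: both directions use Proposition~\ref{prop:routingequivalence} as the bridge between routing subnetworks and atomic functions in $\set{C}_{\net}\cap\set{C}_{\indep}$, and the if-direction relies on the ``nonnegative sum equals zero forces each term to zero'' decomposition. Your handling of $\major$ via closed-cone and monotonicity properties is more explicit than the paper's one-line assertion, but amounts to the same thing.

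One small inaccuracy to fix: you write that ``each constraint has the form $h(A\cup B)-h(A)=0$,'' but $\set{C}_{\indep}$ is \emph{not} of this form --- it reads $\sum_{s}h(s)-h(\sessions)=0$. Your monotonicity argument therefore does not literally apply to it. The decomposition step still works, though: for atomic $h^{i}$ one has $\sum_{s}h^{i}(s)-h^{i}(\sessions)=|\set{T}_{i}\cap\sessions|-\mathbf{1}[\set{T}_{i}\cap\sessions\neq\emptyset]\ge 0$ (or invoke subadditivity of polymatroids, as the paper does), so the nonnegative-sum argument goes through unchanged. Just state this case separately rather than folding it into the $h(A\cup B)-h(A)$ template.
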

\begin{proof}
  We will first prove the \emph{only-if} part.  Suppose
  $(\lambda,\omega)$ is \zach\ subject to a routing constraint.  By
  Definition~\ref{df:routingadmissible}, there exists a collection of
  routing subnetworks $\set{T}_{i}$ and nonnegative real numbers
  $c_{i}$ such that conditions (R1) -- (R3) hold.
 
  By Proposition \ref{prop:routingequivalence},   each $\set{T}_{i}$ is associated with  an atomic rank function  
  $h^{i} \in  \set{C}_{\net} \cap \set{C}_{\indep}\cap  \Gamma_{\aa} $
  such that 
  \begin{align}
    h^{i}(\beta) = 
    \begin{cases}
      1 & \text{ if } \beta\cap \set{T}_{i} \neq \emptyset \\
      0 & \text{ otherwise. }
    \end{cases}\label{eq:5.74}
  \end{align}
  For all sink nodes $u\in\destinationLocation(\nu(\set{T}_{i})
  )$, (R2) implies that $h^{i}(\incoming(u))=1$. Hence, $1 =
  h^{i}(\incoming(u)) = h^{i} (\incoming(u), \nu(\set{T}_{i}) )$. 
On the other hand, if $s \neq \nu(\set{T}_{i})$, then $s\not\in \set{T}_{i}$ and 
$  h^{i}(\incoming(u)) = h^{i} (\incoming(u), \nu(\set{T}_{i}) ) $.  Consequently, $h^{i} \in \set{C}_{\de}$.
  
  Let $h = \sum_{i} c_{i}h^{i}$.
  Since 
  \[
  h^{i}\in\Gamma_{\aa}  \cap \set{C}_{\net} \cap  \set{C}_{\de}\cap\set{C}_{\indep}
  \]
  for all $i$, $h$ is also in $\Gamma_{\aa} \cap \set{C}_{\net} \cap
  \set{C}_{\de}\cap\set{C}_{\indep}$.  Finally, (R2) and (R3) imply
  that for any $s\in\sessions$
  \begin{align}
    \lambda({s}) & = \sum_{i: \nu(\set{T}_{i})=s} c_{i} \\
    & = \sum_{i} c_{i} h^{i}(s) \\
    &= h(s).
  \end{align}
  Similarly,  (R1) implies   $\omega({e}) \ge h(e)$ for all $e\in\edges$. 
  Thus, $(\lambda,\omega) \in \major(\proj_{\problem}[h])$ and the only-if part follows.

  Now, we will prove the \emph{if-part}.  It is easy to prove that if
  $(\lambda,\omega)$ is \zach\ subject to a routing constraint, then
  all tuples in $\major(\lambda,\omega)$ are also \zach. 
  Therefore, it is sufficient to prove that $\proj_{\problem}[h]$ is
  \zach\ subject to a routing constraint for all
  \[
  h \in \Gamma_{\aa}  \cap \set{C}_{\net} \cap \set{C}_{\de} \cap \set{C}_{\indep}.
  \]

  Since $h$ is almost atomic, there exist atomic functions 
  \begin{align}
    h^{i} (\beta) = 
    \begin{cases}
      1 & \text{ if } \beta\cap \set{T}_{i}  \neq \emptyset \\
      0 & \text{ otherwise. }
    \end{cases}\label{eq:5.78}
  \end{align}
  such that   $h = \sum_{i} c_{i}h^{i}$.
  
  As each $h^{i}$ is entropic (and hence polymatroidal) and $c_{i}$ is nonnegative for all $i$,  
  $$\sum_{i} c_{i}h^{i}\in\set{C}_{\net}\cap \set{C}_{\de}\cap \set{C}_{\indep}$$ implies that   
  \[
  h^{i}\in \set{C}_{\net}\cap \set{C}_{\de}\cap \set{C}_{\indep}, \quad \forall i .
  \]
  By  Proposition \ref{prop:routingequivalence}, each $\set{T}_{i}$ is in fact a routing subnetwork.  Also, $h^{i} \in \set{C}_{\de}$ implies that for any 
  $u \in \destinationLocation(\nu(\set{T}_{i}))$, 
  \[
  h(\incoming(u)) = h(\incoming(u) , \nu(\set{T}_{i})) = 1 .
  \]
  This implies $\incoming(u) \cap \set{T}_{i} \neq \emptyset$ and hence  (R2) is satisfied.
  
For any $s\in\sessions$, and $u\in\destinationLocation(s)$,
\begin{align}
  h(s) & = \sum_{i} c_{i}h^{i} (s) \\
  & \nequal{(i)} \sum_{i:   \nu(\set{T}_{i}) = s} c_{i}
\end{align}
where ($i$) follows from the fact that $h^{i}(s)=0$ if $\nu(\set{T})
\neq s$.  Hence (R3) is satisfied. Condition (R1) can also be proved
directly. The if-part is then proved.
\end{proof} 

Using a similar approach as in Theorem \ref{thm:routing1}, we can also
characterise the set of \zach\ rate-capacity tuples subject to the
generalised routing constraint.
\begin{thm}[Generalised routing capacity]\label{thm:grouting}
  Consider a network coding problem $\problem$.  A rate-capacity tuple
  $(\lambda,\omega)$ is 0-achievable subject to the generalised routing
  constraint of Definition~\ref{df:generalisedrouting} if and only if
  \[
  (\lambda,\omega) \in \major(\proj^{*}_{\problem} [  \Gamma_{\aa} \cap \set{C}_{\net} \cap \set{C}_{\indep} ] ).
  \]
  where
  \begin{align}
    \proj^{*}_{\problem}[h](s) &\defined  \min_{u\in\destinationLocation(s)} {h}( s \wedge \incoming(u)) \\
    \proj^{*}_{\problem}[h](e) &\defined \; h(e) .
  \end{align}
\end{thm}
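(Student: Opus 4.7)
The plan is to adapt the proof of Theorem~\ref{thm:routing1} \emph{mutatis mutandis}, with two modifications. First, the decodability constraint $\set{C}_{\de}$ is dropped. Second, each source rate is measured by the mutual information $h(s \wedge \incoming(u))$, minimised over $u \in \destinationLocation(s)$, rather than by $h(s)$ alone. The key identity, which drives both directions of the proof, is that for an atomic function $h^i$ associated with a routing subnetwork $\set{T}_i$, a routine case analysis using~\eqref{eq:atomic} yields
\begin{align*}
  h^i(s \wedge \incoming(u)) =
  \begin{cases}
    1 & \text{if } \nu(\set{T}_i) = s \text{ and } \incoming(u) \cap \set{T}_i \neq \emptyset, \\
    0 & \text{otherwise.}
  \end{cases}
\end{align*}
Combined with the linearity of the map $g \mapsto g(s \wedge \incoming(u)) = g(s) + g(\incoming(u)) - g(s \cup \incoming(u))$, this matches exactly the sum appearing in (R2$^\prime$).

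For the \emph{only-if} direction, suppose $(\lambda,\omega)$ is \zach\ subject to the generalised routing constraint, witnessed by routing subnetworks $\set{T}_i$ and capacities $c_i \ge 0$ satisfying (R1) and (R2$^\prime$). Let $h^i$ be the atomic function associated with $\set{T}_i$ and set $h \defined \sum_i c_i h^i$. By Proposition~\ref{prop:routingequivalence}, each $h^i \in \set{C}_{\net} \cap \set{C}_{\indep}$, hence $h \in \Gamma_{\aa} \cap \set{C}_{\net} \cap \set{C}_{\indep}$. Condition (R1) gives $\proj^{*}_{\problem}[h](e) = h(e) = \sum_{i: e \in \set{T}_i} c_i \le \omega(e)$, and the key identity together with (R2$^\prime$) gives $h(s \wedge \incoming(u)) \ge \lambda(s)$ for every $u \in \destinationLocation(s)$, so $\proj^{*}_{\problem}[h](s) \ge \lambda(s)$. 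Thus $(\lambda,\omega) \in \major(\proj^{*}_{\problem}[h])$.

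For the \emph{if} direction, take $h \in \Gamma_{\aa} \cap \set{C}_{\net} \cap \set{C}_{\indep}$ and write $h = \sum_i c_i h^i$ with $c_i > 0$ and each $h^i$ atomic. Because the defining identities of $\set{C}_{\net}$ and $\set{C}_{\indep}$ assert the vanishing of nonnegative polymatroidal quantities (namely $h(e \mid \incoming(e))$ and $\sum_{s} h(s) - h(\sessions)$, the latter being nonnegative by repeated subadditivity), and each $h^i$ is polymatroidal with nonnegative coefficient, every $h^i$ must satisfy these identities termwise; hence $h^i \in \set{C}_{\net} \cap \set{C}_{\indep}$ and, by Proposition~\ref{prop:routingequivalence}, each $\set{T}_i$ is a routing subnetwork. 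Condition (R1) is immediate from $\omega(e) \ge \proj^{*}_{\problem}[h](e) = h(e) = \sum_{i: e \in \set{T}_i} c_i$, and (R2$^\prime$) follows from $\lambda(s) \le \proj^{*}_{\problem}[h](s) \le h(s \wedge \incoming(u)) = \sum_{i: \nu(\set{T}_i) = s,\, \incoming(u) \cap \set{T}_i \neq \emptyset} c_i$ via the key identity. The main technical point, exactly as in Theorem~\ref{thm:routing1}, is the termwise preservation of the $\set{C}_{\net} \cap \set{C}_{\indep}$ constraints under nonnegative superposition; once that is in hand, the rest is bookkeeping driven by the atomic identity for $h^i(s \wedge \incoming(u))$.
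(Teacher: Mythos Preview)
Your proposal is correct and follows essentially the same approach as the paper's own proof: both directions hinge on the atomic identity $h^i(s \wedge \incoming(u)) = 1$ iff $\nu(\set{T}_i)=s$ and $\incoming(u)\cap\set{T}_i\neq\emptyset$, together with the termwise inheritance of the $\set{C}_{\net}\cap\set{C}_{\indep}$ constraints by each atomic summand (which the paper invokes by referring back to Theorem~\ref{thm:routing1}). Your explicit justification of that termwise inheritance via nonnegativity of the relevant polymatroidal quantities is exactly the mechanism the paper uses.
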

\begin{proof}
  Starting with the \emph{only-if} part, suppose $(\lambda,\omega)$ is
  \zach\ subject to the generalised routing constraint. By
  Definition~\ref{df:generalisedrouting}, there exists a collection of
  routing subnetworks $\set{T}_{i}$, and nonnegative constants $c_{i}$
  such that conditions (R1) and (R2$^{\prime}$) hold.  Each
  $\set{T}_{i}$ is associated with a rank function $h^{i} \in
  \Gamma_{\aa} \cap \set{C}_{\net}\cap\set{C}_{\indep}$ defined as in
  \eqref{eq:5.74}.  Let
  \[
  h = \sum_{i} c_{i} h^{i}. 
  \]
  Again, (R1) implies that
  \[
  \omega(e) \ge h(e), \quad \forall  e\in\edges.
  \]
  By (R2$^{\prime}$), for any $s\in\sessions$ and $u\in\destinationLocation(s)$,
  \begin{align}
    \lambda({s}) & \le \sum_{i: \incoming(u) \cap \set{T}_{i} \neq \emptyset \text{ and } s\in \set{T}_{i} } c_{i} \\
    & \nequal{(a)} \sum_{i } c_{i} {h^{i}}(s\wedge \incoming(u)) \\
    & =  {h}(s\wedge \incoming(u)) \label{eq:5.86}
  \end{align}
  where $(a)$   follows from that 
  \[
  {h^{i}}(s\wedge \incoming(u)) = 
  \begin{cases}
    1 & \text{ if } \incoming(u) \cap \set{T}_{i} \neq \emptyset \text{ and } s \in \set{T}_{i}\\
    0 & \text{ otherwise.}
  \end{cases}
  \]
  
  As \eqref{eq:5.86} holds for all $u\in\destinationLocation(s)$, 
  we have
  \begin{align}
    \lambda({s}) & \le  \proj^{*}_{\problem}[h](s). 
  \end{align}
  Thus, $(\lambda,\omega) \in \major(\proj^{*}_{\problem}[h])$ and the only-if part follows.
  
  Now, we will prove the \emph{if}-part. Let $h \in \Gamma_{\aa} \cap
  \set{C}_{\net} \cap \set{C}_{\indep}$ and $(\lambda,\omega) \in
  \major(\proj^{*}_{\problem}[h])$. As before, we can construct a
  collection of functions $h^{i}$, routing subnetworks $\set{T}_{i}$
  and positive constants $c_{i}$ such that $h = \sum_{i} c_{i}h^{i}$
  and \eqref{eq:5.78} holds.  By definition,
  \[
  \omega({e}) \ge h(e), \: \forall  e\in\edges.
  \]
  and for any $s\in\sessions$ and $u\in \destinationLocation(s)$,
  \begin{align}
    \lambda({s}) & \le  \proj^{*}_{\problem}[h](s) \\
    & \le  {h}(s\wedge \incoming(u)) \\
    & = \sum_{i: \incoming(u) \cap \set{T}_{i} \neq \emptyset \text{ and } s\in\set{T}_{i}  } c_{i}. 
  \end{align}
  Then both (R1) and (R2$^{\prime}$) are satisfied and the result follows.
\end{proof}

\subsection{Heterogeneous networks: Partial routing constraints}\label{sec:routingcoding}
In the previous section, we considered two varieties of routing
schemes defined by routing subnetworks. In those schemes, each
subnetwork is dedicated to sending a segment of data from a source to
its respective sinks. Intermediate network nodes can only perform
store-and-forward operations to forward the same data segment across a
subnetwork.  As only store-and-forward operations are performed, the
computational requirements for intermediate nodes are relatively low.
Despite this advantage, such routing-based schemes may suffer loss in
throughput, as evidenced by the now famous example of the butterfly
network~\cite{Ahlswede2000Network}. In some cases, this loss can be
significant.

In this section, we will consider more advanced schemes where some subsets of
the intermediate nodes have sufficient computational resources to
permit more sophisticated data processing in order to increase the
throughput. Thus the network now consists of two types of nodes:
routing nodes and network coding nodes. As demonstrated by the
butterfly network example, there are known instances where the maximum
possible throughput can in fact be achieved with only one network
coding node, and all other nodes performing routing.
 
The aim of this section is to extend out methodology to such
heterogeneous networks.  As a first step, we need to clarify the
concept of store-and-forward.  Figure \ref{fig:eg1} is a subnetwork of
$\graph$ such that the node $v$ is a ``routing node'' (where only
store-and-forward operation is allowed). The node has two incoming
links and one outgoing link.  Suppose that $v$ receives
$(b_{0},b_{1})$ from the incoming link $e_{1}$ and $(b_{2},b_{3})$
from link $e_{2}$. A natural question is: If $v$ can only perform
store-and-forward operations, which types of outgoing message it can
send?

\newfigure{eg1}{Partial routing contraint.}{fig:eg1}{1.2}

Naturally, we should allow the routing node $v$ to send $b_{i}$ for
any $i=0,1,2,3$, but not $b_{0} \oplus b_{3}$.  The question however
is if $v$ can send $b_{0} \oplus b_{1}$ which is a function of the
incoming message from $e_{1}$?

In this paper, we will assume that $v$, as a routing node, is in fact
permitted to perform intra-edge coding and send $b_{0} \oplus
b_{1}$. We do not allow inter-edge coding across different incoming
links.  Using this slightly generalised definition of routing, we can
once again use the tools developed earlier to characterise
\zach/\vach\ rate-capacity tuples for heterogeneous networks with
``partial routing'' constraints.


\begin{df}[Routing nodes]
  With respect to a network code $\{ Y_{f}, f\in\sessions\cup\edges
  \}$, an intermediate node $v$ is said to be a \emph{routing node} if
  for all outgoing links $e$ of $v$ (i.e., $\tail{e} = v$), there
  exist auxiliary random variables
  \[
  \{Y_{f,e},  f\in \incoming(e) \}
  \]
  such that 
  \begin{align}
    H\left(Y_{e} \mid Y_{f,e}, f\in \incoming(e)\right) & = H\left(Y_{f,e}, f\in \incoming(e)\mid Y_{e} \right) = 0 \label{eq:df:route1}\\
    H\left(Y_{f,e} \mid Y_{f}\right) & = 0  , \: \forall f\in \incoming(e).\label{eq:df:route2}
  \end{align}
  In other words, the outgoing message $Y_{e}$ is formed by a set $\{
  Y_{f,e}, f\in \incoming(e) \}$ such that each element $Y_{f,e}$ is a
  function of the incoming message $Y_{f}$ from the link $f$.
  \emph{Routing links} are defined as outgoing links from a routing
  node.  
\end{df}

Let $\routing \subseteq \edges $ be the set of all routing links. In
other words, $e\in \routing $ if and only if $\tail{e}$ is a routing
node. We refer to $\routing$ as a \emph{partial routing constraint}.
\begin{df}[Network code with partial routing constraints]
  A network code satisfying the partial routing constraint $\routing$ is a set of random variables 
  \[
  \{ Y_i,  i\in\sessions\cup\edges \}  \cup \{Y_{j,e}, e\in \routing , j\in \incoming(e) \}.
  \]
  such that $\{ Y_i, i\in\sessions\cup\edges \} $ is an ordinary
  network code according to Definition \ref{df:nc} and in addition,
  \eqref{eq:df:route1} and \eqref{eq:df:route2} hold for all
  $e\in\routing$. We refer to such a code as a $\routing$-network code
\end{df}

To go along with our definition of a network code with partial routing
constraints, we need to update our definition of fitness.
\begin{df}[Fitness of a network code with partial routing]
  A rate-capacity tuple $(\lambda,\omega)$ is \emph{fit} for a $\routing$-network
  code
  $\{ Y_i,  i\in\sessions\cup\edges \}  \cup \{Y_{j,e}, e\in \routing , j\in \incoming(e) \}$
  if
  \begin{align}
    \lambda(s) & \le \log |\support(Y_{s})|, \forall s\in\sessions \\
    \omega(e) & \ge \sum_{f\in \incoming(e)} \log |\support(Y_{f,e})|, \forall e\in 
    \routing 
    \label{eq:df:admiss:routing} \\
    \omega(e) & \ge \log |\support(Y_{e})|, \forall e \not\in\routing. 
  \end{align}
\end{df}

Note that we use \eqref{eq:df:admiss:routing} rather than \eqref{eq:6}
to highlight that the outgoing message $Y_{e}$ will not be jointly
compressed by the routing node.  The set of \zach\ and \vach\
rate-capacity tuples subject to a partial routing constraint
$\routing$ can be defined similar to Definitions \ref{df:admissible} and
\ref{df:achievable}.
 
Our approach for characterisation of the set of \zach\ or \vach\
rate-capacity tuples for $\routing$-codes is to transform the problem
with partial routing constraints into an equivalent unconstrained
problem $(\graph^{\dagger},\multicastRequirement)$.

Given a network coding problem $(\graph,\multicastRequirement)$ with
partial routing constraint $\routing$, define
$\graph^{\dagger}\defined (\nodes^{'}, \edges^{'})$ as follows
\begin{enumerate}
\item Add new nodes:   
  \[
  \nodes^{'} = \nodes \cup \{V_{[j,e]}, e\in\routing , j \in \incoming(e) \}.
  \]
\item Add new links:  
\[
\edges^{'} = \edges \cup \{ [j,e], e\in\routing , j \in \incoming(e)  \}
\]
such that  
\begin{align}
\head{[j,e]} & = \tail{e} \\
\tail{[j,e]} & = V_{[j,e]}.
\end{align}
\item Modifying existing link connections: 
For all $f\in\edges$, the set $\head{f}$  is modified as 
\begin{align*}
\left(\head{f} \setminus \{ \tail{e}: \:  e\in\routing, f\in \incoming(e) \} \right)  \cup \{ V_{[f,e]} :\:  e\in\routing, f\in \incoming(e) \}.
\end{align*}
In other words, if a link $f$ was directed to a routing node
$\tail{e}$ for some $e\in\routing$, it will be redirected to the newly
created node $V_{[f,e]}$.
\end{enumerate}

Figure \ref{fig:b} is an example illustrating how to modify a network
to remove the partial routing constraint. In this example, Figure
\ref{fig:b}(a) is one part of the network where $e\in\routing$ is a
routing link. Figure \ref{fig:b}(b) shows how that part of the network
is transformed. In the new network, we no longer impose any routing
constraint.
\begin{figure}[htbp]
\begin{center}
  \subfigure[Original network.]
 {\hspace{5mm}\includegraphics{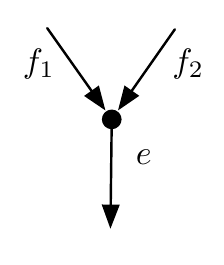}\hspace{5mm}}
 \subfigure[Modified network.]
{\includegraphics{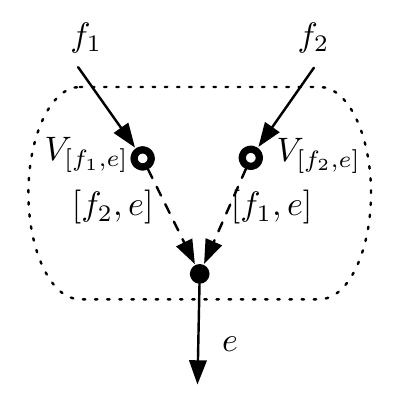}}
  \caption{Removing a routing contraint.}
  \label{fig:b}
\end{center}
\end{figure}
 
\begin{thm}[Network transformation]\label{thm:nettransform}
  A rate-capacity tuple $(\lambda,\omega)$ is \zach/\vach\ with
  respect to a network coding problem $(\graph,
  \multicastRequirement)$ subject to a partial routing constraint
  $\routing$ if and only if there exists a \zach/\vach\ rate-capacity
  tuple $( \lambda^{'} , \omega^{'})$ for the network coding problem
  $(\graph^{\dagger}, \multicastRequirement)$ where
  \begin{align}
    \lambda_{s} & = \lambda^{'}_{s}, \quad \forall s\in\sessions \\
    \omega_{e} & = \omega^{'}_{e}, \quad \forall e\in \edges \setminus \routing \\
    \omega_{e} & = \sum_{f\in \incoming(e)} \omega^{'}_{[f,e]}, \quad \forall e\in  \routing.\label{eq:omegaprime}
  \end{align}
\end{thm}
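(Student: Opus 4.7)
The plan is to prove both directions by essentially identifying the auxiliary variables $Y_{f,e}$ in a $\routing$-network code with the edge variables $Y^{'}_{[f,e]}$ on the new links created in $\graph^{\dagger}$. Intuitively, the transformation makes explicit, as separate edges, the per-incoming-edge components that a routing node is constrained to forward; conversely, every code on $\graph^{\dagger}$ automatically satisfies the partial routing constraint on $\graph$ because each newly created node $V_{[f,e]}$ has exactly one incoming edge.

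For the \emph{only-if} direction, I would start with a sequence of $\routing$-network codes
\[
\{Y^n_i, i\in\sessions\cup\edges\}\cup\{Y^n_{j,e}, e\in\routing, j\in\incoming(e)\}
\]
achieving $(\lambda,\omega)$ with normalising constants $c_n$. Define a network code on $\graph^{\dagger}$ by keeping $Y^{'n}_f \defined Y^n_f$ for $f\in\sessions\cup\edges$ and setting $Y^{'n}_{[j,e]}\defined Y^n_{j,e}$ for each new link. The source-independence and acyclicity conditions of Lemma \ref{lemma:networkcodedef} are inherited unchanged. For each new link $[j,e]$, \eqref{eq:df:route2} gives $H(Y^{'n}_{[j,e]}\mid Y^{'n}_j)=0$, which matches the fact that in $\graph^{\dagger}$ the node $\tail{[j,e]}=V_{[j,e]}$ has only $j$ as incoming edge. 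For the original link $e\in\routing$, \eqref{eq:df:route1} gives $H(Y^n_e\mid Y^n_{j,e}, j\in\incoming(e))=0$, matching the incoming structure of $e$ in $\graph^{\dagger}$. Sinks decode using exactly the same incoming edges as in $\graph$, so decodability is preserved. Finally, asymptotic fitness of $(\lambda^{'},\omega^{'})$ follows from \eqref{eq:df:admiss:routing} and \eqref{eq:omegaprime} because $\log|\support(Y^{'n}_{[j,e]})|=\log|\support(Y^n_{j,e})|$, so each $\omega^{'}_{[j,e]}$ can be chosen to match $c_n\log|\support(Y^n_{j,e})|$ and summing over $j\in\incoming(e)$ recovers the bound $\omega_e$.

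For the \emph{if} direction, I would go the other way: given a sequence of network codes $\{Y^{'n}_f\}$ on $\graph^{\dagger}$ achieving $(\lambda^{'},\omega^{'})$, define
\[
Y^n_i\defined Y^{'n}_i\ (i\in\sessions\cup\edges\setminus\routing),\quad Y^n_{j,e}\defined Y^{'n}_{[j,e]},\quad Y^n_e\defined Y^{'n}_e\ (e\in\routing).
\]
Because $V_{[j,e]}$ has $\{j\}$ as its only incoming edge in $\graph^{\dagger}$, the network code property forces $H(Y^n_{j,e}\mid Y^n_j)=0$, giving \eqref{eq:df:route2}. Moreover, in $\graph^{\dagger}$ the incoming edges of $e\in\routing$ are precisely $\{[j,e]:j\in\incoming(e)\}$, so $H(Y^n_e\mid Y^n_{j,e},j\in\incoming(e))=0$; conversely $Y^n_{j,e}$ is among the determining variables of $Y^n_e$, so $H(Y^n_{j,e}, j\in\incoming(e)\mid Y^n_e)=0$ holds trivially from the construction in which we treat the tuple as the encoded vector. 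Decodability at each sink carries over because $\incoming(u)$ is identical in both networks (the modification only affects routing nodes). Fitness: \eqref{eq:omegaprime} and subadditivity of $\log|\support(\cdot)|$ give $\sum_{j\in\incoming(e)}\log|\support(Y^n_{j,e})|\le \sum_j \omega^{'}_{[j,e]}/c_n = \omega_e/c_n$ in the limit, so $(\lambda,\omega)$ is asymptotically fit for the constructed $\routing$-network code.

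The main technical care is in ensuring that the equality $H(Y^n_{j,e},j\in\incoming(e)\mid Y^n_e)=0$ required by \eqref{eq:df:route1} really does hold under the reverse construction: this is where one must verify that the concatenation of the $Y^{'n}_{[j,e]}$ indeed determines $Y^{'n}_e$ and vice versa when we identify $Y^n_e$ with the tuple $(Y^n_{j,e}:j\in\incoming(e))$. In fact, since $Y^{'n}_e$ on $\graph^{\dagger}$ is a deterministic function of its incoming edges $\{Y^{'n}_{[j,e]}\}$, we may (and should) replace the single variable $Y^n_e$ by this tuple in the translated $\routing$-code; the fitness bound~\eqref{eq:df:admiss:routing} is designed precisely to accommodate this non-jointly-compressed representation, and the decodability condition \eqref{eq:decode} depends only on the joint distribution of $(Y^n_s, Y^n_{\incoming(u)})$ which is unaffected. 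With these identifications both directions go through, completing the proof.
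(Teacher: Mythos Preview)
Your proposal is correct and in fact considerably more detailed than the paper's own treatment, which consists solely of the phrase ``By direct verification.'' Your identification of the variables $Y^{'n}_{[j,e]}$ with the auxiliary routing variables $Y^n_{j,e}$ is exactly the intended mechanism, and your handling of the bidirectional constraint \eqref{eq:df:route1} in the if direction---by taking $Y^n_e$ to be the tuple $(Y^n_{j,e}:j\in\incoming(e))$ rather than the original $Y^{'n}_e$---is the right move, supported precisely by the fact that the theorem leaves $\omega'_e$ for $e\in\routing$ unconstrained and that the fitness condition \eqref{eq:df:admiss:routing} is stated in terms of the $Y_{j,e}$ rather than $Y_e$ itself.

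One small technical point you should be aware of: when a routing node $v=\tail{e}$ has several outgoing routing links $e,e',\dots$, the node $v$ in $\graph^\dagger$ receives \emph{all} of the new links $\{[j,e''] : j\in\incoming(v),\ e''\text{ outgoing from }v\}$, not just those indexed by the particular $e$. Thus $Y^{'n}_e$ may depend on $Y^{'n}_{[j,e'']}$ for $e''\neq e$, and similarly a downstream link $f$ with $e\in\incoming(f)$ may need more than the tuple $(Y^{'n}_{[j,e]}:j)$ to recover $Y^{'n}_e$. Your redefinition of $Y^n_e$ as the tuple handles the routing constraint at $e$ cleanly, but you should also check that downstream encodings (and decoders at sinks that happen to be routing nodes) remain consistent under this replacement; this goes through because every $Y^{'n}_{[j,e'']}$ is itself a function of $Y^{'n}_j$, so the routing node in $\graph$ can regenerate all of them from its actual inputs and the redefinition can be propagated in topological order. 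This is still ``direct verification,'' just with one more layer of bookkeeping than your write-up makes explicit.
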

\begin{proof}
  By direct verification.
\end{proof}
The construction of $\graph^{\dagger}$ together with relationships
between $\lambda/\omega$ and $\lambda'/\omega'$ remove the partial
routing constraint by making the capacity of the links $e\in\routing$
in $\graph^{\dagger}$ sufficiently large such that network coding is
never required at $\head{e}$, which has sufficient capacity to simply
forward all of the incoming messages. Also note that the choice of
$\omega'_{[f,e]}$ are free, apart from the
constraints~\eqref{eq:omegaprime}.

As a corollary of Theorem \ref{thm:nettransform}, all of the results
obtained in the earlier sections can also be applied to network coding
problems with partial routing constraints.

\def\wpattern{{\mathsf W}}
\section{Secure Network Codes}\label{sec:part4}
So far we have considered two classes of constraints on network
codes. In Section~\ref{sec:part2} we considered linear network codes,
which may be attractive for practical implementation. In
Section~\ref{sec:part3} we considered networks where some, or all of
the nodes are constrained to perform only store-and-forward types of
operations. 
Another important class of constraints to consider for network coding
are motivated by security considerations. The objective is to
determine the achievable network coding rates when we require secret
transmission that is impervious to  specified eavesdropping attacks. 

Assume that there are $|\set{R}|$ adversaries in the
network. Adversary $r\in\set{R}$ observes messages transmitted along
links in the set $\set{B}_r \subseteq \edges$ and aims to reconstruct
the set of sources indexed by $\set{A}_{r}\subseteq \sessions$.  We
refer to $\wpattern \defined \{(\set{A}_r, \set{B}_r), r\in\set{R}\}$
as the \emph{wiretapping pattern} of the network.

We will use the notation $\problem =
(\graph,\multicastRequirement,\wpattern)$ to denote the network coding
problem subject to a secrecy constraint, also referred to as a
\emph{secure network coding problem}. Here, $\graph$ and
$\multicastRequirement$ are as usual the network topology and the
connection constraint.
The secure communications objective is to transmit information over a
network satisfying the multicast requirements while simultaneously
ensuring that the eavesdroppers gain no information about their
desired sources.

Before we characterise the set of \zach/\vach\ rate-capacity tuples
for secure network coding, we need to point out a significant
difference between ordinary and secure network codes. Without a
secrecy constraint, the transmitted message on any network link $e$
can be assumed without loss of generality to be a function of the
source inputs and received messages available at the node $\tail{e}$.
However, when secrecy constraints are enforced, it is usually
necessary to encode messages \emph{stochastically} to prevent an
eavesdropper from learning any useful information about its desired
sources.
 
\begin{df}[Stochastic network code] \label{df:stochasticnc}
  A \emph{stochastic network code} is defined by a set of random
  variables
  \begin{align}\label{df:snc}
    \{ Y_{f}, f\in\sessions\cup\edges\cup\nodes \}
  \end{align}
  with entropy function $h$ such that $Y_{s}$ is uniformly distributed
  over its alphabet set for all $ s\in\sessions$ and
  \[ 
  h \in \set{C}_{\indep}(\problem) \cap \set{C}_{\net}(\problem) 
  \]
  where 

  \begin{align}
    \set{C}_{\indep} (\problem) 
   & \define \left\{ \begin{array}{l} \hspace{-0.2cm} h\in\set{H}[\sessions\cup\edges\cup\nodes] : \\ 
        \quad {h}({\set{S}} , \nodes )  = \sum_{s\in\sessions} {h}( s)
        + \sum_{u\in\nodes}
        h(u)  \end{array}\right\} 	  \label{eq:sconstraint1} \\
    \set{C}_{\net} (\problem) &\define 
    \left\{ 
      \begin{array}{l}
        \hspace{-0.2cm} h \in\set{H}[\sessions\cup\edges\cup\nodes] : \\ 
        \quad h\left( s\mid   \incoming(e) , \tail{e} \right)  = 0,   \forall  e\in\edges 
      \end{array}
    \right\}. \label{eq:sconstraint2} 
  \end{align}
\end{df}

In the definition, $\{Y_{s}, s\in\sessions\}$ and $\{Y_{e},
e\in\edges\}$ are again the set of random sources (indexed by
$s\in\sessions$) and the set of messages (transmitted on hyperedges
$e\in\edges$). The random variables $\{ Y_{u}, u\in\nodes \}$ can be
thought of as the \emph{random keys} available at nodes $u\in\nodes$
for stochastic encoding. Specifically, each link $e\in\edges$ is
associated with a local encoding function such that
\begin{align}
  Y_{e} = \phi_{e} (Y_{i}, i\in\incoming(e),Y_{\tail{e}}).
\end{align}
Clearly, we have 
\[
H(Y_{e} \mid Y_{i}, i\in\incoming(e), Y_{\tail{e}})
\]
and hence \eqref{eq:sconstraint2}. Furthermore, we want to point out
that the random keys $\{ Y_{u}, u\in\nodes \}$ are \emph{not} like the
usual secret keys that are privately shared between nodes in
Shannon-style secure communications. Instead, they are locally (and
hence independently) generated at each node.  In other words, there
are \emph{no} correlated or common keys shared privately between nodes
in advance.  Therefore, we will assume that $\{Y_{f} , f\in\sessions
\cup \nodes \}$ are mutually independent, and require
\eqref{eq:sconstraint1} to hold.
 
%

\subsection{Weak Secrecy}

\begin{df}[Weak secrecy]
  For a secure network coding problem
  $\problem=(\graph,\multicastRequirement,\wpattern)$, a rate-capacity
  tuple $(\lambda,\omega)$ is called \zach\ subject to a \emph{weak
    secrecy} constraint if there exists a sequence of stochastic
  network codes
  \[
  \networkCoding^{n} =
  \{\edgeRV_f^{n}: f \: \in \edges \cup \sessions\cup\nodes\}
  \]
  and positive
  normalising constants $c_n$ such that
  \begin{enumerate}
  \item[(S1)]  for all $e\in \edges$
    and $s\in\sessions$,
    \begin{align}
      \lim_{n\to\infty}  c_{n} \log |\support(\edgeRV_e^{n})|   & \le   \edgeRate(e), \\
      \lim_{n\to\infty} c_{n} \log |\support(\edgeRV_s^{n})|   & \ge
      \inputRate(s).
    \end{align}
  \item[(S2)] for $s\in\sessions$ and $u \in \destinationLocation(s)$,  
    \[
    H( Y_s^{n}\mid   \edgeRV_f^{n},  f \in \incoming(u) ) = 0.
    \]
 \item[(S3)] For all $r\in \set{R}$, 
\begin{align}
 \lim_{n\to\infty}  c_{n } {I(Y^{n}_{\set{A}_{r}} ; Y^{n}_{\set{B}_{r}})}   =0. 
\end{align}    
\end{enumerate}

Similarly, a rate capacity tuple $(\inputRate , \edgeRate ) $ is
called \vach\ subject to a weak secrecy constraint if there exists a
sequence of network codes
\[
\networkCoding^{n} =
\{\edgeRV_f^{n}: f \: \in \edges \cup \sessions\cup\nodes\}
\]
satisfying (S1) and (S3) and the following condition (S2$^{\prime}$):
  \begin{enumerate}
  \item[(S2$^{\prime}$)] for $s\in\sessions$ and $u \in
    \destinationLocation(s)$, there exists decoding functions
    $g_{s,u}^{n}$ such that
    \begin{align*}
      \lim_{n\to\infty} \Pr(Y_{s}^{n} \neq g_{s,u}^{n}(\edgeRV_f^{n}:
      f \in \incoming(u) )) = 0 .
    \end{align*}
  \end{enumerate}
\end{df}

\def\secure{{\sf S}}

The following theorem can be proved by using the same technique as in
Corollary \ref{cor:ourbd}. For brevity, we state the theorem without
proof.
\begin{thm}[Outer bound]\label{thm:mainsecure}
  Consider any secure network coding problem
  $\problem=(\graph,\multicastRequirement,\wpattern)$ subject to a
  weak secrecy constraint.  Let
\begin{align}
\set{C}_{\de} (\problem)& \define \left\{ 
	\begin{array}{l}
\hspace{-0.2cm} h \in\set{H}[\sessions\cup\edges\cup\nodes]: h\left( s\mid  \incoming(u) \right)  = 0, \\  
  	\hspace{2.5cm} \forall  s\in\sessions, u\in\destinationLocation(s) 
 	\end{array}
\right\}, \label{eq:sconstraint3} \\
\set{C}_{\sec} (\problem)& \define \left\{ h\in\set{H}[\sessions\cup\edges\cup\nodes] : h\left(\set{A}_{r} \wedge \set{B}_{r}\right) =0, \forall r\in\set{R} \right\}.\label{eq:sconstraint4}
\end{align}
If a rate-capacity tuple  $(\lambda, \omega) \in {\chi} (\problem)  $ is $\epsilon$-achievable,  then there exists 
\[
h\in  \bar \Gamma^{*}(\sessions\cup\edges\cup\nodes)  \cap \set{C}_{\indep}(\problem) \cap \set{C}_{\net}(\problem)  \cap \set{C}_{\de}(\problem)   \cap \set{C}_{\secure}(\problem) 
\]
such that
\begin{align}
\lambda(s) & \le h(s) , \: \forall s\in\sessions, \\
\omega(e) & \ge h(e), \: \forall e\in\edges.
\end{align}
Or equivalently, 
\[
(\lambda,\omega) \in \major(\proj_{\problem}[\bar \Gamma^{*}(\sessions\cup\edges\cup\nodes)  \cap \set{C}_{\indep}(\problem) \cap \set{C}_{\net}(\problem)  \cap \set{C}_{\de}(\problem)   \cap \set{C}_{\secure}(\problem)]).
\]
\end{thm}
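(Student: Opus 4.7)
The plan is to directly adapt the converse argument for Corollary \ref{cor:ourbd}, now carried out over the larger index set $\sessions\cup\edges\cup\nodes$ which accounts for the local random keys, and with the secrecy constraint handled in parallel with the other constraints.

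First I take a sequence of stochastic network codes $\networkCoding^n = \{Y_f^n : f\in\sessions\cup\edges\cup\nodes\}$ satisfying (S1), (S2$^{\prime}$), and (S3), together with positive normalising constants $c_n$. Each code induces an entropy function $h^n \in \Gamma^*(\sessions\cup\edges\cup\nodes)$. By Definition \ref{df:stochasticnc}, each $h^n$ already lies in $\set{C}_{\indep}(\problem) \cap \set{C}_{\net}(\problem)$ exactly, since the sources and random keys are mutually independent and each edge symbol is a deterministic function of its incoming messages and the local key.

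Next I form the scaled functions $f^n \defined c_n h^n$, each of which is weakly entropic. A standard compactness argument (using the fact that $f^n(e) \le c_n\log|\support(Y_e^n)|$ is asymptotically bounded by $\omega(e)$, and that submodularity then controls the remaining coordinates) lets me extract a convergent subsequence with limit $h \in \bar\Gamma^*(\sessions\cup\edges\cup\nodes)$. Since $\set{C}_{\indep}$ and $\set{C}_{\net}$ are defined by homogeneous linear equalities on $\set{H}[\sessions\cup\edges\cup\nodes]$ and are therefore closed, $h$ inherits membership in $\set{C}_{\indep} \cap \set{C}_{\net}$. For $\set{C}_{\de}$, Fano's inequality applied to (S2$^{\prime}$) yields
\[
H(Y_s^n \mid Y_f^n : f\in\incoming(u)) \le 1 + P_e^n \log|\support(Y_s^n)|
\]
with $P_e^n \to 0$; multiplying by $c_n$ and letting $n\to\infty$ forces $h(s\mid\incoming(u))=0$ for every $s\in\sessions$ and $u\in\destinationLocation(s)$. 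For $\set{C}_{\sec}$, condition (S3) is precisely $c_n h^n(\set{A}_r\wedge\set{B}_r) \to 0$, so $h(\set{A}_r\wedge\set{B}_r) = 0$ for all $r\in\set{R}$. Finally, (S1) gives $\lambda(s) \le h(s)$ and $\omega(e) \ge h(e)$, yielding the claimed rate-capacity inequalities.

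The main subtlety I expect is controlling the contribution of the random keys $Y_u^n$ when extracting a convergent subsequence: the capacities $\omega(e)$ bound edge messages but the keys carry no externally imposed rate constraint and could in principle have unbounded normalised entropy. The cleanest fix is to first reduce to stochastic codes whose per-node key rates are bounded by a function of $(\lambda,\omega)$ without affecting (S1), (S2$^{\prime}$), or (S3); this is possible because, once the joint distribution on $\sessions\cup\edges$ is fixed, any excess key randomness beyond that needed to realise the edge conditionals may be trimmed. After this reduction the remainder of the argument is essentially a line-by-line transcription of the converse for Corollary \ref{cor:ourbd}, with the extra secrecy coordinate $h(\set{A}_r\wedge\set{B}_r)$ tracked alongside the decoding coordinates $h(s\mid\incoming(u))$.
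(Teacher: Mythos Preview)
Your approach matches the paper's: the paper states the theorem without proof, remarking only that it ``can be proved by using the same technique as in Corollary~\ref{cor:ourbd}.'' Your proposal is precisely that adaptation---take the entropy functions of the achieving sequence, scale by $c_n$, pass to a limit in $\bar\Gamma^*$, and use Fano for $\set{C}_{\de}$ and (S3) directly for $\set{C}_{\sec}$.

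You have in fact been more careful than the paper by flagging the key-entropy issue: nothing in (S1)--(S3) bounds $c_n H(Y_u^n)$ for $u\in\nodes$, so a compactness argument on the full space $\set{H}[\sessions\cup\edges\cup\nodes]$ is not immediate. Your proposed fix (trim each local key to the randomness actually used to generate the outgoing edge symbols, which have bounded support after normalisation) is the natural one and does the job; alternatively one can note that only the values $h(\alpha)$ with $\alpha\subseteq\sessions\cup\edges$ enter the conclusion and the constraint sets $\set{C}_{\de},\set{C}_{\sec}$, so it suffices to control the key entropies well enough to certify $h\in\set{C}_{\indep}\cap\set{C}_{\net}$ in the limit. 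Either way, the paper's one-line reference to Corollary~\ref{cor:ourbd} sweeps this under the rug, and your identification of it is a genuine improvement in rigour rather than a departure in strategy.
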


The condition \eqref{eq:sconstraint3} corresponds to the decoding
constraint, requiring that any node $u\in\destinationLocation(s)$ can
decode the source $s$ with vanishingly small error. The condition
\eqref{eq:sconstraint4} is the secrecy constraint, ensuring that an
adversary can learn no information about the sources it is interested
in.

Unlike in Theorem \ref{thm:arbitrarymain}, we do not claim tightness
of the outer bound even for colocated sources. This is because secure
network nodes may locally generate random keys for the purpose of
stochastic encoding. These keys, to a certain extent, behave like
sources (with no corresponding sink nodes), and hence the colocated
source condition fails, even if the actual sources are colocated.

\subsection{Strong Secrecy}
Weak secrecy requires that the amount of information leakage vanishes
asymptotically after normalisation. In other words, the amount of
information leakage is negligible (when compared with the size of the
source messages). We can also consider a \emph{strong secrecy
  constraint}, where we require the information leakage to be exactly zero.
\begin{df}[Strong secrecy]
  A rate-capacity tuple $(\lambda,\omega)$ is \zach\ subject to a
  \emph{strong secrecy} constraint if there exists a sequence of network
  codes
  \[
  \networkCoding^{n} =
    \{\edgeRV_f^{n}: f \: \in \edges \cup \sessions\cup\nodes\}
    \]
     and positive
    normalising constants $c_n$ 
    satisfying  (S1), (S2) and the following condition 
  \begin{enumerate}
  \item[(S3$^{\prime}$)]  $I(Y^{n}_{\set{A}_{r}} ; Y^{n}_{\set{B}_{r}})=0$ for all $n$ and $r\in\set{R}$.   
 \end{enumerate}
 Similarly, it is \vach\ subject to a strong secrecy constraint if the
 sequence of codes satisfies (S1), (S2$^{\prime}$) and
 (S3$^{\prime}$).
\end{df}

In general, it is very hard to characterise the set of achievable
rate-capacity tuples subject to a strong secrecy constraint, even
implicitly via entropy functions. However, under the additional
constraint of linearity, the set of \zach\ rate-capacity tuples can in fact be
characterised implicitly via the use of representable functions.

\begin{df}[Strongly secure linear network codes]
  Let
  \begin{align} 
    \{\edgeRV_f: f \: \in \edges \cup \sessions \cup \nodes\}
  \end{align}
  be a stochastic network code (according to Definition~\ref{df:stochasticnc}) for a
  secure network coding problem $\problem$ on a network $\graph=(\set{V},\set{E})$.
  The code is called \emph{$q$-linear} (or simply linear) if it
  satisfies the following conditions:
  \begin{enumerate}
  \item For $f\in\sessions\cup\nodes$, $Y_{f}$ is a random row vector such that
    each of its entries is selected independently and uniformly over
    $\field(q)$.
\item For any $e\in\edges$, there exists a linear function $\networkcoding_e$ such that 
\[
Y_{e} = \networkcoding_e (Y_{\incoming(e)}, Y_{\tail{e}}).
\]
 \end{enumerate}
A network coding problem is said to be subject to a \emph{$q$-linearity
  constraint} if only $q$-linear network codes are allowed.
\end{df}

\begin{thm}[Strongly secure linear network codes]
  Consider a secure network coding problem $\problem$ where
  $|\sourceLocation(s)|=1$ for all
  $s\in\sessions$. 
  A rate-capacity tuple $(\lambda,\omega)$ is 0-achievable subject to
  $q$-linearity and strong secrecy if and only if
\[
(\lambda, \omega)  \in \major(  \proj_{\problem} [ \bar\Upsilon^{*}_{q} \cap \set{C}_{\indep} \cap \set{C}_{\net}  \cap \set{C}_{\de}   \cap \set{C}_{\secure} ]).
\]
\end{thm}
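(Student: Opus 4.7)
The plan is to mirror the proof of Theorem~\ref{thm:mainlinear}, extending each step to accommodate the enlarged index set $\sessions\cup\edges\cup\nodes$ (which now includes the node keys) and the additional secrecy constraint $\set{C}_{\secure}$. Crucially, for linear codes the secrecy condition translates cleanly: if $Y=Z\times G$ with $Z$ uniform over $\field(q)$ and $G_{\set{A}_r},G_{\set{B}_r}$ the global encoding kernels for the two subsets, then $I(Y_{\set{A}_r};Y_{\set{B}_r})$ equals the dimension of the intersection of the column spaces of the two kernels, which vanishes \emph{exactly} when the underlying representable rank function $h$ lies in $\set{C}_{\secure}$. Thus the strong secrecy requirement (S3$^\prime$) corresponds to an exact algebraic property of the representing subspaces, not merely an asymptotic one, so the linear setting is the natural home for strong secrecy.

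For the only-if direction, suppose $(\lambda,\omega)$ is 0-achievable by a sequence $\{Y_f^n:f\in\sessions\cup\edges\cup\nodes\}$ of $q$-linear strongly secure stochastic network codes with normalising constants $c_n$. Each induced entropy function $h^n$ is $q$-representable. Conditions (S1)--(S2) together with Definition~\ref{df:stochasticnc} place $h^n\in\set{C}_{\indep}\cap\set{C}_{\net}\cap\set{C}_{\de}$, while strong secrecy (S3$^\prime$) yields $I(Y_{\set{A}_r}^n;Y_{\set{B}_r}^n)=0$, hence $h^n(\set{A}_r\wedge\set{B}_r)=0$, i.e.\ $h^n\in\set{C}_{\secure}$. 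Since each constraint set is closed, the limit $h=\lim_{n\to\infty}c_nh^n$ lies in $\bar\Upsilon^{*}_{q}\cap\set{C}_{\indep}\cap\set{C}_{\net}\cap\set{C}_{\de}\cap\set{C}_{\secure}$, and $(\lambda,\omega)\in\major(\proj_{\problem}[h])$ follows from (S1).

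For the if direction, start with $h\in\bar\Upsilon^{*}_{q}\cap\set{C}_{\indep}\cap\set{C}_{\net}\cap\set{C}_{\de}\cap\set{C}_{\secure}$. I would first establish an analogue of Proposition~\ref{prop:5.5} producing weakly $q$-representable approximations $f^\ell\to h$ that \emph{exactly} preserve every functional dependency in $\set{C}_{\net}\cap\set{C}_{\de}$. I would then apply a refinement of Lemma~\ref{lemma:5.6}, using the $T_{\vA^\ell}$ operation on the representing subspaces to enforce, exactly and simultaneously, the independence requirements of $\set{C}_{\indep}$ (mutual independence of sources $\{Y_s,s\in\sessions\}$ and node keys $\{Y_u,u\in\nodes\}$) and of $\set{C}_{\secure}$ (trivial intersection of $\la\vV_i,i\in\set{A}_r\ra$ and $\la\vV_j,j\in\set{B}_r\ra$ for every $r$). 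By Lemma~\ref{lemma:5.4} the $T_{\vA^\ell}$ operation preserves all the earlier functional dependencies, and the resulting functions $g^\ell$ lie in $\Upsilon^{*}_{q}\cap\set{C}_{\indep}\cap\set{C}_{\net}\cap\set{C}_{\de}\cap\set{C}_{\secure}$. The construction of Proposition~\ref{prop:5.1} then delivers, for each $\ell$, a $q$-linear stochastic network code whose sources and node keys are uniform and mutually independent, whose decoding succeeds with zero error, and whose eavesdropper mutual information is exactly zero by the observation in the first paragraph.

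The principal obstacle is assembling a single subspace $\vA^\ell$ that kills all the offending intersections at once without itself accumulating too much dimension. Because each independence equality for $h$ ($\set{C}_{\indep}$) and each secrecy equality ($\set{C}_{\secure}$) is attained in the limit, the ``defect'' subspaces (playing the role of $\vW_{k,\alpha}^\ell$ in Proposition~\ref{prop:3.support} and of the $\vW_s$ in Lemma~\ref{lemma:5.6}) individually satisfy $c_\ell H(\vW^\ell)\to 0$, so their span $\vA^\ell$ does too, which secures $c_\ell g^\ell\to h$. The hypothesis $|\sourceLocation(s)|=1$ is used to associate each source $Y_s$ unambiguously with a single generating node when instantiating the linear encoders and the random keys at nodes; without this, the allocation of the uniform randomness to physical locations in the stochastic code becomes ambiguous. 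With these pieces in place, the theorem follows along the same lines as Theorem~\ref{thm:mainlinear}.
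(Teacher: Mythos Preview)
Your only-if direction is correct and matches the paper's.

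For the if direction, the paper takes a different route from the one you sketch, and the difference matters. The paper first invokes the machinery of Theorem~\ref{thm:mainlinear} (Proposition~\ref{prop:5.5} and Lemma~\ref{lemma:5.6}, extended to the enlarged index set $\sessions\cup\edges\cup\nodes$) to obtain $q$-representable $h^{n}\in\set{C}_{\indep}\cap\set{C}_{\net}\cap\set{C}_{\de}$ with $c_{n}h^{n}\to h$. It then enforces $\set{C}_{\secure}$ by a \emph{separate} construction: for each $s\in\set{A}_{r}$ it forms $\vW^{n}_{r,s}=\vV^{n}_{s}\cap\la\vV^{n}_{f},f\in\set{B}_{r}\ra$, sets $\vW^{n}_{s}=\la\vW^{n}_{r,s}:r\in\set{R},\ s\in\set{A}_{r}\ra$, \emph{shrinks} the source subspace to a complement $\vU^{n}_{s}$ of $\vW^{n}_{s}$ inside $\vV^{n}_{s}$, \emph{keeps every edge subspace unchanged} ($\vU^{n}_{e}=\vV^{n}_{e}$), and \emph{absorbs} the removed piece into the key at the unique source node: $\vU^{n}_{u}=\la\vV^{n}_{u},\vW^{n}_{s}:\sourceLocation(s)=u\ra$. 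This is precisely where the hypothesis $|\sourceLocation(s)|=1$ enters: it tells you \emph{which} node key receives $\vW^{n}_{s}$, and it guarantees that each $\vW^{n}_{s}$ is deposited at exactly one node so that $\set{C}_{\indep}$ survives. Your explanation of the role of this hypothesis (disambiguating where to instantiate encoders) is too vague by comparison.

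Your plan to handle $\set{C}_{\indep}$ and $\set{C}_{\secure}$ in one stroke via a global $T_{\vA^{\ell}}$ has a genuine gap. Quotienting by a subspace $\vA$ containing $\la\vV_{i},i\in\set{A}_{r}\ra\cap\la\vV_{j},j\in\set{B}_{r}\ra$ for a \emph{single} $r$ does kill that intersection, because then $\vA$ sits inside both spans; but once you take the span over all $r\in\set{R}$ (and over the defects needed for $\set{C}_{\indep}$), $\vA$ need no longer be contained in any one $\la\vV_{i},i\in\set{A}_{r}\ra$, and the projected intersection can be nonzero. Equivalently, after $T_{\vA}$ the relevant quantity is the conditional mutual information $h(\set{A}_{r}\wedge\set{B}_{r}\mid\vA)$, which is not forced to vanish merely because $\vA$ contains each unconditional intersection. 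So ``small total defect'' is necessary but not sufficient; you have not shown that a single $\vA^{\ell}$ achieving all the exact equalities exists. The paper sidesteps this entirely: it never uses $T_{\vA}$ for secrecy, and its shrink-and-absorb step modifies only sources and node keys (edges are untouched), which is why $\set{C}_{\net}$, $\set{C}_{\de}$, and the already-established $\set{C}_{\indep}$ are preserved while $\set{C}_{\secure}$ is achieved exactly.
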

\begin{proof}
  We first prove the \emph{only-if} part.  Suppose $(\lambda, \omega)$
  is 0-achievable subject to linearity and strong secrecy constraints.
  By definition, there exists a sequence of linear codes
\[
 \{ Y_{f}^{n}, f\in\sessions \cup\edges\cup\nodes  \}
 \]
 with entropy function $h^{n}$ and normalising constants $c_n>0$ such that
 (S1), (S2) and (S3$^{\prime}$) hold. 
By  (S2) and (S3$^{\prime}$), 
 \[
 h^{n} \in \Upsilon^{*}_{q} \cap \set{C}_{\indep} \cap \set{C}_{\net}  \cap \set{C}_{\de}   \cap \set{C}_{\secure}.
 \] 
And hence, $c_{n} h^{n}  \in \bar\Upsilon^{*}_{q} \cap \set{C}_{\indep} \cap \set{C}_{\net}  \cap \set{C}_{\de}   \cap \set{C}_{\secure}$.
Consequently, 
\[
(\lambda,\omega) \in \major(\proj_{\problem}[\bar\Upsilon^{*}_{q} \cap \set{C}_{\indep} \cap \set{C}_{\net}  \cap \set{C}_{\de}   \cap \set{C}_{\secure}]).
\]  
and the only-if part follows. 

Now let 
\[
h \in  \bar\Upsilon_{q}^{*} \cap \set{C}_{\indep} \cap \set{C}_{\net}  \cap \set{C}_{\de}   \cap \set{C}_{\secure} .
\]
To prove the \emph{if}-part, it suffices to prove that $\proj_{\problem}[h]$ is 0-achievable.
As in the proof of Theorem \ref{thm:mainlinear}, 
there exists a sequence of  $q$-representable functions
\begin{align} h^{n} & \in  \Upsilon_{q}^{*} \cap \set{C}_{\net}  \cap \set{C}_{\de} \cap \set{C}_{\indep}
\end{align}
 and positive scalars $c_{n}$ such that 
\begin{align}
h& = \lim_{i\to\infty} c_{n} h^{n} . 
\end{align} 
For each $n$, $h^{n}$ induces a zero-error linear network code 
\begin{align}\label{eq:149}
\{ Y^{n}_{i} , i\in\sessions\cup\edges\cup\nodes \}
\end{align}
such that 
\begin{align}
H\left(Y^{n}_{s}\right)   & = h^{n}(s),\: \forall s\in\sessions \\
H\left(Y^{n}_{e}\right)  & = h^{n}(e),\: \forall e\in\edges.
\end{align}
However, the linear network code \eqref{eq:149} need not be strongly secure (i.e., $h^{n} \in \set{C}_{\secure}$).
In the following, we will create from $h^{n}$ another representable function 
$g^{n}$ such that 
\begin{align}
\lim_{n\to\infty} c_{n} g^{n} & = \lim_{n\to\infty} c_{n} h^{n} = h \\
g^{n} & \in  \Upsilon_{q}^{*}\cap \set{C}_{\indep} \cap \set{C}_{\net}  \cap \set{C}_{\de}   \cap \set{C}_{\secure}.
\end{align}
 
Since $h^{n}$ is $q$-representable, there exists subspaces 
\[
\{ \vV^{n}_{i}, i\in\sessions\cup\edges\cup\nodes \}
\]
 such that for all $\alpha\subseteq \sessions\cup\edges\cup\nodes$, 
\[
h^{n}(\alpha) =\dim \la  \vV^{n}_{j}, j\in \alpha   \ra.  
\]
For each $r\in\set{R}$ and $s\in\set{A}_{r}$, we define
\begin{align}
\vW_{r,s}^{n} \defined \vV^{n}_{s} \cap \la \vV^{n}_{f}, f\in \set{B}_{r} \ra.
\end{align}
Then by direct verification, 
\[
H\left(\vW_{r,s}^{n} \right) = {h^{n}}(s \wedge \set{B}_{r}) 
\]
 and 
hence $\lim_{n\to\infty} c_{n} H\left(\vW_{r,s}^{n} \right) = 0$.
%

Let 
\[
\vW^{n}_{s} \defined \la \vW^{n}_{r,s}, r\in\set{R} \text{ and } s \in\set{A}_{r}  \ra.
\]
By Lemma \ref{lemma:5.1}, for every $s\in\sessions$, there exists a subspace  $\vU^{n}_{s}$  of $\vV_{s}^{n}$ such that 
\begin{align}
\dim \vV^{n}_{s} &= \dim \vU^{n}_{s} + \dim \vW^{n}_{s}, \\
 \{ {\bf 0} \} & =  \vU^{n}_{s} \cap  \vW^{n}_{s} .
\end{align}

For any $s\in\sessions$, let $\sourceLocation(s)$ be the unique source node where the $s^{th}$ source is available. Let
\begin{align}
\vU_{u}^{n} & = \la \vV^{n}_{u}, \vW^{n}_{s}, \sourceLocation(s)=u \ra , \quad \forall u\in\nodes \\
\vU^{n}_{e} & = \vV^{n}_{e}, \quad \forall e\in\edges.
\end{align}

On the other hand, 
\begin{align}
\vU_{s}^{n} \cap \la \vV^{n}_{f}, f\in \set{B}_{r} \ra & \subseteq \vV_{s}^{n} \cap \la \vV^{n}_{f}, f\in \set{B}_{r} \ra \\
& = \vW^{n}_{r,s}.
\end{align}
As 
\[
\vU_{s}^{n} \cap \vW^{n}_{r,s} = \{{\bf 0}\}, 
\]
we have
\[
\vU_{s}^{n} \cap \la \vV^{n}_{f}, f\in \set{B}_{r} \ra = \{{\bf 0}\}.
\]

Since $h^{n} \in \set{C}_{\indep}$, 
\begin{align}
\dim \la  \vV_{s}, s\in\sessions \ra = \sum_{s\in\sessions} \dim \vV_{s}.
\end{align}
As $\vU_{s} \subseteq \vV_{s}$ for $s\in\sessions$,
\begin{align}
\dim \la  \vU_{s}^{n}, s\in\sessions \ra = \sum_{s\in\sessions} \dim \vU_{s}^{n}.
\end{align}

Let $g^{n}$ be the representable function induced by 
\[
\{ \vU_{f}^{n} , f\in\sessions\cup\edges\cup\nodes \}.
\]
Then, it can be directly verified that  
%
%
\begin{enumerate}
\item
$g^{n} \in \Upsilon^{*}_{q}  \cap \set{C}_{\indep} \cap \set{C}_{\net}  \cap \set{C}_{\de} \cap  \set{C}_{\sec}$  where 
$g^{n}$ is the rank function induced by $\{Y^{n}_{f}, f\in \sessions\cup\edges \cup \nodes  \}$, and 

\item   $\lim_{n\to\infty } c_{n} g^{n} = h$.

\end{enumerate}
Consequently,  $\proj_{\problem}[g^{n}]$, and also $\proj_{\problem}[c_{n} g^{n}]$  and $\proj_{\problem}[h]$, are 0-achievable subject to the two constraints. 
\end{proof}

\subsection{Secret Sharing}
In secret sharing~\cite{Shamir1979How-to-share}, a secret is shared
among a set of users $\N$ where each user holds a component of the
secret.  The main objective is to ensure that only specified
legitimate subgroups of users (indexed by a subset $\A$ of $\N$) can
successfully decode the secret.  All other illegitimate subgroups of
users should receive no information about the secret. The collection
of all legitimate subsets $\Omega$ is called the \emph{access
  structure} of the secret sharing problem.
  
We can reformulate a secret sharing problem as a secure network coding
problem $\problem = (\graph, \multicastRequirement, \wpattern)$.  In
this secure network coding problem, there is only one source (the
secret) which is only available at the source node $u^{*}$.  There are
$|\N|$ intermediate nodes, each of which represents a user.  The
transmitted message an intermediate node (or a user) received from the
source corresponds to the component of the secret that it holds. There
are $|\Omega|$ sink nodes indexed by $\{ v_{\alpha}, \alpha \in \Omega
\}$.  The sink node $v_{\alpha} $ is connected to nodes (or users) $i
\in \alpha$ and aims to reconstruct the secret. We also assume that
each $\beta \not\in \Omega$, is associated with an eavesdropper who
can wiretap the set of edges $\{ e_{i} , i\in \beta \} $. The secrecy
constraint implies that all illegitimate subgroups of users have no
information about the secret.

Mathematically, the secure network coding problem is defined  as follows.
\begin{enumerate}
\item $\graph=(\nodes,\edges)$ where $\nodes=\{u^{*}\} \cup\N \cup \{ v_{\alpha}, \alpha \in \Omega \}$ and $\edges = \{e_{i}, f_{i}, i\in \N \}$;
\item for any $i\in\N$, $\tail{e_{i}} = u^{*}$,  $\head{e_{i}} = \{ i \}$, $\tail{f_{i}} = i$ and   $\head{f_{i}} = \{ v_{\alpha} : i \in \alpha \}$; 
\item $\multicastRequirement = (\sessions, \sourceLocation, \destinationLocation)$ where 
($i$) $\sessions = \{1\}$, ($ii$) $\sourceLocation(1) = \{u^{*}\}$ and ($iii$) $\destinationLocation(1) = \{v_{\alpha} , \alpha \in \Omega\}$;
\item $\wpattern = \{ ( 1 , e_{i} , i\in \beta )  : \: \beta \subseteq \N \text{ and } \beta \not\in \Omega \}$.
\end{enumerate} 
By translating a secret sharing problem to a secure network coding
problem, the results obtained in this paper can be applied to secret
sharing.

\section{Challenges in characterising achievable tuples}\label{sec:challenges}

Characterising the set of achievable rate-capacity tuples for a
network coding problem is generally very hard.  So far, there are only
a limited number of scenarios where the sets of \zach/\vach\
rate-capacity tuples have been explicitly determined. One scenario is
when there is only one source, $|\sessions|=1$ and no partial routing
constraint or secrecy constraints. In this case, the set of achievable
rate-capacity tuples is explicitly characterised by the cut-set
bound~\cite{CovTho91}.  If a secrecy constraint is additionally
imposed, the set of achievable tuples can still be determined if \emph{(i)}
all links have unit capacity and \emph{(ii)} the eavesdropper is
\emph{$t$-uniform} in the sense that an eavesdropper can wiretap any $t$
links in the network~\cite{Cai2011Secure,cai2002snc,CaiYeu07}.  In
both cases, linear codes are optimal.

It is natural to wonder whether there is any hope that wide classes of
network coding problems could have simple, explicit characterisations.
In the following two subsections, we will show that even in some very
simple scenarios, finding the set of achievable rate-capacity tuples
can be extremely hard. The first scenario will be an incremental
multicast. The second scenario is a secure multicast.

\subsection{Incremental Multicast}\label{sec:incr-mult}
In a \emph{incremental multicast} problem, sources are totally ordered
such that a receiver who wants to reconstruct source $s$ is also
required to reconstruct all other sources $i$ for $i < s$. {Here, the
  symbol $<$ is defined with respect to the total ordering of the
  sources}.  Incremental multicast is common in multimedia
transmission, where data such as video or audio may be encoded into
multiple layers. A layer can only be used for reconstruction at a
receiver if all its previous layers are also available. This leads
directly to an incremental multicast problem.

We will construct the simplest case of a incremental multicast problem
involving  two layers, colocated at the same source node. Hence, there
are two types of receivers: those which request source $1^{\prime}$, and
those which request both sources ($1^{\prime}$ and $2^{\prime}$).  Even for such a simple setup, we
will show that determining the set of achievable rate-capacity tuples
can be as hard as solving any network coding problem in general.

To prove our claim, we consider any ordinary network coding problem
$\problem = (\graph,\multicastRequirement)$, where sources may or may not be
colocated.  We will transform $\problem$ into a two-layer incremental
multicast problem $\problem^{\dagger} =
(\graph^{\dagger},\multicastRequirement^{\dagger})$ and prove in
Theorem \ref{thm:challenge1} that determining the set of achievable
tuples in the incremental multicast problem $\problem^{\dagger}$ is at
least as hard as determining the outer bound of Corollary
\ref{cor:ourbd} for the problem $\problem$.

The network $\graph^{\dagger}= (\nodes^{\dagger} , \edges^{\dagger})$ is
obtained from its subgraph $\graph$ by adding nodes and hyperedges
\begin{align*} 
  \nodes^{\dagger} &\defined \nodes \cup \{ \gamma_{s},
  \tau_{s,u}, \: s\in\sessions , u\in\destinationLocation(s) \} \cup
  \{\psi, \phi , \eta, \eta^{*}\}\\ \edges^{\dagger} &\defined \edges
  \cup \{a_{s}, b_{s}, c_{s}, s\in\sessions \} \cup \{ d_{s,u} ,
  s\in\sessions , u\in \destinationLocation(s) \}
\end{align*} with connections
\begin{align} \tail{a_{s}} &= \phi \\ \head{a_{s}} &=
  \sourceLocation(s) \cup \{\gamma_{s},\eta^{*}\}. \label{eq:im:as}\\
  \tail{b_{s}} &= \phi \\ \head{b_{s}} &= \{\gamma_{s}, \eta,\psi\} \cup
  \bigcup_{j\neq s} \{\tau_{j,v} , v\in \destinationLocation(j) \} \\
  \tail{c_{s}} &= \gamma_{s} \\ \head{c_{s}} &= \{\eta ,\eta^{*} \} \cup
  \{\tau_{s,v} , v\in \destinationLocation(s) \}\\ \tail{d_{s,u}} &=
  u \\
  \head{d_{s,u}} &= \tau_{s,u}, 
\end{align}
for all $s\in\sessions,
  u\in\destinationLocation(s)$.

In addition to augmenting the network, we also need to define the
connection requirement $\multicastRequirement^{\dagger} =
(\sessions^{\dagger}, \sourceLocation^{\dagger},
\destinationLocation^{\dagger})$.  In our two-layer incremental
multicast problem there are two sources indexed by
\[
\sessions^{\dagger}  \defined  \{1^{\prime}, 2^{\prime} \} .
\]
All the sources are colocated  at the node $\phi$, i.e.,
\begin{align*}
  \sourceLocation^{\dagger}(1^{'})  = & \: \sourceLocation^{\dagger}(2^{'}) = \phi.
\end{align*}
Finally, the destination location mapping $\destinationLocation^{\dagger}$ is defined as 
\begin{align*}
  \destinationLocation^{\dagger}(1^{'}) = & \:\{ \tau_{s,u}, s\in\sessions, u\in\destinationLocation(s) \} \cup \{ \eta,\eta^{*},\psi\} ,\\
  \destinationLocation^{\dagger}(2^{'}) = & \: \{ \eta, \eta^{*}\}.
\end{align*}
Figure \ref{fig:challenge1} exemplifies how to convert an ordinary
network coding problem $\problem$ (which has two sources) into a two
layers incremental multicast problem. Here, a source will be denoted
by a double circle, and a sink by an open square. The label beside a
source or a sink denotes the index of the sources which are available
or are required at the node. Note that in the figure, the sink nodes
$u$ and $u'$ in the original problem $\problem$ are no longer sink
nodes in the incremental multicast problem $\problem^{\dagger}$.
\def\scalefactor{0.7}
\begin{figure}[htbp]
  \begin{center}
   \subfigure[Network $\graph$]
    {\includegraphics*[scale=\scalefactor]{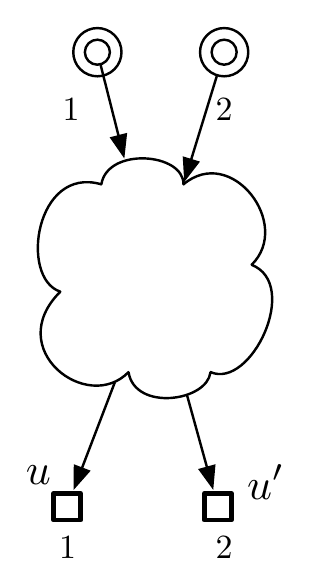}}
    \hspace{2mm}
    \subfigure[Network $\graph^{\dagger}$]
    {\includegraphics*[scale=\scalefactor]{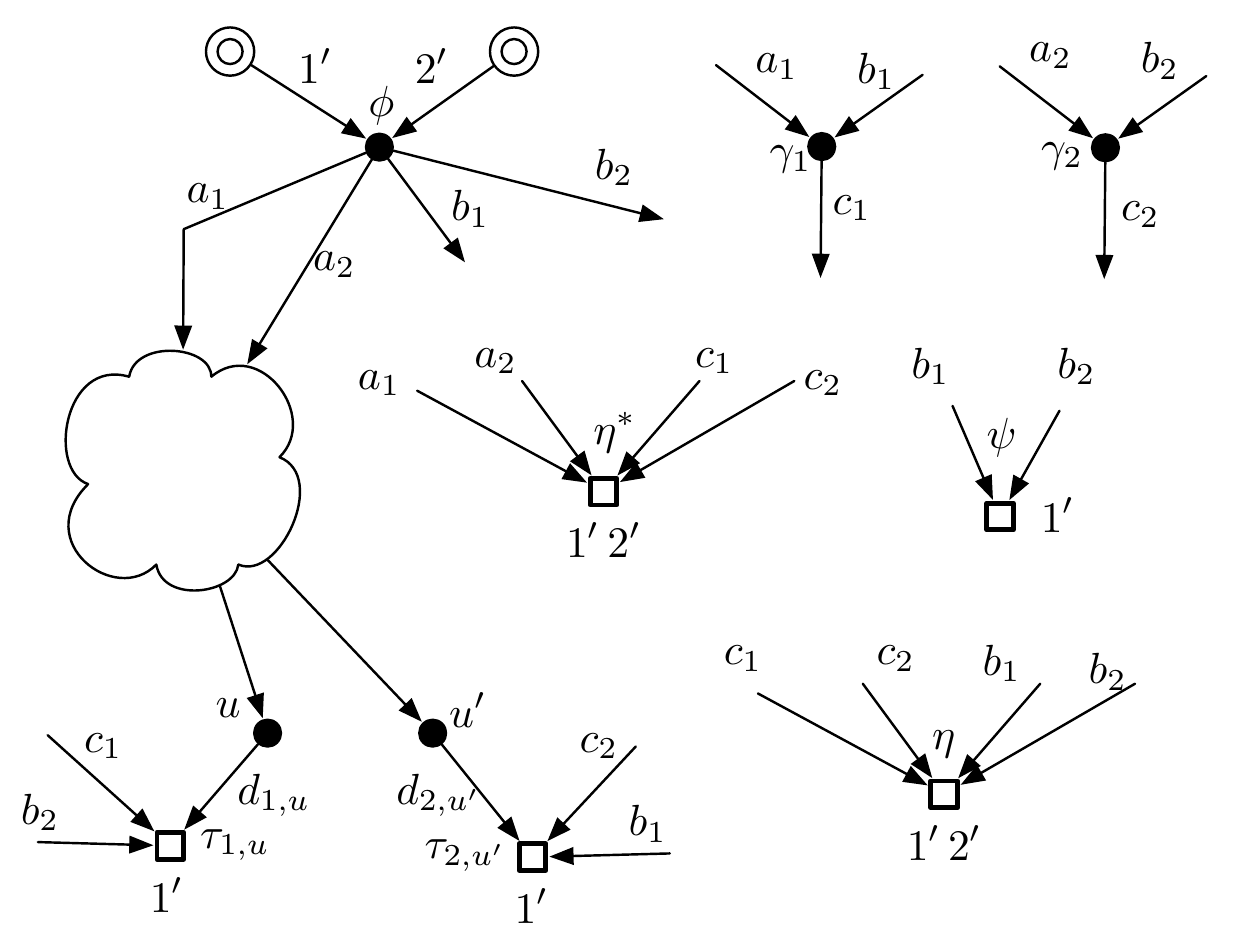}}
 \end{center}
  \caption{Transformation from $\graph$ to $\graph^\dagger$.}
  \label{fig:challenge1}
\end{figure}
Any rate-capacity tuple $(\lambda,\omega) \in {\chi} (\problem)$ for
$\problem$ induces another rate-capacity tuple
\[
 (\lambda^{\dagger},\omega^{\dagger}) \defined T^{\dagger}(\lambda,\omega) 
 \]
 in $ {\chi} (\problem^{\dagger})$ for $\problem^{\dagger}$
such that  
\begin{align}
\omega^{\dagger}(e) &= \omega(e), \label{eq:8A.166} \\
\omega^{\dagger}(a_{s}) = \omega^{\dagger}(b_{s}) =\omega^{\dagger}(c_{s}) & =\lambda(s), \label{eq:8A.167}\\
\omega^{\dagger}(d_{s,u}) &= \lambda(s), \label{eq:8A.168}\\
\lambda^{\dagger}(1^{'})= \lambda^{\dagger}(2^{'}) &= \sum_{s\in\sessions} \lambda(s).\label{eq:8A.169}
\end{align}
where $e\in\edges, s\in\sessions$ and $u\in\destinationLocation(s) $.


\begin{thm}\label{thm:challenge1}
Let $(\lambda,\omega)$ be a rate-capacity tuple in $\chi(\problem)$. Then the following two claims are valid. 
\begin{enumerate}
\item  If $ T^{\dagger}(\lambda,\omega) \in \major(  \proj_{\problem^{\dagger}} [ \bar \Gamma^{*}(\problem^{\dagger})  \cap \set{C}_{\indep}(\problem^{\dagger}) \cap \set{C}_{\net}(\problem^{\dagger})  \cap \set{C}_{\de}(\problem^{\dagger})  ]),
$ then
\[
(\lambda,\omega) \in  \major(  \proj_{\problem} [ \bar \Gamma^{*}(\problem)  \cap \set{C}_{\indep}(\problem) \cap \set{C}_{\net}(\problem)  \cap \set{C}_{\de}(\problem)  ]).
\]

\item  If a rate-capacity tuple $(\lambda,\omega)$ for $\problem$ is 0-achievable, then $ T^{\dagger}(\lambda,\omega)$  is 0-achievable with respect to $\problem^{\dagger}$.

\end{enumerate}
\end{thm}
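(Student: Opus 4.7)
For Claim 2, the plan is to explicitly construct a zero-error code for $\problem^\dagger$ by combining the original code for $\problem$ with an XOR-style masking. Given a zero-error code $\{Y_f:f\in\sessions\cup\edges\}$ for $\problem$, take $Y_{1'}=(Y_s)_{s\in\sessions}$ as the first source at $\phi$, let $Y_{2'}=(V_s)_{s\in\sessions}$ be independent with each $V_s$ uniform on $\support(Y_s)$, and transmit $Y_{a_s}=V_s$, $Y_{b_s}=Y_s$, $Y_{c_s}=Y_s+V_s$ (using any group operation on the alphabet). Run the original code on the $\graph$-subnetwork using the $V_s$'s as source inputs at $\sourceLocation(s)$ (which arrive via $a_s$), and let $u\in\destinationLocation(s)$ forward its decoded $V_s$ along $d_{s,u}$. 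I would then verify each sink type directly: $\psi$ recovers $Y_{1'}$ from $\{Y_{b_s}\}=\{Y_s\}$; $\eta$ combines $Y_{b_s}$ and $Y_{c_s}$ to obtain $V_s$ and hence both sources; $\eta^*$ combines $Y_{a_s}=V_s$ with $Y_{c_s}$ to obtain $Y_s$; and $\tau_{s,u}$ combines $Y_{d_{s,u}}=V_s$ with $Y_{c_s}$ to obtain $Y_s$, reading the other $Y_j$ off of $Y_{b_j}$. The induced rate-capacity tuple matches $T^\dagger(\lambda,\omega)$ by construction.

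For Claim 1, the plan is to restrict any feasible $h^\dagger$ in the outer-bound region of $\problem^\dagger$ to a feasible $h$ in that of $\problem$, via the identification $s\leftrightarrow a_s$. For $\alpha\subseteq\sessions\cup\edges$, let $\alpha^\star$ denote the subset of $\sessions^\dagger\cup\edges^\dagger$ obtained by replacing each $s\in\alpha\cap\sessions$ by $a_s$, and set $h(\alpha)\defined h^\dagger(\alpha^\star)$. Then $h\in\bar\Gamma^*(\problem)$ as a restriction of an almost-entropic function. The first key step is to pin down the structure of $h^\dagger$: matching the rate inequalities $h^\dagger(1'),h^\dagger(2')\ge\sum_s\lambda(s)$ against the decoding constraints at $\psi,\eta,\eta^*$ (via subadditivity of entropy and independence of $Y_{1'},Y_{2'}$) forces $h^\dagger(a_s)=h^\dagger(b_s)=\lambda(s)$, the mutual independence of $\{Y_{a_s},Y_{b_s}\}_{s\in\sessions}$, and the facts that $Y_{1'}$ is a function of $\{Y_{b_s}\}$ while $\{Y_{a_s}\}$ is a function of $Y_{2'}$ and independent of $Y_{1'}$. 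The independence of $\{Y_{a_s}\}$ gives $\set{C}_\indep(\problem)$ at once. For $\set{C}_\net(\problem)$, whenever $e\in\edges$ has $\tail{e}=v$, the incoming set of $e$ in $\graph^\dagger$ equals $\{a_s:v\in\sourceLocation(s)\}\cup(\incoming(e)\cap\edges)=\incoming(e)^\star$, so $h^\dagger(e\mid\incoming(e)^\star)=0$ follows directly from $\set{C}_\net(\problem^\dagger)$.

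The main obstacle is verifying $\set{C}_\de(\problem)$, which under the identification reduces to $h^\dagger(a_s\mid\incoming(u)^\star)=0$ for every $u\in\destinationLocation(s)$, where $\incoming(u)^\star$ coincides with the $\graph^\dagger$-incoming of $u$. My plan is to chain the $\problem^\dagger$-decoding constraint at $\tau_{s,u}$, namely $h^\dagger(1'\mid\{Y_{b_j}\}_{j\neq s},Y_{d_{s,u}},Y_{c_s})=0$, with the functional dependencies $Y_{d_{s,u}}=g_{s,u}(Y_{\incoming(u)^\star})$ from $u$'s encoder and $Y_{c_s}=g_s(Y_{a_s},Y_{b_s})$ from $\gamma_s$'s encoder. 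The tight entropy inequalities derived above force $g_s(\cdot,y_{b_s})$ to act as a bijection for each value $y_{b_s}$ (a Latin-square structure forced by the mutual independences), from which I expect to conclude that $Y_{a_s}$ is a deterministic function of $Y_{d_{s,u}}$, hence of $\incoming(u)^\star$. To handle arbitrary almost-entropic $h^\dagger$ rather than just the concrete Claim~2 construction, I would approximate $h^\dagger$ by weakly quasi-uniform functions using Proposition~\ref{prop:3.support}, apply the Latin-square argument to each approximant, and pass to the limit via closedness of $\bar\Gamma^*$ and of $\set{C}_\indep\cap\set{C}_\net\cap\set{C}_\de$.
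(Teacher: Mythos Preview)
Your construction for Claim~2 is correct and essentially the paper's (with the roles of $a_s/b_s$ and $1'/2'$ swapped). Your setup for Claim~1 via the identification $s\leftrightarrow a_s$, together with the derivations of $h^\dagger(a_s)=h^\dagger(b_s)=\lambda(s)$, the mutual independence of $\{a_s,b_s\}_{s\in\sessions}$, and membership in $\set{C}_\indep(\problem)\cap\set{C}_\net(\problem)$, also matches the paper.

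The gap is in your $\set{C}_\de$ argument. Your target conclusion, that $Y_{a_s}$ is a deterministic function of $Y_{d_{s,u}}$, is too strong and in general false: nothing forces the encoder at $u$ to put anything on $d_{s,u}$, since $\tau_{s,u}$ could already decode $1'$ from $\{b_j\}_{j\neq s}$ and $c_s$ alone (their joint entropy can equal $h^\dagger(1')$). The Latin-square structure of $g_s$ only yields $h^\dagger(a_s\mid b_s,c_s)=0$; it says nothing about $d_{s,u}$ determining $a_s$. Consequently the quasi-uniform approximation detour does not rescue the argument---the combinatorial step you would apply to each approximant is itself the broken one.

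The correct target is the weaker statement $h^\dagger(a_s\mid\incoming(u))=0$, and the paper reaches it by pure entropy manipulations on $h^\dagger$ with no approximation at all. From decoding at $\tau_{s,u}$ and $h^\dagger(d_{s,u}\mid\incoming(u))=0$ one gets $h^\dagger(1'\mid b_{\sessions\setminus s},c_s,\incoming(u))=0$; since $b_s$ is a function of $1'$ and (by your Latin-square observation, which entropically is just $h^\dagger(a_s,c_s)=2\lambda(s)\ge h^\dagger(a_s,b_s,c_s)$) $a_s$ is a function of $(b_s,c_s)$, this yields $h^\dagger(a_s\mid b_{\sessions\setminus s},c_s,\incoming(u))=0$. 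The remaining step---and the one your outline misses---is to strip $b_{\sessions\setminus s}$ and then $c_s$ from the conditioning. Both are removed by independence: the tight cut at $\eta$ gives $b_{\sessions\setminus s}$ independent of $(a_\sessions,b_s)$ and hence of $(a_\sessions,b_s,c_s,\incoming(u))$, and the tight cut at $\eta^*$ gives $c_\sessions$ independent of $a_\sessions$ and hence of $(a_s,\incoming(u))$. Replacing your combinatorial jump by these two conditional-independence reductions closes the argument and avoids Proposition~\ref{prop:3.support} entirely.
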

\begin{proof}
  See Appendix~\ref{app:challenge1}
\end{proof}


Using a similar arguments as in Theorem  \ref{thm:challenge1}, we can also prove the following theorem, whose proofs we omit for brevity.   
\begin{thm}\label{thm:challenge1linear}
Let $(\lambda,\omega)$ be a rate-capacity tuple in $\chi(\problem)$. Then the following two claims are valid.
\begin{enumerate}

\item   If $ T^{\dagger}(\lambda,\omega) \in \major(  \proj_{\problem^{\dagger}} [ \bar \Upsilon_{q}^{*}(\problem^{\dagger})  \cap \set{C}_{\indep}(\problem^{\dagger}) \cap \set{C}_{\net}(\problem^{\dagger})  \cap \set{C}_{\de}(\problem^{\dagger})  ]),
$ then
\[
(\lambda,\omega) \in  \major(  \proj_{\problem} [ \bar\Upsilon_{q}^{*}(\problem)  \cap \set{C}_{\indep}(\problem) \cap \set{C}_{\net}(\problem)  \cap \set{C}_{\de}(\problem)  ]).
\]

\item  If   $(\lambda,\omega)$ is 0-achievable with respect to  $\problem$  subject to the $q$-linearity constraint, then $ T^{\dagger}(\lambda,\omega)$  is also 0-achievable with respect to $\problem^{\dagger}$, subject to the $q$-linearity constraint.
\end{enumerate}
\end{thm}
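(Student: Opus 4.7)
The plan is to adapt the proof strategy of Theorem~\ref{thm:challenge1}, replacing every appeal to entropic structure with its representable analogue and carrying out all constructions within $\field(q)$-linear subspaces, much as Theorem~\ref{thm:mainlinear} adapted Theorem~\ref{thm:arbitrarymain}.

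\textbf{Claim~(2), achievability.} Given any $q$-linear zero-error code $\{Y_f: f\in\sessions\cup\edges\}$ for $\problem$ with source $Y_s$ a uniform length-$\lambda(s)$ row vector over $\field(q)$, I would build a $q$-linear code for $\problem^{\dagger}$ by splitting each super-source into per-session blocks and one-time-pad masking. Set
\[
Y_{1'}=(Y^{(1)}_s,s\in\sessions),\qquad Y_{2'}=(Y^{(2)}_s,s\in\sessions)
\]
with $Y^{(1)}_s,Y^{(2)}_s$ independent uniform length-$\lambda(s)$ row vectors over $\field(q)$, and define
\[
Y_{b_s}=Y^{(1)}_s,\quad Y_{c_s}=Y^{(2)}_s,\quad Y_{a_s}=Y^{(1)}_s+Y^{(2)}_s.
\]
Feed the received $Y_{a_s}$ into the role of $Y_s$ at $\sourceLocation(s)$, run the original $q$-linear code on $\graph$ to produce $Y_e$ for $e\in\edges$, and let $Y_{d_{s,u}}$ be $u$'s reproduction of $Y_{a_s}$ from that code. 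Decoding at the new sinks is immediate and $\field(q)$-linear: $\psi$ reads $(Y_{b_s})_s=Y_{1'}$; $\eta$ reads $(Y_{b_s},Y_{c_s})_s$; $\eta^{*}$ inverts $Y^{(1)}_s=Y_{a_s}-Y_{c_s}$; and $\tau_{s,u}$ recovers $Y^{(1)}_s=Y_{d_{s,u}}-Y_{c_s}$ together with $(Y_{b_j})_{j\neq s}$. All operations lie in $\field(q)$, so $T^{\dagger}(\lambda,\omega)$ is $0$-achievable for $\problem^{\dagger}$ by linear codes.

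\textbf{Claim~(1), outer bound.} Given $h\in\bar\Upsilon^{*}_{q}(\problem^{\dagger})\cap\set{C}_{\indep}(\problem^{\dagger})\cap\set{C}_{\net}(\problem^{\dagger})\cap\set{C}_{\de}(\problem^{\dagger})$ whose projection dominates $T^{\dagger}(\lambda,\omega)$, invoke Proposition~\ref{prop:5.5} to approximate $h$ by weakly $q$-representable $h^{\ell}$, each represented by subspaces $\{\vV^{\ell}_{f}:f\in\sessions^\dagger\cup\edges^\dagger\}$ over $\field(q)$ and preserving every functional dependency of $h$. Set
\[
\vU^{\ell}_s=\vV^{\ell}_{a_s}\ (s\in\sessions),\qquad \vU^{\ell}_e=\vV^{\ell}_e\ (e\in\edges),
\]
let $g^{\ell}$ be the representable function they induce on $\sessions\cup\edges$, and define $g=\lim_{\ell}c_{\ell} g^{\ell}\in\bar\Upsilon^{*}_{q}(\problem)$. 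For each $e\in\edges$, the identity $\incoming_{\problem^\dagger}(e)=(\incoming_{\problem}(e)\cap\edges)\cup\{a_j:j\in\incoming_{\problem}(e)\cap\sessions\}$ transfers $\set{C}_{\net}(\problem^{\dagger})$ directly into $\set{C}_{\net}(\problem)$. The decoding cut at $\eta^{*}$ gives $h(1',2'\mid\{a_s,c_s\}_s)=0$, and combining this with $\set{C}_{\indep}(\problem^{\dagger})$, the capacity bounds $h(a_s),h(c_s)\le\lambda(s)$, and $h(1')+h(2')\ge 2\sum_s\lambda(s)$ forces equality throughout: $h(a_s)=h(c_s)=\lambda(s)$, the subspaces $\{\vV^{\ell}_{a_s},\vV^{\ell}_{c_s}\}_s$ are jointly independent in the limit, and together these deliver $g(s)\ge\lambda(s)$ and $g\in\set{C}_{\indep}(\problem)$.

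\textbf{Main obstacle.} The genuinely delicate condition is $g\in\set{C}_{\de}(\problem)$, which demands $\vV^{\ell}_{a_s}\subseteq\la\vV^{\ell}_f:f\in\incoming_{\problem^\dagger}(u)\ra$ up to a vanishing-dimension correction for every $u\in\destinationLocation(s)$; this is not immediate because $u$ is not itself a sink in $\problem^{\dagger}$. My plan is to chain three ingredients: (i) the network constraint at $u$ applied to its new outgoing link $d_{s,u}$ yields $\vV^{\ell}_{d_{s,u}}\subseteq\la\vV^{\ell}_f:f\in\incoming_{\problem^\dagger}(u)\ra$; (ii) the decoding at $\tau_{s,u}$ places $\vV^{\ell}_{1'}$ inside $\la\vV^{\ell}_{b_j},j\neq s;\vV^{\ell}_{c_s};\vV^{\ell}_{d_{s,u}}\ra$; and (iii) the tight independence structure already established, combined with the network constraint at $\phi$, forces $\vV^{\ell}_{a_s}\subseteq\la\vV^{\ell}_{1'},\vV^{\ell}_{2'}\ra$. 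Applying the $T_{\vA}$-operator of Lemma~\ref{lemma:5.properties} with $\vA=\vV^{\ell}_{2'}$ and using the established independence of $\vV^{\ell}_{a_s}$ from $\vV^{\ell}_{2'}$ should then collapse (i)--(iii) into $\vV^{\ell}_{a_s}\subseteq\la\vV^{\ell}_{d_{s,u}},\vV^{\ell}_{b_j},\vV^{\ell}_{c_s}\ra$ modulo a $\vV^{\ell}_{2'}$-residual whose normalised dimension tends to zero. Making this projection and residual bookkeeping rigorous --- and absorbing the residual into $g$ in the same way Proposition~\ref{prop:5.5} absorbs the $\vW^{\ell}_{\Delta}$-term --- is the hard part of the argument; with that in hand, taking the limit $c_\ell g^{\ell}\to g$ delivers $\proj_{\problem}[g]\ge(\lambda,\omega)$ and completes Claim~(1).
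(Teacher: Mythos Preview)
The paper omits this proof entirely, stating only that it follows by ``similar arguments as in Theorem~\ref{thm:challenge1}''. Your construction for Claim~(2) is correct and is precisely the $\field(q)$-linearisation of the paper's construction in Appendix~\ref{app:challenge1} (which uses addition modulo $|\support(Y_s)|$); your assignment of the pair $(1',2')$ to $(b,c)$ rather than to $(b,a)$ is a harmless permutation and the verification goes through identically.

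For Claim~(1), your plan would work but is substantially more elaborate than what the paper intends. You treat the passage from $\bar\Gamma^{*}$ to $\bar\Upsilon^{*}_{q}$ as requiring a descent to explicit subspaces, an approximation via Proposition~\ref{prop:5.5}, and residual bookkeeping with the $T_{\vA}$-operator. But if you re-read the proof of Theorem~\ref{thm:challenge1} in Appendix~\ref{app:challenge1}, every step --- the cut arguments at $\eta$ and $\eta^{*}$, the equality-forcing in \eqref{eq:7.108}--\eqref{eq:8.187}, and the chain of conditional-rank identities culminating in $h^{\dagger}(a_{s}\mid\incoming(u))=0$ --- uses only the polymatroid axioms together with membership in $\set{C}_{\indep},\set{C}_{\net},\set{C}_{\de}$. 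Nothing is specific to entropy. Since every almost $q$-representable function is a polymatroid, that entire argument transfers verbatim with $\bar\Gamma^{*}$ replaced by $\bar\Upsilon^{*}_{q}$. The one extra observation needed is that the projection $g$ defined in \eqref{eq:8.174} inherits almost $q$-representability from $h^{\dagger}$; this is immediate, because restricting a representable function to a subset of its ground set (here identifying $s\in\sessions$ with $a_{s}$ and keeping $e\in\edges$) is still representable by the corresponding subspaces. In particular, the ``main obstacle'' you flag --- establishing $g\in\set{C}_{\de}(\problem)$ --- is already handled by the rank-function computation \eqref{f.190} onward, with no subspace-level work required. Your route would reach the same destination, but it reproves at the level of subspaces what the polymatroid calculus already gives for free.
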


\begin{cor}[Colocated sources] \label{cor:challenge1}
Suppose all the sources are colocated in  $\problem$. Then 
\begin{enumerate}
\item
$(\lambda,\omega)$ is 0-achievable with respect to $\problem$ if and only if $ T^{\dagger}(\lambda,\omega)$ is also 0-achievable with respect to $\problem^{\dagger}$. 

\item
$(\lambda,\omega)$ is 0-achievable with respect to $\problem$ subject to the $q$-linearity constraint if and only if $ T^{\dagger}(\lambda,\omega)$ is also 0-achievable with respect to $\problem^{\dagger}$ subject to the same linearity constraint. 
\end{enumerate}
\end{cor}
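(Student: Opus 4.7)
The plan is to deduce Corollary \ref{cor:challenge1} almost mechanically by chaining Theorems \ref{thm:challenge1} and \ref{thm:challenge1linear} with the tightness results from Sections \ref{sec:part1} and \ref{sec:part2}. The forward implication in each part is handed to us directly: if $(\lambda,\omega)$ is 0-achievable with respect to $\problem$ (possibly under a $q$-linearity constraint), then by part 2 of Theorem \ref{thm:challenge1} (resp.\ Theorem \ref{thm:challenge1linear}), $T^{\dagger}(\lambda,\omega)$ is 0-achievable with respect to $\problem^{\dagger}$ under the same constraint. So the substantive work is to prove the two reverse implications.

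For the reverse direction of the first claim, I would first observe that if $T^{\dagger}(\lambda,\omega)$ is 0-achievable with respect to $\problem^{\dagger}$, then it is also \vach, and Corollary \ref{cor:ourbd} applied to $\problem^{\dagger}$ yields
\[
T^{\dagger}(\lambda,\omega) \in \major\bigl(\proj_{\problem^{\dagger}}[\bar{\Gamma}^{*}(\problem^{\dagger}) \cap \set{C}_{\indep}(\problem^{\dagger}) \cap \set{C}_{\net}(\problem^{\dagger}) \cap \set{C}_{\de}(\problem^{\dagger})]\bigr).
\]
Invoking part 1 of Theorem \ref{thm:challenge1} then ``pulls this back'' along the construction $\problem \mapsto \problem^{\dagger}$ to give
\[
(\lambda,\omega) \in \major\bigl(\proj_{\problem}[\bar{\Gamma}^{*}(\problem) \cap \set{C}_{\indep}(\problem) \cap \set{C}_{\net}(\problem) \cap \set{C}_{\de}(\problem)]\bigr).
\]
At this point the colocated-sources hypothesis on $\problem$ becomes essential: it lets me invoke Theorem \ref{thm:arbitrarymain}, which asserts that the outer bound of Corollary \ref{cor:ourbd} is tight and coincides with the \zach\ region. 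Hence $(\lambda,\omega)$ is 0-achievable with respect to $\problem$.

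The reverse direction of the second claim follows the same template, with Corollary \ref{cor:ourbd} replaced by the necessary-condition half of Theorem \ref{thm:mainlinear}, the cone $\bar{\Gamma}^{*}$ replaced by $\bar{\Upsilon}^{*}_{q}$, and Theorem \ref{thm:challenge1} replaced by Theorem \ref{thm:challenge1linear}. The concluding step uses the achievability half of Theorem \ref{thm:mainlinear}, which characterises the $q$-linear \zach\ region by $\bar{\Upsilon}^{*}_{q} \cap \set{C}_{\indep} \cap \set{C}_{\net} \cap \set{C}_{\de}$ \emph{without} any colocated assumption on the sources. So for the linear case the colocated hypothesis is actually redundant; it is kept in the statement only for parallelism with the non-linear case, where the analogous step genuinely relies on Theorem \ref{thm:arbitrarymain}.

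The only step that requires any real substance is the appeal to Theorem \ref{thm:arbitrarymain} in the non-linear case, since all the remaining work is a routine concatenation of previously established implications. Consequently, I would not expect a significant obstacle in writing this proof beyond making the notation for pulling back rank functions from $\problem^{\dagger}$ to $\problem$ match the bookkeeping used in Theorems \ref{thm:challenge1} and \ref{thm:challenge1linear}, and verifying that the projection $\proj_{\problem^{\dagger}}$ restricted to the relevant coordinates lines up with $T^{\dagger}\circ\proj_{\problem}$ as required.
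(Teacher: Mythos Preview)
Your proposal is correct and matches the paper's approach exactly: the paper's proof is the single line ``A direct consequence of Theorems \ref{thm:arbitrarymain}, \ref{thm:mainlinear}, \ref{thm:challenge1} and \ref{thm:challenge1linear},'' and your argument is precisely the chain of implications that unpacks that sentence. Your side observation that the colocated hypothesis is unnecessary for part~2 (since Theorem~\ref{thm:mainlinear} already gives a tight characterisation without it) is also correct and worth noting.
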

\begin{proof}
A direct consequence of Theorems  \ref{thm:arbitrarymain}, \ref{thm:mainlinear}, \ref{thm:challenge1} and \ref{thm:challenge1linear}. 
\end{proof}

In \cite{Chan2008Dualities}, a specific network coding problem $\problem$ 
was proposed, such that all sources are colocated and that determining the set of $\epsilon$-achievable rate-capacity tuple is at
least as hard as determining the set of all information
inequalities. Furthermore, it was also proved that linear codes are  not  optimal\footnote{Linear codes are not optimal in the sense that there exists a 0-achievable rate-capacity tuple $(\lambda,\omega)$ for $\problem$ which is not achievable when subject to the additional linearity constraint.}. Therefore, by Corollary \ref{cor:challenge1}, 
we can directly prove the following proposition.

\begin{prop}\label{prop:challenge1}
There exists a two-layer incremental multicast network coding problem $\problem^{\dagger}$ such that 
\begin{enumerate}
\item
Characterising the set of achievable rate-capacity tuples for a two-layer incremental network is in general no simpler than determining the set of
all information inequalities. 

\item
Linear codes are not optimal. 
%
\end{enumerate}
\end{prop}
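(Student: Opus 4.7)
The plan is to directly leverage Corollary \ref{cor:challenge1} to lift the hardness and linear-suboptimality of a known problem $\problem$ from \cite{Chan2008Dualities} to the two-layer incremental multicast problem $\problem^{\dagger}$ obtained via the transformation $T^{\dagger}$ defined in equations \eqref{eq:8A.166}--\eqref{eq:8A.169}. Specifically, I would start by invoking the construction from \cite{Chan2008Dualities}: there exists a network coding problem $\problem$ with all sources colocated such that (a) characterising $\major(\proj_{\problem}[\bar\Gamma^{*}(\problem)\cap \set{C}_{\indep}(\problem)\cap \set{C}_{\net}(\problem)\cap \set{C}_{\de}(\problem)])$ is as hard as determining all information inequalities, and (b) there is a rate-capacity tuple $(\lambda^{\circ},\omega^{\circ})$ which is \zach\ for $\problem$ but is not \zach\ subject to a $q$-linearity constraint.

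For part (1), I would argue by contraposition in the following reduction sense: suppose there were an efficient explicit characterisation of the set of \zach\ rate-capacity tuples for $\problem^{\dagger}$. Given any candidate tuple $(\lambda,\omega)\in\chi(\problem)$, one can in polynomial time form $T^{\dagger}(\lambda,\omega)\in\chi(\problem^{\dagger})$ from \eqref{eq:8A.166}--\eqref{eq:8A.169}. By Corollary \ref{cor:challenge1}(1), $(\lambda,\omega)$ is \zach\ for $\problem$ if and only if $T^{\dagger}(\lambda,\omega)$ is \zach\ for $\problem^{\dagger}$. Consequently, such a characterisation would yield a characterisation of \zach\ tuples for $\problem$, and in turn of the outer-bound region in Corollary \ref{cor:ourbd} applied to $\problem$, which by \cite{Chan2008Dualities} is at least as hard as characterising the entire set of information inequalities.

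For part (2), I would take the specific tuple $(\lambda^{\circ},\omega^{\circ})$ from \cite{Chan2008Dualities} that is \zach\ for $\problem$ but not linearly \zach. By Corollary \ref{cor:challenge1}(1), $T^{\dagger}(\lambda^{\circ},\omega^{\circ})$ is \zach\ for $\problem^{\dagger}$. If linear codes were optimal for $\problem^{\dagger}$, then $T^{\dagger}(\lambda^{\circ},\omega^{\circ})$ would also be \zach\ under a $q$-linearity constraint, and by Corollary \ref{cor:challenge1}(2) this would imply $(\lambda^{\circ},\omega^{\circ})$ is linearly \zach\ for $\problem$, contradicting the choice of $(\lambda^{\circ},\omega^{\circ})$. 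Hence linear codes cannot be optimal for $\problem^{\dagger}$.

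The whole proof is essentially a transport of hardness through the bijection established in Corollary \ref{cor:challenge1}, and the main obstacle is already absorbed into that corollary (and into Theorems \ref{thm:challenge1}--\ref{thm:challenge1linear} underlying it) — namely verifying that $T^{\dagger}$ preserves both \zach\ and linear-\zach\ status under the colocated-source assumption. Since \cite{Chan2008Dualities} already supplies a witness problem $\problem$ satisfying both the hardness and the linear-suboptimality properties with colocated sources, no further technical work beyond citing these two inputs and applying the corollary is needed.
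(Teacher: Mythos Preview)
Your proposal is correct and follows essentially the same approach as the paper: take the colocated-source problem $\problem$ from \cite{Chan2008Dualities} with its hardness and linear-suboptimality properties, and transport both through the equivalence of Corollary~\ref{cor:challenge1} to the derived two-layer incremental multicast $\problem^{\dagger}$. The paper presents this as an immediate consequence with no further detail, and your expanded contraposition/contradiction arguments simply make that transport explicit.
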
  

\subsection{Secure Multicast}\label{sec:secure-multicast}
In this subsection, we consider another scenario, very simple secure
network coding problem with only one source. We will again show that
solving the resulting secure network coding problem can be as hard as
solving a general multi-source unconstrained network coding problem.
Our approach is essentially the same as that used in the previous
subsection for the incremental multicast. We will construct a simple
single-source secure network coding problem
$\problem^{\ddagger}=(\graph^{\ddagger} ,
\multicastRequirement^{\ddagger}, \wpattern)$ from an ordinary network
coding problem $\problem=(\graph,\multicastRequirement)$.  We will
then show that solving the so-constructed secure network coding
problem is as hard as solving the original network coding problem.

Construct the network $\graph^{\ddagger}$ in $\problem^{\ddagger}$
by adding nodes and hyperedges
\begin{align*}
  \nodes^{\ddagger} &\defined \nodes \cup \{ \psi_{s}, \gamma_{s},
  \theta_{s,u}, \tau_{s,u}, \: s\in\sessions ,
  u\in\destinationLocation(s) \} \cup \{ \phi , \eta\} \\
    \edges^{\ddagger} &\defined \edges \cup \{a_{s}, b_{s}, c_{s},
    e_{s}, s\in\sessions \}  \cup \{ d_{s,u} , w_{s,u},
    s\in\sessions , u\in \destinationLocation(s) \}
\end{align*}
with link connections:
\begin{align*}
   \tail{a_{s}}  &=  \phi  \\ 
   \head{a_{s}}  &=  \sourceLocation(s) \cup \{\eta \} \cup \{\gamma_{s}\}\\
   \tail{b_{s}}  &=  \phi  \\
    \head{b_{s}}  &=  \{\gamma_{s}, \eta\} \cup \bigcup_{j\neq s}
    \left\{\theta_{j,i} , i\in \destinationLocation(j) \right\}\\
   \tail{c_{s}}  &=  \gamma_{s}  \\ 
\head{c_{s}}  &=  \{\psi_{s} \} \\
    \tail{d_{s,u}}  &=  u  \\ 
\head{d_{s,u}}  &=  \tau_{s,u} \\
   \tail{e_{s}}  &=  \phi  \\
    \head{e_{s}}  &=  \{\psi_{s} \} \cup \{\tau_{s,i} , i\in \destinationLocation(s) \}.
\end{align*}
  for all $s\in\sessions, u\in\destinationLocation(s)$.

The connection requirement  $\multicastRequirement^{\ddagger} = (\sessions^{\ddagger}, \sourceLocation^{\ddagger}, \destinationLocation^{\ddagger})$ is defined as follows:  
\begin{align*}
\sourceLocation(1^{'}) & \defined \phi \\
\destinationLocation(1^{'}) &\defined \{ \tau_{s,u}, s\in\sessions, u\in\destinationLocation(s) \} \cup \{\psi_{s},s\in\sessions\} \cup \{ \eta\}	\\
\wpattern & \defined \{ (\set{A}_{1}, \set{B}_{1}) , (\set{A}_{2}, \set{B}_{2})  \}
\end{align*}
where 
\begin{align*}
\set{A}_{1} &= \set{A}_{2} = 1^{\prime} \\
\set{B}_{1} & = \{ a_{s}, s\in\sessions \} \\
\set{B}_{2} & = \{ b_{s}, s\in\sessions \}. 
\end{align*}

Figure \ref{fig:thenetwork} exemplifies how to convert an
unconstrained network coding problem $\problem$ into a single-source
secure network coding problem.
\begin{figure}[htbp]
  \begin{center}
   \subfigure[Network $\graph$]
    {\includegraphics*[scale=\scalefactor]{incrementaloriginal}}
    \hspace{2mm}
    \subfigure[Network $\graph^{\ddagger}$]
    {\includegraphics*[scale=\scalefactor]{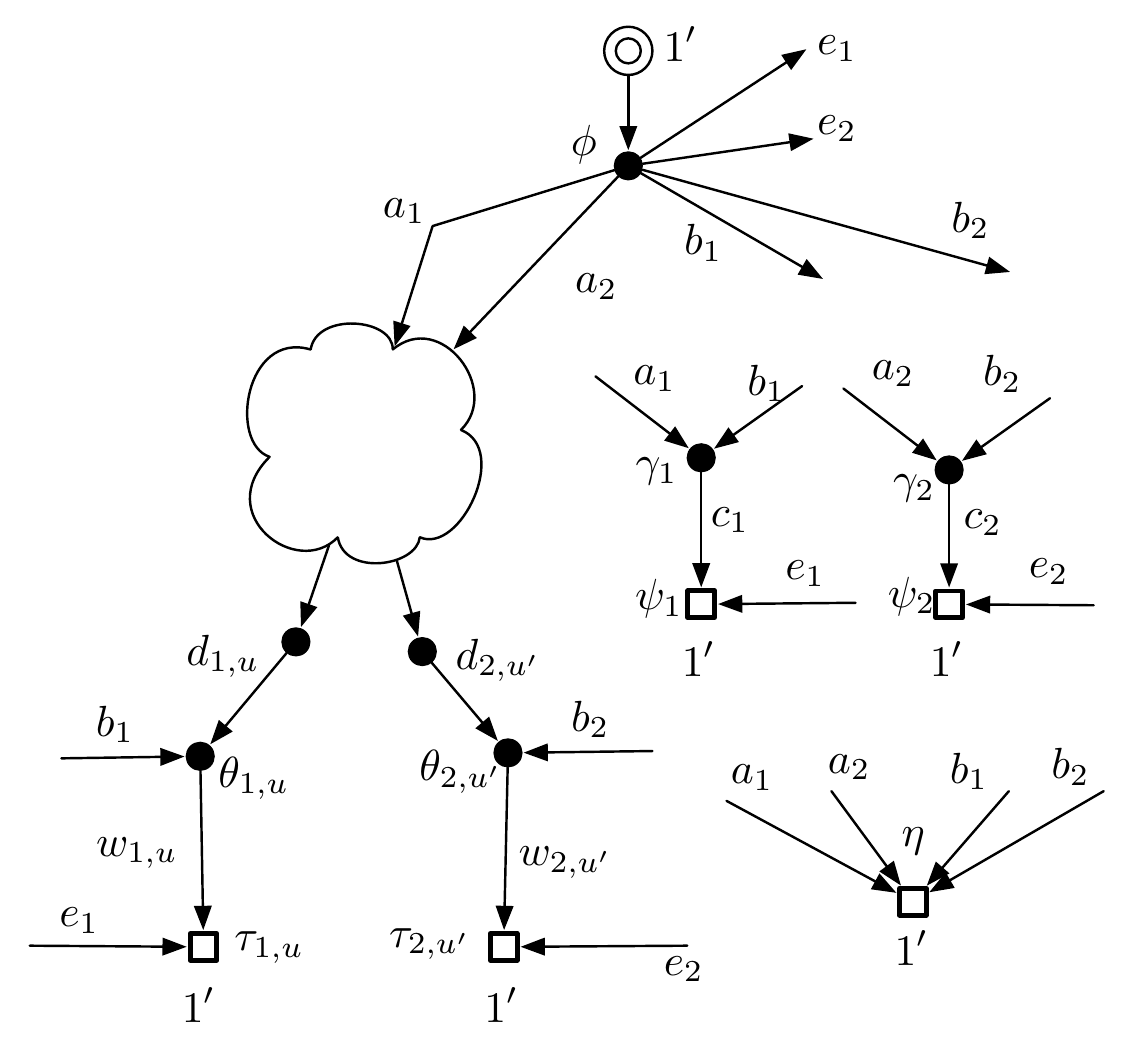}}

 \end{center}
  \caption{Transformation from $\graph$ to  $\graph^\ddagger$.}
  \label{fig:thenetwork}
\end{figure}

As before, for any rate-capacity tuple $(\lambda,\omega) \in {\chi}
(\problem)$, we define a tuple $(\lambda^{\ddagger},\omega^{\ddagger})
\defined T^{\ddagger}(\lambda,\omega) \in {\chi}
(\problem^{\ddagger})$ as follows:
\begin{align}
\omega^{\ddagger}(e) &= \omega(e), \label{195}\\
\omega^{\ddagger}(a_{s}) = \omega^{\ddagger}(b_{s}) =\omega^{\ddagger}(c_{s}) & =\lambda(s),   \\
\omega^{\ddagger}(d_{s,u}) &= \lambda(s),     \\
\omega^{\ddagger}(e_{s}) & = \sum_{i\in\sessions \setminus \{s\}} \lambda(i) \\
\lambda^{\ddagger}(1^{'})  &= \sum_{s\in\sessions} \lambda(s)
\end{align}
for all $e\in\edges, s\in\sessions$ and $u\in \destinationLocation(s)$.

\begin{thm}\label{thm:challenge2}
Let $(\lambda,\omega)$ be a rate-capacity tuple in $\chi(\problem)$. Then the following two claims are valid. 
\begin{enumerate}
\item  If 
$T^{\ddagger}(\lambda,\omega) \in \major(  \proj  (h^{\ddagger}) ) $ for some 
\[
h^{\ddagger } \in \bar \Gamma^{*}(\problem^{\ddagger})  \cap \set{C}_{\indep}(\problem^{\ddagger}) \cap \set{C}_{\net}(\problem^{\ddagger})  \cap \set{C}_{\de}(\problem^{\ddagger}) \cap \set{C}_{\secure}(\problem^{\ddagger}) .
\]
then
\[
(\lambda,\omega) \in   \major(  \proj ( \bar \Gamma^{*}(\problem)  \cap \set{C}_{\indep}(\problem) \cap \set{C}_{\net}(\problem)  \cap \set{C}_{\de}(\problem)  ))
.
\]

\item   If a rate-capacity tuple $(\lambda,\omega)$ for $\problem$ is 0-achievable, then $ T^{\ddagger}(\lambda,\omega)$  is 0-achievable with respect to $\problem^{\ddagger}$ subject to the strong secrecy constraint. 
\end{enumerate}
\end{thm}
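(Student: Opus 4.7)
My plan mirrors the two-part strategy of Theorem \ref{thm:challenge1}, with two independent constructions. For claim 1, starting from any $h^{\ddagger}\in\bar\Gamma^{*}(\problem^{\ddagger})\cap\set{C}_{\indep}(\problem^{\ddagger})\cap\set{C}_{\net}(\problem^{\ddagger})\cap\set{C}_{\de}(\problem^{\ddagger})\cap\set{C}_{\secure}(\problem^{\ddagger})$ whose projection majorises $T^{\ddagger}(\lambda,\omega)$, I would build an entropy function $h$ on $\problem$ satisfying the three ordinary constraints and whose projection majorises $(\lambda,\omega)$. The construction substitutes each original source label $s\in\sessions$ by the edge label $a_{s}$ (equivalently $b_{s}$) when evaluating $h^{\ddagger}$, mirroring the substitution trick used in Theorem \ref{thm:challenge1}; the secrecy constraints $h^{\ddagger}(1'\wedge\{a_{s},s\in\sessions\})=h^{\ddagger}(1'\wedge\{b_{s},s\in\sessions\})=0$, combined with the decoding constraint at $\eta$ (which forces the pair $(Y_{a_{s}},Y_{b_{s}})$ to jointly determine the $s$-th component of $1'$), are exactly what makes this substitution well-behaved. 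A limiting argument analogous to Proposition \ref{prop:3.support} handles the almost-entropic case. Verification that the resulting $h$ lies in $\set{C}_{\indep}(\problem)\cap\set{C}_{\net}(\problem)\cap\set{C}_{\de}(\problem)$ follows routinely: $\set{C}_{\indep}$ from the two secrecy constraints combined with source-independence of $1'$; $\set{C}_{\net}$ from the embedding $\edges\hookrightarrow\edges^{\ddagger}$, which preserves in-edge sets; and $\set{C}_{\de}$ from the decoding requirement at the $\tau_{s,u}$ nodes together with the fact that $d_{s,u}$ can only be a function of what $u$ receives on its in-edges from $\edges$.

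For claim 2, given a zero-error code $\{Y_{f}^{n},f\in\sessions\cup\edges\}$ for $\problem$ attaining $(\lambda,\omega)$, I would construct a zero-error strong-secrecy code for $\problem^{\ddagger}$ attaining $T^{\ddagger}(\lambda,\omega)$ via a one-time-pad overlay. The super-source $\phi$ generates both the new source $M=(M_{s},s\in\sessions)$ and an independent uniform key $R=(R_{s},s\in\sessions)$ with $|R_{s}|=|M_{s}|=2^{\lambda(s)}$. It transmits $R_{s}$ on $a_{s}$ and $M_{s}\oplus R_{s}$ on $b_{s}$, and drives the original network code on $\graph$ by using $R_{s}$ (received at $\sourceLocation(s)$ via $a_{s}$) as the source message that the original code was designed to carry; by construction, each $u\in\destinationLocation(s)$ then decodes $R_{s}$ zero-error. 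Node $u$ relays $R_{s}$ on $d_{s,u}$, while $\gamma_{s}$ computes $M_{s}=(M_{s}\oplus R_{s})\oplus R_{s}$ from $b_{s},a_{s}$ and forwards it on $c_{s}$ to $\psi_{s}$. The remaining pieces of $M$ needed at $\psi_{s}$ and $\tau_{s,u}$ are supplied by $\phi$ on the broadcast link $e_{s}$ together with the $\theta_{s,u}$--$w_{s,u}$ sub-structure, which carries the $b_{j}$ messages for $j\neq s$ to $\tau_{s,u}$; combining $R_{s}$ from $d_{s,u}$ with these masked components permits recovery of $M$.

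Strong secrecy is then immediate and exact: $Y^{n}_{\set{B}_{1}}=(R_{s},s\in\sessions)=R$ is independent of $M$ by construction, and $Y^{n}_{\set{B}_{2}}=(M_{s}\oplus R_{s},s\in\sessions)$ is independent of $M$ by the usual one-time-pad argument, so $I(M;Y^{n}_{\set{B}_{r}})=0$ identically for $r\in\{1,2\}$. Decoding at $\eta,\psi_{s},\tau_{s,u}$ reduces to the zero-error decodability already established plus trivial XOR-inversions, and the capacity constraints of $T^{\ddagger}(\lambda,\omega)$ are met edge-by-edge from the rates prescribed in the definition of $T^{\ddagger}$. The main obstacle I anticipate is the careful accounting of information flow through the $\theta_{s,u}/w_{s,u}$ sub-structure: one must verify that every sink $\tau_{s,u}$ can reconstruct $M$ in its entirety from its capacity-limited in-edges while simultaneously ensuring that no unmasked source component ever travels on an edge observable by either wiretapper. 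This is the natural analogue of the routing verification at the end of the proof of Theorem \ref{thm:challenge1}, complicated only by the local stochastic encoding at $\phi$.
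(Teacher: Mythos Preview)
Your Claim 2 construction is essentially the paper's: a one-time pad at $\phi$ with the original code running on $\graph$ to distribute one of the two pad components. The paper chooses the source $1'$ to be the sum $a_{s}+b_{s}$ rather than your $M_{s}$, but the two are isomorphic relabellings and both give exact secrecy against $\set{B}_{1},\set{B}_{2}$.

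Your Claim 1 sketch has two genuine gaps. First, you define the projected function $h$ on $\problem$ by the substitution $s\mapsto a_{s}$ without conditioning on the node-key variables $\nodes^{\ddagger}\setminus\{\phi\}$. But in the secure problem $\problem^{\ddagger}$ the topology constraint is $h^{\ddagger}(e\mid\incoming(e),\tail{e})=0$, not $h^{\ddagger}(e\mid\incoming(e))=0$; an arbitrary $h^{\ddagger}$ in the stated set may have nontrivial local randomness at every node of $\graph$, and your unconditioned projection will then fail $\set{C}_{\net}(\problem)$. The paper's fix is to define $g(\alpha,\beta)\defined h^{\ddagger}(\alpha,\{a_{i}:i\in\beta\}\mid\nodes^{\ddagger}\setminus\{\phi\})$, which absorbs all keys except the one at $\phi$ into the conditioning and restores the deterministic-encoding constraint on $\edges$.

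Second, your verification of $\set{C}_{\de}(\problem)$ is not routine and your one-line justification does not work. Decoding at $\tau_{s,u}$ recovers $1'$, not $a_{s}$, and there is no a priori functional relation between $1'$ and $a_{s}$ (indeed the secrecy constraint says they are independent). The paper needs the sinks $\psi_{s}$ and the capacity-matching edges $c_{s},e_{s},w_{s,u}$ precisely to force, via a chain of tight cut-set equalities, the identities $h^{\ddagger}(c_{s}\mid a_{s},b_{s})=h^{\ddagger}(a_{s}\mid b_{s},c_{s})=0$ and then $h^{\ddagger}(a_{s}\mid d_{s,u})=0$; only the last of these yields $g(s\mid\incoming(u))=0$. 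Without the $\psi_{s}$-decoding constraint and the matched capacity of $e_{s}$ this chain breaks, so your sketch omits the substantive step.
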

\begin{proof}
  See Appendix~\ref{app:challange2}
\end{proof}

The following theorem is the counterpart of Theorem \ref{thm:challenge1linear}. Again, its proof will be omitted.
\begin{thm}\label{thm:challenge2linear}
Let $(\lambda,\omega)$ be a rate-capacity tuple in $\chi(\problem)$. Then the following two claims are valid.
\begin{enumerate}
\item   If $ T^{\ddagger}(\lambda,\omega) \in \major(  \proj_{\problem^{\ddagger}} [\bar \Gamma^{*}(\problem^{\ddagger})  \cap \set{C}_{\indep}(\problem^{\ddagger}) \cap \set{C}_{\net}(\problem^{\ddagger})  \cap \set{C}_{\de}(\problem^{\ddagger}) \cap \set{C}_{\secure}(\problem^{\ddagger}) ]),
$ then
\[
(\lambda,\omega) \in  \major(  \proj_{\problem} [ \bar\Upsilon_{q}^{*}(\problem)  \cap \set{C}_{\indep}(\problem) \cap \set{C}_{\net}(\problem)  \cap \set{C}_{\de}(\problem)  ]).
\]

\item  If   $(\lambda,\omega)$ is 0-achievable with respect to $\problem$ subject to the $q$-linearity constraint, then $ T^{\ddagger}(\lambda,\omega)$  is also 0-achievable with respect to $\problem^{\ddagger}$,   subject to the strong secrecy  and $q$-linearity constraint.
\end{enumerate}
\end{thm}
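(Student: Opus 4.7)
The plan for the first claim follows the template of Theorem \ref{thm:challenge2}, which takes an almost-entropic witness $h^{\ddagger}$ for $\problem^{\ddagger}$ and extracts a witness for $\problem$ by identifying each original source $s\in\sessions$ with the sub-variable of $Y_{1^{\prime}}$ carried by $c_s$ (whose entropy the capacity, decoding, and secrecy constraints pin to $\lambda(s)$) and projecting the rank function onto the coordinates indexed by $\sessions\cup\edges$. To reach the stronger conclusion that the extracted witness lies in $\bar\Upsilon_q^{*}(\problem)$ rather than merely $\bar\Gamma^{*}(\problem)$, the argument must exploit the rigidity produced by the specific secrecy pattern of $\problem^{\ddagger}$, which is the whole point of introducing this particular transformation.

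The key structural remark is that $\set{B}_1=\{a_s,s\in\sessions\}$ and $\set{B}_2=\{b_s,s\in\sessions\}$ each carry total capacity $\sum_s\lambda(s)=\lambda^{\ddagger}(1^{\prime})$, so the secrecy constraints $h^{\ddagger}(\set{A}_r\wedge\set{B}_r)=0$, combined with decodability at $\eta$ (where both the $a_s$'s and $b_s$'s are available) and with $h^{\ddagger}\in\set{C}_{\indep}(\problem^{\ddagger})\cap\set{C}_{\net}(\problem^{\ddagger})$, pin the mutual-information configuration of $(1^{\prime},\set{B}_1,\set{B}_2)$ into a Shannon one-time-pad pattern: $\set{B}_1$ and $\set{B}_2$ must be independent, each of entropy $\sum_s\lambda(s)$, while $Y_{1^{\prime}}$ is a deterministic function of their union. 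I would then leverage this rigidity, combined with the matching capacity bounds on $c_s$ and $e_s$, to upgrade $h^{\ddagger}$ to a weakly $q$-representable majorant preserving every functional dependency in $\Delta=\{(k,\alpha):h^{\ddagger}(k\mid\alpha)=0\}$, invoke Proposition \ref{prop:5.5} to approximate it by a sequence of $q$-representable rank functions with the same constraint-respecting structure, and finally project onto $\problem$ to obtain the required $\bar\Upsilon_q^{*}(\problem)$ witness majorising $(\lambda,\omega)$.

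For Claim 2 the plan is direct and constructive. Given a sequence of zero-error $q$-linear codes for $\problem$ with source vectors $Y_s\in\field(q)^{\lambda(s)}$, I would assemble a $q$-linear strongly secure network code for $\problem^{\ddagger}$ by combining the original code on $\edges$ with a one-time-pad structure on the new links: independent uniform keys $K_s\in\field(q)^{\lambda(s)}$ generated at $\phi$ encrypt $Y_s$ so that $Y_{a_s}$ and $Y_{b_s}$ each equal an affine function of $K_s$ and $Y_s$ that is marginally independent of $Y_{1^{\prime}}=(Y_s:s\in\sessions)$, while $Y_{c_s},Y_{d_{s,u}},Y_{e_s}$ carry the appropriate decodable linear combinations. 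Decodability at every sink of $\problem^{\ddagger}$ reduces to a routine linear-algebraic verification, and strong secrecy $I(Y_{\set{A}_r};Y_{\set{B}_r})=0$ follows instantly from the perfect one-time-pad property, since each eavesdropper sees only $(K_s)_s$ or its affine counterpart, each marginally independent of $(Y_s)_s$.

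The main obstacle is the rigidity step in Claim 1: because the literal hypothesis involves $\bar\Gamma^{*}(\problem^{\ddagger})$ rather than $\bar\Upsilon_q^{*}(\problem^{\ddagger})$, the argument cannot simply propagate representability through the extraction and must instead show that the combinatorial secrecy-plus-capacity configuration of $\problem^{\ddagger}$ is so constrained that every almost-entropic solution in the specified constraint set can be matched by an almost-$q$-representable one with the same projection. Establishing this rigidity in the almost-entropic closure---rather than merely at the level of quasi-uniform or linear codes---is where the proof effort really concentrates, since the general passage from $\bar\Gamma^{*}$ to $\bar\Upsilon_q^{*}$ is false without exploitation of this very specific one-time-pad structure forced by the transformation.
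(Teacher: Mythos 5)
Your plan for Claim 2 is essentially the paper's (omitted) argument: take the zero-error $q$-linear code for $\problem$, generate independent uniform keys in $\field(q)^{\lambda(s)}$ at $\phi$, one-time-pad on the gadget links, and observe that every new encoding operation is $\field(q)$-linear, so zero error, strong secrecy and $q$-linearity all hold; this mirrors the construction in the proof of the second claim of Theorem~\ref{thm:challenge2} with modular addition replaced by addition in $\field(q)^{\lambda(s)}$, and is fine.

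Claim 1 is where the proposal breaks down. Your ``rigidity'' step --- upgrading an arbitrary witness $h^{\ddagger}\in\bar\Gamma^{*}(\problem^{\ddagger})\cap\set{C}_{\indep}\cap\set{C}_{\net}\cap\set{C}_{\de}\cap\set{C}_{\secure}$ to an almost $q$-representable one with the same projection --- cannot be carried out. The one-time-pad structure you identify does pin down the configuration of $(1^{\prime},a_{\sessions},b_{\sessions})$, but it says nothing about the restriction of $h^{\ddagger}$ to the embedded copy of $\graph$, which is where all the difficulty of the original problem lives. Indeed, the construction used for Claim 2 works equally well at the level of rank functions: any $h\in\bar\Gamma^{*}(\problem)\cap\set{C}_{\indep}\cap\set{C}_{\net}\cap\set{C}_{\de}$ can be extended, by adjoining independent uniform keys, to a legitimate $h^{\ddagger}$ in the hypothesis set with $T^{\ddagger}(\proj_{\problem}[h])\in\major(\proj_{\problem^{\ddagger}}[h^{\ddagger}])$. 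Hence the statement you are trying to prove literally would force $\proj_{\problem}[\bar\Gamma^{*}(\problem)\cap\cdots]\subseteq\major(\proj_{\problem}[\bar\Upsilon^{*}_{q}(\problem)\cap\cdots])$ for \emph{every} network $\problem$, i.e.\ (via Theorems~\ref{thm:mainsecure} and~\ref{thm:mainlinear}) that linear codes are always optimal --- contradicting exactly the suboptimality that Proposition~\ref{prop:challenge2} is built to exhibit. The resolution is that the hypothesis of Claim 1 should be read, in parallel with Theorem~\ref{thm:challenge1linear}, with $\bar\Upsilon^{*}_{q}(\problem^{\ddagger})$ in place of $\bar\Gamma^{*}(\problem^{\ddagger})$ (the printed $\bar\Gamma^{*}$ is evidently a slip); with that reading no rigidity upgrade is needed. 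The intended proof is then the proof of the first claim of Theorem~\ref{thm:challenge2} rerun verbatim: every inequality used there is a polymatroid manipulation valid for almost representable functions (which are almost entropic), and the extracted function $g(\alpha,\beta)=h^{\ddagger}(\alpha,a_{i},i\in\beta\mid\nodes^{\ddagger}\setminus\{\phi\})$ inherits almost $q$-representability from $h^{\ddagger}$ because conditioning corresponds to the subspace quotient operation $T_{\vA}$ of Lemmas~\ref{lemma:5.1}--\ref{lemma:5.4}, which preserves representability and passes to limits; Proposition~\ref{prop:5.5} plays the role you assign it only in this representable-to-representable setting, not as a bridge out of $\bar\Gamma^{*}$.
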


\begin{cor}[Counterpart of Corollary \ref{cor:challenge1}]
Suppose all the sources are colocated in  $\problem$. Then 
\begin{enumerate}
\item
$(\lambda,\omega)$ is 0-achievable with respect to $\problem$ if and only if $ T^{\ddagger}(\lambda,\omega)$ is also 0-achievable  with respect to $\problem^{\ddagger}$ subject to the strong secrecy constraint.

\item
$(\lambda,\omega)$ is 0-achievable with respect to $\problem$ subject to the $q$-linearity constraint if and only if $ T^{\ddagger}(\lambda,\omega)$ is also 0-achievable with respect to $\problem^{\ddagger}$ subject to the strong secrecy and  $q$-linearity constraint. 
\end{enumerate}
\end{cor}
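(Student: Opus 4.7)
The proof follows exactly the pattern used to establish Corollary \ref{cor:challenge1} in the incremental multicast setting. The plan is to chain the transformation results (Theorems \ref{thm:challenge2} and \ref{thm:challenge2linear}) with the colocated-source tightness result (Theorem \ref{thm:arbitrarymain}) and its linear counterpart (Theorem \ref{thm:mainlinear}).

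For the forward (only-if) direction of both claims no tightness is needed. If $(\lambda,\omega)$ is 0-achievable with respect to $\problem$, whether unconstrained or under a $q$-linearity constraint, part 2 of Theorem \ref{thm:challenge2} (respectively Theorem \ref{thm:challenge2linear}) immediately gives 0-achievability of $T^{\ddagger}(\lambda,\omega)$ for $\problem^{\ddagger}$ under the strong secrecy constraint (and $q$-linearity, for claim 2). Colocation plays no role in this direction.

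The backward (if) direction is where colocation enters. Assuming $T^{\ddagger}(\lambda,\omega)$ is 0-achievable for $\problem^{\ddagger}$ subject to strong secrecy, the outer bound of Theorem \ref{thm:mainsecure} (strong secrecy implies weak secrecy, hence the bound applies) places $T^{\ddagger}(\lambda,\omega)$ inside $\major(\proj_{\problem^{\ddagger}}[\bar\Gamma^{*} \cap \set{C}_{\indep} \cap \set{C}_{\net} \cap \set{C}_{\de} \cap \set{C}_{\secure}])$. Part 1 of Theorem \ref{thm:challenge2} then transfers this to
\[
(\lambda,\omega) \in \major(\proj_{\problem}[\bar\Gamma^{*} \cap \set{C}_{\indep} \cap \set{C}_{\net} \cap \set{C}_{\de}]).
\]
Since the sources of $\problem$ are colocated, Theorem \ref{thm:arbitrarymain} asserts that this outer region coincides with the 0-achievable region, so $(\lambda,\omega)$ is 0-achievable for $\problem$. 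For claim 2 the same template is executed with Theorem \ref{thm:challenge2linear} in place of Theorem \ref{thm:challenge2} and Theorem \ref{thm:mainlinear} in place of Theorem \ref{thm:arbitrarymain}; the latter does not even require colocation, so the hypothesis is used only to dispose of claim 1.

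The only mildly delicate point, rather than a real obstacle, is verifying that the outer bound invoked in the linear case lands inside $\bar\Upsilon^{*}_{q}$ rather than merely inside $\bar\Gamma^{*}$ on the $\problem^{\ddagger}$ side; this was already established in the strongly secure linear code characterisation earlier in Section \ref{sec:part4}, and the containment $\bar\Upsilon^{*}_{q} \subseteq \bar\Gamma^{*}$ guarantees compatibility with the hypothesis of Theorem \ref{thm:challenge2linear}. Everything else reduces to bookkeeping on the image of $T^{\ddagger}$.
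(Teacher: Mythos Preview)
Your proposal is correct and matches the paper's intended approach: the corollary is the secure-multicast analogue of Corollary~\ref{cor:challenge1}, and just as that result is stated as a direct consequence of Theorems~\ref{thm:arbitrarymain}, \ref{thm:mainlinear}, \ref{thm:challenge1} and \ref{thm:challenge1linear}, the present one follows by chaining Theorems~\ref{thm:challenge2} and \ref{thm:challenge2linear} with Theorems~\ref{thm:arbitrarymain} and \ref{thm:mainlinear}. Your explicit identification of Theorem~\ref{thm:mainsecure} as the bridge from strong-secrecy 0-achievability of $T^{\ddagger}(\lambda,\omega)$ to the hypothesis of Theorem~\ref{thm:challenge2}(1) is the only step the paper leaves implicit, and your remark that colocation is actually superfluous for claim~2 is a correct sharpening.
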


%
%

\begin{prop}[Counterpart of Proposition \ref{prop:challenge1}]\label{prop:challenge2}
There exists a single source secure multicast network coding problem $\problem^{\ddagger}$ such that 
\begin{enumerate}
\item
Characterising the set of achievable rate-capacity tuples for $\problem^{\ddagger}$  is in general no simpler than determining the set of all information inequalities. 

\item
Linear codes may not be optimal. 
%
\end{enumerate}
\end{prop}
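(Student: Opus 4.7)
The plan is to mirror exactly the proof strategy used for Proposition~\ref{prop:challenge1}, replacing the incremental multicast transformation with the single-source secure multicast transformation $T^{\ddagger}$ constructed above, and invoking the secure-multicast analogues (Theorem~\ref{thm:challenge2}, Theorem~\ref{thm:challenge2linear}, and the colocated-source corollary immediately preceding this proposition) in place of their incremental-multicast counterparts. The starting point is the same hard instance used in the proof of Proposition~\ref{prop:challenge1}: from \cite{Chan2008Dualities} there exists an unconstrained network coding problem $\problem$ with all sources colocated, such that (i) determining the set of $\epsilon$-achievable rate-capacity tuples is at least as hard as determining the set of all information inequalities, and (ii) there is a 0-achievable rate-capacity tuple $(\lambda,\omega)$ for $\problem$ that is not achievable subject to a $q$-linearity constraint.

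The central step is to apply the transformation $T^{\ddagger}$ to this $\problem$ to obtain a single-source secure multicast problem $\problem^{\ddagger}$, and then to invoke the colocated-source corollary stated immediately after Theorem~\ref{thm:challenge2linear}. That corollary gives two ``if and only if'' statements: a tuple $(\lambda,\omega)$ is 0-achievable for $\problem$ if and only if $T^{\ddagger}(\lambda,\omega)$ is 0-achievable for $\problem^{\ddagger}$ under strong secrecy, and the same equivalence holds additionally under the $q$-linearity constraint. This lifts any characterisation (or any family of linear codes) for $\problem^{\ddagger}$ to the corresponding object for $\problem$.

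Claim (1) then follows immediately: any algorithm that characterises the set of 0-achievable rate-capacity tuples of $\problem^{\ddagger}$ gives, via the forward direction of the equivalence applied to every candidate $(\lambda,\omega)$, a characterisation of the 0-achievable set of $\problem$, which by property (i) above is at least as hard as enumerating all information inequalities. For claim (2), the $(\lambda,\omega)$ supplied by property (ii) is 0-achievable for $\problem$, so $T^{\ddagger}(\lambda,\omega)$ is 0-achievable for $\problem^{\ddagger}$ under strong secrecy; if a $q$-linear scheme were to attain $T^{\ddagger}(\lambda,\omega)$ in $\problem^{\ddagger}$, the reverse direction of the linear equivalence would produce a $q$-linear code achieving $(\lambda,\omega)$ in $\problem$, contradicting (ii).

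The hardest part is not this chain of implications, which is essentially routine modulo the transformation theorems, but the underlying content of Theorems~\ref{thm:challenge2} and~\ref{thm:challenge2linear} and the colocated-source corollary. In the current structure those are taken as already established, so the proof of Proposition~\ref{prop:challenge2} reduces to a one-line appeal to the equivalence together with the existence of the hard instance in \cite{Chan2008Dualities}, exactly as in Proposition~\ref{prop:challenge1}.
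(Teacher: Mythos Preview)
Your proposal is correct and follows essentially the same approach as the paper: the paper states Proposition~\ref{prop:challenge2} explicitly as the ``Counterpart of Proposition~\ref{prop:challenge1}'' and leaves the proof implicit, so the intended argument is precisely to take the hard colocated-source instance from \cite{Chan2008Dualities} and apply the secure-multicast corollary (the counterpart of Corollary~\ref{cor:challenge1}) in place of Corollary~\ref{cor:challenge1}, exactly as you describe.
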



{\bf Remark: }
In \cite{cai2002snc}, it was proved that in the single-source case, if
all links have equal capacity and the eavesdroppers' capability is
limited by the total number of links it can wiretap, then linear
network codes are optimal. Therefore, Proposition \ref{prop:challenge2} is indeed a
surprising result proving that linear network codes are not optimal in
general.
 
\section{Conclusion}
Characterisation of the set of zero-error or vanishing-error
achievable rate-capacity tuples for network coding is a fundamental
problem in multiterminal information theory. In
\cite{Chan2008Dualities}, it was proved that this characterisation
problem is extremely difficult in general and is as hard as
determining the set of all information inequalities. This goes some
way toward explaining why the problem has so far been solved only for
a few special cases.

The authors in \cite{Yan2007The-Capacity} and \cite{Yeung02first} used
entropy functions to implicitly characterise the set of achievable
rate-capacity tuples for general networks. Although this
characterisation is implicit, it offers insights about the structure
of the set of achievable tuples. For example, knowing that the set of
almost entropic functions $\bar\Gamma^{*}$ is not polyhedral,
\cite{Chan2008Dualities} proved that the set of achievable tuples also
is not polyhedral in general.

This paper extended \cite{Yan2007The-Capacity} and \cite{Yeung02first}
in several aspects. First, we proved that when sources are colocated,
the outer bound given in \cite[Section 15.5]{Yeung02first} is indeed
tight. In particular, we showed that the set of rate-capacity tuples
achievable with vanishing error, and the set achievable with zero
error are indeed the same. We also gave evidence to support our
conjecture that the outer bound in \cite[Section 15.5]{Yeung02first}
remains tight even when sources are not colocated.  

Secondly, we considered network coding problems subject to several
practically-motivated constraints, such as linear coding, the
restriction of some or all nodes to perform only routing, and security
requirements. For these cases we characterised the set of zero-error
and vanishing-error achievable rate-capacity tuples.  Finally in
Section \ref{sec:challenges}, we proved that even for very simple
network coding problems, such as the incremental multicast problem and
the single source secure network coding problem with arbitrary
wiretapping patterns, characterisation of achievable tuples is as hard
as the characterisation problem for general unconstrained network
coding. We also proved that linear codes are suboptimal for both the
general incremental multicast problem and for the single source secure
network coding problem.

\begin{appendices}

\section{Proof of Proposition \ref{prop:regular}}\label{appendix:A}
Consider the following combinatorial problem. Suppose that there are
$k$ boxes, $t$ of which are nonempty. If we randomly select $m$
distinct boxes, then the probability that all selected boxes are empty
is upper bounded by
\begin{align}\label{eq:248}
  \Pr(\text{all $m$ boxes are empty}) &\le  \left( 1- \frac{t}{k} \right)^{m}
\end{align}

Let $\kappa(c) = (1-c)^{1/c}$. Since $\lim_{c\to 0^{+}} \kappa(c) = \exp(-1)$,  there exists $0<\delta<1$ such that $\kappa(c) < \delta$ for all $0<c\le 1$. Hence, \eqref{eq:248} can be relaxed as 
\begin{align}\label{eq:249} 
\Pr(\text{all $m$ boxes are empty}) & \le  \delta^{tm/k} .
\end{align}

Let $(U,V)$ be a pair of quasi-uniform random variables.  As $V$ is
uniform over its support, $|\support(V)|=2^{H(V)}$.  Let
\begin{align*}
m =   H\left(UV\right)^{2} \frac{2^{H\left(V\right)}}{2^{H\left(V\mid U\right)}} .
\end{align*}
Partition the set $\support(V)$ randomly into
\[
2^{H\left(V\right)} / m  =  \frac{2^{H\left(V\mid U\right)}}{H\left(UV\right)^{2}}
\]
subsets, each of which is of size $m$.  These disjoint subsets will be
denoted by $\Xi(b)$ where $$b \in \set{A}_{V} \define \{ 1, \ldots,
2^{H\left(V\mid U\right)} / H\left(UV\right)^{2} \}.$$

For any $ u \in\support(U)$ and $b \in \set{A}_{V}$, let  
$\mathbb E(u,b)$ be the event that  
\[
\{ (u,i):  i\in \Xi(b) \} \cap \support(U,V) \neq \emptyset.
\]
In other words, the event is equivalent to the existence of 
an element  $i \in \Xi(b)$ such that $(u,i) \in \support(U,V)$.

In the following, we will prove that the probability of 
$\mathbb E(u,b)$ is ``arbitrarily close to one asymptotically'' 
for all $u \in \support(U)$ and $b\in \set{A}_{V}$.

For any $u\in\support(U)$, it is easy to see that 
\begin{align}
|\{v : (u,v)  \in \support(U,V) \} | = 2^{H\left(V\mid U\right)}.
\end{align}
By setting $k=2^{H\left(V\right)}$ and $t=2^{H\left(V\mid U\right)}$,  \eqref{eq:249} implies that   
\begin{align}
\Pr(\mathbb E(u,b)) & \ge 1- \delta^{H\left(UV\right)^{2}}
\end{align}
and hence via the union bound, the probability that the event 
$\mathbb E(u,b)$ occurs for all  $u\in \support(U)$ and $b \in \set{A}_{V} $ is at least 
\begin{align*}
1 - 2^{ H\left(U\right)+H\left(V\right) }  \delta^{H\left(UV\right)^{2}}. 
\end{align*}

This probability approaches to 0 as $H\left(UV\right)$ goes to infinity. 
Consequently, if the entropy $H\left(U\right)$ (and hence also $H\left(UV\right)$) is  sufficiently large,  
there exists a way to partition $\support(V)$  
such that for any $u \in \support(U)$ and $b \in \set{A}_{V} $, 
there exists at least one $v \in \Xi(b)$ such that $(u,v)\in \support(U,V)$. 

Assume without loss of generality that $\sessions=\{1,\ldots, |\sessions|\}$.
Repeating the same argument, we can recursively prove that  
for any set of quasi-uniform random variables 
$\{U_{i},i\in\sessions\}$ and  $H\left(U_{1}\right)$ sufficiently large, 
there exists at least a way to partition $\support(U_{s})$ into  
\[
{2^{H\left(U_{i}\mid \seq U1{i-1}\right)} }/{ H\left(\seq U1i\right)^{2}} 
\]
subsets $\Xi_{s}(b_{s}) $ where 
\[
b_{s} \in  \set{A}_{s} \define \{1, \ldots,  2^{H\left(U_{s}\mid  \seq U1{s-1}\right)} / H\left(\seq U1s\right)^{2}\}\]
such that for any  $(\seq u1{s-1}) \in \support(\seq U1{s-1})$ and $b_{s} \in \set{A}_{s}$, there exists  at least one $u_{s} \in \Xi_{s}(b_{s})$ such that 
$(\seq u1s)\in \support(\seq U1s)$. 
Hence, the proposition is proved.

\section{Proof of Theorem~\ref{thm:challenge1}}\label{app:challenge1}

We first prove the first claim. Suppose 
\[
(\lambda^{\dagger} , \omega^{\dagger}) \defined T^{\dagger}(\lambda,\omega) \in \major(\proj_{\problem^{\dagger}}[h^{\dagger}])
\]
for some   
\[
h^{\dagger} \in \bar \Gamma^{*}(\problem^{\dagger})  \cap   \set{C}_{\indep}(\problem^{\dagger}) \cap \set{C}_{\net}(\problem^{\dagger})  \cap \set{C}_{\de}(\problem^{\dagger}) .
\] 
Then by definition,  
\begin{align}
\lambda^{\dagger}(i) & \le  h^{\dagger}(i), \quad i=1,2  \label{eq:8A.170}\\
\omega^{\dagger}(a_{s}) & \ge  h^{\dagger}(a_{s}), \quad \forall s\in\sessions \label{eq:8A.171}\\
\omega^{\dagger}(b_{s}) & \ge  h^{\dagger}(b_{s}), \quad \forall s\in\sessions\label{eq:8A.172} \\
\omega^{\dagger}(c_{s}) & \ge  h^{\dagger}(c_{s}), \quad \forall s\in\sessions\label{eq:8A.173} \\
\omega^{\dagger}(e) & \ge  h^{\dagger}(e), \quad \forall e\in\edges.\label{eq:8A.173b}
\end{align}

Consequently, by \eqref{eq:8A.166}--\eqref{eq:8A.169} and   \eqref{eq:8A.171}--\eqref{eq:8A.173}, 
\begin{align}
2 \sum_{s\in\sessions} \lambda({s})   = \sum_{s\in\sessions}( \omega^{\dagger}(a_{s})  + \omega^{\dagger}(b_{s}) ) 
  \ge \sum_{s\in\sessions} (h^{\dagger}(a_{s})  + h^{\dagger}(b_{s}))  \nge{(i)} h^{\dagger}(a_{s}, b_{s}, s\in\sessions) \label{eq:7.108}  
\end{align}
where $(i)$ follows from the fact that $h^{\dagger} \in \bar
\Gamma^{*}(\problem^{\dagger}) $ (and hence is polymatroidal).
Similarly, we can also prove that
\begin{align}
2 \sum_{s\in\sessions} \lambda({s})   = \sum_{s\in\sessions}( \omega^{\dagger}(a_{s})  + \omega^{\dagger}(c_{s}) ) 
  \ge \sum_{s\in\sessions} (h^{\dagger}(a_{s})  + h^{\dagger}(c_{s}))  \ge h^{\dagger}(a_{s}, c_{s}, s\in\sessions).  \label{eq:7.109}
\end{align}

Let $g$ be the ``projection'' of $h^{\dagger}$ on $\set{H}[\sessions\cup\edges]$ such that  for any $\alpha\subseteq \edges $ and $\beta \subseteq \sessions$, 
\begin{align}\label{eq:8.174}
g(\alpha,\beta) \defined h^{\dagger}(\alpha , a_{i}, i\in\beta).
\end{align}
In the following, we will prove that 
\[
(\lambda,\omega) \in \major(\proj_{\problem}[g ])
\]
and 
\[
g\in \bar \Gamma^{*}(\problem)  \cap \set{C}_{\indep}(\problem) \cap \set{C}_{\net}(\problem)  \cap \set{C}_{\de}(\problem). 
\]

First, $h^{\dagger} \in \bar \Gamma^{*}(\problem^{\dagger})$. Hence,
its projection $g$ is also in $ \bar \Gamma^{*}(\problem)$.  Second,
the network $\graph^{\dagger}$ contains $\graph$ as a subnetwork and
$\sourceLocation(s) \subseteq \head{a_{s}}$. In other words, if a node
$u$ has access to the source $s$ in the network coding problem
$\problem$, then $u$ also has access to what is being transmitted
along the link $a_{s}$ in $\problem^{\dagger}$. The link $a_{s}$ in
$\graph^{\dagger}$ is thus like an imaginary source link in $\graph$.
It can then be verified directly that $g \in
\set{C}_{\net}(\problem)$.

Now, we will prove that $g\in \set{C}_{\indep}(\problem) \cap \set{C}_{\de}(\problem)$. As 
\[
h^{\dagger} \in  \bar \Gamma^{*}(\problem^{\dagger}) \cap \set{C}_{\de}(\problem^{\dagger}) \cap \set{C}_{\net}(\problem^{\dagger}),
\]
 and that the set of links
$
\{ a_{s}, b_{s}, s\in\sessions \}
$
 separates the source node $\phi$ from the sink node $\eta$ in $\graph^{\dagger}$, 
\begin{align}
h^{\dagger}(1^{\prime},2^{\prime} | a_{s},b_{s}, s\in\sessions ) = h^{\dagger}(a_{s},b_{s}, s\in\sessions | 1^{\prime},2^{\prime} ) = 0. \label{175}
\end{align}
Consequently,
\begin{align}
h^{\dagger}(a_{s}, b_{s}, s\in\sessions) & \ge  h^{\dagger}(1^{\prime},2^{\prime})  \\
& \nequal{(i)}  h^{\dagger}(1^{\prime} )  +h^{\dagger}( 2^{\prime})  \\
& \nge{(ii)} \lambda^{\dagger}(1)  +  \lambda^{\dagger}(2) \\ 
& = 2 \sum_{s\in\sessions } \lambda(s)
\end{align}
where $(i)$ follows from the fact that $h^{\dagger} \in \bar
\Gamma^{*}(\problem^{\dagger})\cap
\set{C}_{\indep}(\problem^{\dagger})$ and $(ii)$ follows from
\eqref{eq:8A.170}.

Similarly, the set of links 
$
\{ a_{s}, c_{s}, s\in\sessions\}
$
separates  the source node $\phi$ from the sink node $\eta^{*}$ in $\graph^{\dagger}$. Hence, 
\begin{align}
h^{\dagger}(a_{s}, c_{s}, s\in\sessions) \ge  2 \sum_{s\in\sessions } \lambda(s).
\end{align}

Therefore, all the inequalities in \eqref{eq:7.108} and \eqref{eq:7.109} are  in fact equalities. In particular,  
\begin{align} 
h^{\dagger}(a_{s}) = h^{\dagger}(b_{s})  = h^{\dagger}(c_{s}) = \lambda(s), \quad \forall  s\in \sessions \label{eq:8.185}
\end{align} 
and 
\begin{align}
 h^{\dagger}(a_{s},b_{s} , s\in\sessions)   &   =  \sum_{s\in\sessions} (h^{\dagger}(a_{s})  + h^{\dagger}(b_{s})) ,  \label{eq:8.186} \\
 h^{\dagger}(a_{s} , c_{s}, s\in\sessions)     & =  \sum_{s\in\sessions} (h^{\dagger}(a_{s})  + h^{\dagger}(c_{s})). \label{eq:8.187} 
\end{align}

By \eqref{eq:8.186},   $h^{\dagger}(a_{s},s\in\sessions)  = \sum_{s\in\sessions}h^{\dagger}(a_{s})$.
Hence, 
\[
g( \sessions) = \sum_{s\in\sessions}g({s})
\]
and  $g\in\set{C}_{\indep}(\problem)$. Furthermore, as 
\begin{align}
g(s) &= h^{\dagger}(a_{s}) = \lambda(s), \quad \forall s\in\sessions \\
g(e) &= h^{\dagger}(e) \le \omega(e), \quad \forall e\in\edges,
\end{align}
we  prove that  
\[
(\lambda,\omega) \in \major(\proj_{\problem}[g ]).
\]

\def\hd1{{h^{\dagger}}}
Now, it remains to show that  $g \in\set{C}_{\de}(\problem)$.  
First, consider any $s\in\sessions$ and $u\in\destinationLocation(s)$. By \eqref{eq:8.186}--\eqref{eq:8.187}, 
\[
\hd1\left( b_{\sessions \setminus s} \wedge a_{\sessions} , b_{s}\right) =0.
\]
As $\hd1\in \bar \Gamma^{*}(\problem^{\dagger})  \cap \set{C}_{\net}(\problem^{\dagger})$,  $\hd1(c_{s} \mid a_{s},b_{s}) = \hd1(\incoming(u) \mid a_{\sessions}) = 0$. Hence, 
\begin{align}\label{f.190}
{\hd1}( b_{\sessions \setminus s} \wedge a_{\sessions} , b_{s} , c_{s}, \incoming(u) ) =0.
\end{align}
Together with  the decoding constraint (for the receiver $\tau_{s,u}$) 
\[
\hd1(a_{s} \mid b_{\sessions \setminus s} , c_{s}, \incoming(u) ) =0, 
\]
we can prove that  
\[
\hd1(a_{s} \mid  c_{s},\incoming(u) )=0.
\]
Finally, using \eqref{eq:8.187} and  that $\hd1(\incoming(u) \mid a_{\sessions}) = 0$, 
we have
$
\hd1(a_{s} \mid  \incoming(u) )=0.
$
Thus, 
\[
g(a_{s} \mid \incoming(u)) = \hd1(a_{s} \mid \incoming(u) )=0
\]
and $g\in \set{C}_{\de}(\problem)$. The  first claim is proved.

To prove the second claim, suppose $(\lambda,\omega)$ is 0-achievable with respect to  
$\problem$. By definition, there exists a sequence of zero-error network codes 
\[
\{ Y^{n}_{f}, f\in\sessions\cup\edges \}
\]
for $\problem$, and a sequence of positive constants $c_{n}$ such that 
\begin{align}
\lim_{n\to\infty} c_{n} H(Y_{e}^{n}) \le \lim_{n\to\infty} c_{n} H|\support(Y_{e}^{n})| \le \omega(e) \\
\lim_{n\to\infty} c_{n} H(Y_{s}^{n}) = \lim_{n\to\infty} c_{n} H|\support(Y_{s}^{n})| \ge \lambda(s). 
\end{align}
%
%
%

Assume without loss of generality that
\[
\support(Y_{s}^{n})=\{ 0, \ldots, |\support(Y_{s}^{n})|-1 \}.
\]
For each $n$, define a new set of random variables
\[
\{U_{f}^{n}, f\in\sessions^{\dagger} \cup \edges^{\dagger}  \}
\]
such that  for any $s\in\sessions$ and $u\in\destinationLocation(s)$, 
\begin{enumerate}
\item $U^{n}_{a_{s}} \defined Y^{n}_{s}$; 

\item $U^{n}_{e} \defined Y^{n}_{e}$;

\item $\{U^{n}_{b_{s}}, s \in\sessions \}$ is a set of mutually independent random variables such that each of which is uniformly distributed over $\{ 0, \ldots, |\support(Y_{s}^{n})|-1 \}$ and 
%
\[
H\left(U^{n}_{a_{\sessions}}, U^{n}_{\edges},  U^{n}_{b_{\sessions}}\right) = \sum_{f\in\sessions }U^{n}_{b_{f}} + H\left(U^{n}_{a_{\sessions}}, U^{n}_{\edges}\right);
\]
  
\item $U^{n}_{c_{s}} \defined U^{n}_{a_{s}} + U^{n}_{b_{s}} \mod  |Y^{n}_{{s}}|$; 
\item $U^{n}_{d_{s,u}} \defined U^{n}_{a_{s}}$;

\item $U^{n}_{1^{'}} \defined (U^{n}_{b_{f}}, f\in\sessions)$;

\item $U^{n}_{2^{'}}\defined (U^{n}_{a_{f}}, f\in\sessions)$.

\end{enumerate}
It can then be proved directly that 
$
\{U_{f}^{n}, f\in\sessions^{\dagger} \cup \edges^{\dagger}  \}
$
is a sequence of zero-error network codes for the network coding problem $\problem^{\dagger}$. Consequently,  $T(\lambda,\omega)$ is 0-achievable with respect to $\problem^{\dagger}$. The theorem is proved.

%

\section{Proof of Theorem~\ref{thm:challenge2}}
\label{app:challange2}

\def\hd{h^{\ddagger}}

We first prove the first claim. Let 
\begin{align}\label{c.212}
(\lambda^{\ddagger} , \omega^{\ddagger}) \defined T^{\ddagger}(\lambda, \omega) \in \major(  \proj_{\problem^{\ddagger}}  (h^{\ddagger}) ) 
\end{align}
for some 
\begin{align}\label{c.213}
h^{\ddagger} \in  \bar\Gamma^{*}(\problem^{\ddagger}) \cap   \set{C}_{\indep}(\problem^{\ddagger}) \cap \set{C}_{\net}(\problem^{\ddagger}) \cap \set{C}_{\de}(\problem^{\ddagger}) \cap \set{C}_{\secure}(\problem^{\ddagger}) .
\end{align}

By \eqref{195}-\eqref{c.212},  for all $e\in\edges, s\in\sessions$ and $u\in\destinationLocation(s)$, 
\begin{align}
\sum_{i\in\sessions} \lambda({i}) & = \lambda^{\ddagger}(1^{'}) \le h^{\ddagger}(1^{'}) \label{c.218}\\
\lambda(s) & = \omega^{\ddagger} (a_{s}) \ge h^{\ddagger} (a_{s})  \label{c.219}\\
\lambda(s) & = \omega^{\ddagger} (b_{s}) \ge h^{\ddagger} (b_{s}) \label{c.220} \\
\lambda(s) & = \omega^{\ddagger} (c_{s}) \ge h^{\ddagger} (c_{s})  \label{c.221}\\
\lambda(s) & = \omega^{\ddagger} (d_{s,u}) \ge h^{\ddagger} (d_{s,u})  \label{c.222}\\
\lambda(s) & = \omega^{\ddagger} (w_{s,u}) \ge h^{\ddagger} (w_{s,u})  \label{c.223}\\
\sum_{i\in\sessions, i\neq s} \lambda(i) & = \omega^{\ddagger} (e_{s}) \ge h^{\ddagger} (e_{s})  \label{c.224}\\
\omega(e) & = \omega^{\ddagger}(e) \ge \hd(e).\label{c.224a}
\end{align}
Using \eqref{c.218}--\eqref{c.224a}, we have 
\begin{align}
h^{\ddagger}(1^{'}) \ge \sum_{s\in\sessions} \lambda(s) = \sum_{s\in\sessions} \omega^{\ddagger}(a_{s}) 
 \ge \sum_{s\in\sessions} \hd(a_{s}) \nge{(i)}  h^{\ddagger}(a_\sessions). \label{c.225}
\end{align}
where $(i)$ follows from that $\hd \in \bar\Gamma^{*}(\problem^{\ddagger})$ and hence  is a polymatroid.
Similarly, 
\begin{align}
h^{\ddagger}(1^{'}) \ge \sum_{s\in\sessions} \lambda(s) = \sum_{s\in\sessions} \omega^{\ddagger}(b_{s}) 
 \ge \sum_{s\in\sessions} \hd(b_{s}) \ge  h^{\ddagger}(b_\sessions). \label{c.226}
\end{align}

Recall that $\hd $ is a rank function in the space $\set{H}[\sessions^{\ddagger} \cup \edges^{\ddagger} \cup \nodes^{\ddagger}]$. 
 Let $g$ be its  ``projection'' on $\set{H}[\sessions\cup\edges]$ such that 
for any $\alpha\subseteq \edges $ and $\beta \subseteq \sessions$, 
\begin{align}
g(\alpha,\beta) \defined h^{\ddagger}(\alpha , a_{i}, i\in\beta \mid  \nodes^{\ddagger}\setminus \{\phi\}).
\end{align}
In the following, we will prove that  
\[
(\lambda,\omega) \in \major(\proj_{\problem}[g])
\]
and
\[
g\in \bar \Gamma^{*}(\problem)  \cap \set{C}_{\indep}(\problem) \cap \set{C}_{\net}(\problem)  \cap \set{C}_{\de}(\problem) .
\]

First, as $h^{\ddagger} \in \bar \Gamma^{*}(\problem^{\ddagger})$, it is obvious that 
\[
g\in \bar \Gamma^{*}(\problem).
\]
Second, notice that the network $\graph^{\ddagger}$ contains $\graph$ as a subnetwork. Therefore,  
by $h^{\ddagger} \in \bar \Gamma^{*}(\problem^{\ddagger})  \cap \set{C}_{\net}(\problem^{\ddagger})$,
\[
g \in  \set{C}_{\net}(\problem).
\]

Now,  as  $h^{\ddagger} \in   \set{C}_{\indep}(\problem^{\ddagger})  \cap \set{C}_{\secure}(\problem^{\ddagger})$, we have 
\begin{align}
 \hd( 1^{'}, \nodes^{\ddagger} )  &=  \hd( 1^{'})  + \sum_{u\in\nodes^{\ddagger}} \hd(u)  \label{c.215}\\
 {\hd}\left(1^{'} \wedge a_{\sessions}\right) & = 0 \label{c.216}\\
 {\hd}\left(1^{'} \wedge b_{\sessions}\right) & =0  . \label{c.217} 
\end{align}
By   \eqref{c.215}, we can deduce that 
\begin{align}
\hd\left(1^{'}, \phi \wedge \nodes^{\ddagger}\setminus \{\phi\}\right) =0. \label{c.228a}
\end{align}
Due to the topology constraint (for the links $\{a_{s},b_{s}, s\in\sessions \}$),
\begin{align}
\hd(a_{\sessions} , b_{\sessions}\mid  1^{'}, \phi) =0. \label{d.229a}
\end{align} 
Hence, 
\begin{align}
\hd\left(a_{\sessions},b_{\sessions} , 1^{'}\wedge \nodes^{\ddagger}\setminus \{\phi\}\right) =0 \label{d.232}
\end{align}

On the other hand,  the decoding constraint  $h^{\ddagger} \in \set{C}_{\de}(\problem) $ (for the sink node $\eta$), we have 
\begin{align}\label{e.228}
\hd(1^{'} \mid  a_{\sessions},b_{\sessions} )=0.
\end{align}
Together with   \eqref{c.216}--\eqref{c.217}, (and with the fact that $\hd$ is a polymatroid) \eqref{e.228} implies that  
\begin{align}
\hd(a_\sessions) & \ge \hd(1^{'}) \\
\hd(b_\sessions) & \ge \hd(1^{'}) .
\end{align}
By the upper bounds on $\hd(a_{\sessions})$ and $\hd(b_{\sessions})$ in \eqref{c.225}--\eqref{c.226}, we can in fact prove that  
\begin{align}
h^{\ddagger}\left(a_{\sessions}\wedge b_{\sessions} \right) & = 0 \label{e.237}\\
h^{\ddagger}(a_{\sessions},b_{\sessions}) &=  \sum_{s\in\sessions } (h^{\ddagger}(a_{s}) + h^{\ddagger}(b_{s}) ) . \label{eq:7.127} \\
h^{\ddagger}(a_{s},s\in\sessions) & = \sum_{s\in\sessions}h^{\dagger}(a_{s}) \label{eq:7.127a} \\
h^{\ddagger}(a_{s}) & = h^{\ddagger}( b_{s}) = \lambda(s), \quad \forall s\in\sessions.\label{eq:7.127b}
\end{align}

Now, for any $s\in\sessions$, 
\begin{align*}
g(s) &= \hd(a_{s} \mid  \nodes^{\ddagger}\setminus \{\phi\}) \\
&\nequal{(i)} \hd(a_{s} ) \\
&= \lambda(s)
\end{align*} 
where $(i)$ follows from \eqref{d.232}. 
Also, for any $e\in\edges$,
\begin{align*}
g(e) &= \hd(e \mid  \nodes^{\ddagger}\setminus \{\phi\}) \\
&\le  \hd(e) \\
&\le  \omega(e).
\end{align*} 
Therefore, 
$(\lambda,\omega)\in\major(\proj_{\problem}[g])$.

By  definition,
\begin{align}
g( \sessions) = \hd(a_{\sessions} \mid  \nodes^{\ddagger} \setminus \{\phi\}) 
\nequal{(i)} \hd( a_{\sessions} )  = \sum_{{s}\in\sessions} \hd(a_{s}) \ge  \sum_{s\in\sessions} \hd(a_{s}\mid  \nodes^{\ddagger} \setminus \{\phi\}) = \sum_{s\in\sessions}g({s}) \ge g( \sessions)
\end{align}
where $(i)$ is due to \eqref{d.232}. 
Thus,  $g\in\set{C}_{\indep}(\problem)$.

Our last step is to prove that  $g\in \set{C}_{\de}(\problem)$. 
As \[
\hd(c_{s}) + \hd(e_{s}) \le \sum_{s\in\sessions} \lambda_{s} =  \hd(1^{'}) , 
\]
the decoding constraint  $\hd( 1^{'} \mid  c_{s}, e_{s} )  = 0$  (for the receiver $\psi_{s}$) implies that 
\begin{align}
\hd(c_{s}\mid 1^{'}) = 0   \label{d.237}
\end{align}
and
\begin{align}
\hd(c_{s} ) =\lambda(s) \label{eq:7.128}
\end{align}
By \eqref{c.216}--\eqref{c.217} and  \eqref{eq:7.127}
\begin{align} \label{d.239}
{\hd} ( c_{s} \wedge a_{s}) = {\hd} ( c_{s} \wedge b_{s}) = {\hd} ( a_{s} \wedge b_{s}) = 0
\end{align}

On the other hand, by \eqref{c.215}, $\hd\left(\gamma_{s} \wedge 1^{'} ,\phi\right)=0$. By \eqref{d.229a} and \eqref{d.237},  we have  
$\hd\left(\gamma_{s} \wedge  a_{s},b_{s},c_{s}\right)=0$.  Together with one of the topology constraint (for the link $c_{s}$)
\begin{align}
\hd(c_{s} \mid  a_{s},b_{s}, \gamma_{s}) = 0,
\end{align}
we  have
\begin{align}
\hd(c_{s} \mid  a_{s},b_{s}) = 0. \label{d.241}
\end{align}

By \eqref{eq:7.128} and \eqref{eq:7.127b}, we can prove that 
\begin{align}
\hd(c_{s} \mid  a_{s}, b_{s})  = \hd(a_{s} \mid  c_{s}, b_{s})  = \hd(b_{s} \mid  a_{s}, c_{s})  =0 . \label{d.242}
\end{align}

Similarly, focusing on the receiver $\tau_{s,u}$, the decoding constraint 
$h\left(1^{'} \mid  w_{s,u} , e_{s}\right)  = 0$ and that $\hd(w_{s,u}) \le \lambda(s)$ imply that 
\begin{align}
\hd(w_{s,u}\mid 1^{'}) = 0   \label{d.243}
\end{align}
and
\begin{align}
\hd(w_{s,u} ) =\lambda(s).  \label{eq:7.129}
\end{align}

By \eqref{c.217}, ${\hd} ( w_{s,u} \wedge b_{s}) =0$. 
On the other  hand, by \eqref{d.232} and \eqref{c.216}
\begin{align}
{\hd}(1^{'}\wedge  a_{\sessions}, \nodes^{\ddagger} \setminus \{\phi\} ) = {\hd}(1^{'},  a_{\sessions} \wedge \nodes^{\ddagger} \setminus \{\phi\} ) =0 .
\end{align}
Furthermore, by the topology constraint
\begin{align}
\hd(d_{s,u}\mid  a_{\sessions} , \nodes^{\ddagger} \setminus \{\phi\}) = 0
\end{align}
Therefore, together with \eqref{d.243}, we have 
\begin{align}\label{d.250}
\hd\left(1^{'},w_{s,u}\wedge d_{s,u},a_{\sessions}\right) = 0.
\end{align}

Similarly, by \eqref{d.232}  and \eqref{e.237},
\begin{align}
{\hd}(b_{\sessions} \wedge a_{\sessions}, \nodes^{\ddagger} \setminus \{\phi\}) = {\hd}(a_{\sessions},b_{\sessions} \wedge  \nodes^{\ddagger} \setminus \{\phi\})  = 0
\end{align}
and hence 
\begin{align}
\hd\left(b_{\sessions} \wedge d_{s,u}, a_{\sessions}\right) =0. \label{d.253}
\end{align}
Consequently, we have 
 \begin{align}
{\hd} ( w_{s,u} \wedge b_{s}) = {\hd} ( w_{s,u} \wedge d_{s,u}) = {\hd} ( d_{s,u} \wedge b_{s}) = 0. \label{c.242}
\end{align}
On the other hand, by the topology constraint, 
\begin{align}\label{241}
\hd(w_{s,u} \mid  b_{s}, d_{s,u} , \theta_{s,u}) = 0.
\end{align}
Again, by \eqref{d.232}, $\hd\left(1^{'},b_{s}, d_{s,u} \wedge \theta_{s,u}\right) = 0$. Then, by \eqref{241}, we can prove that 
\begin{align}
\hd(w_{s,u} \mid  b_{s}, d_{s,u}) = 0. \label{d.256}
\end{align}

Using  \eqref{eq:7.129}, \eqref{c.242} and \eqref{d.256} and \eqref{c.222}  and \eqref{eq:7.127b}, we can prove that  
 \begin{align}
\hd(w_{s,u} \mid  b_{s}, d_{s,u})  = \hd(d_{s,u} \mid  b_{s}, w_{s,u})  = \hd(b_{s} \mid  w_{s,u} d_{s,u})  =0  \label{d.257}
\end{align}
and
\begin{align}
\hd(d_{s,u}) = \lambda(s).
\end{align}

Finally, notice that 
\begin{align}
\hd\left(b_{s}\wedge a_{s} d_{s,u}\right) & \nequal{(i)} 0 \label{e.259} \\
\hd\left(w_{s,u},c_{s}\wedge a_{s} d_{s,u}\right) & \nequal{(ii)}  0 \label{e.260} \\
\hd\left(b_{s}\wedge w_{s} c_{s}\right) & \nequal{(iii)} 0 \label{e.261}
\end{align}
where $(i)$, $(ii)$  and $(iii)$ follow respectively \eqref{d.253},  \eqref{d.250} and \eqref{c.217}.
By \eqref{d.242} and \eqref{d.257}, we have  
\begin{align}
\hd(b_{s} \mid  w_{s,u} c_{s}, a_{s} d_{s,u}) &= 0 \label{eq:7.136} \\
\hd(a_{s} d_{s,u} \mid  b_{s}, w_{s,u} c_{s}, ) &=  0.
\end{align}
Together with \eqref{e.259}--\eqref{e.261}, we can prove that 
\[
\hd(a_{s} d_{s,u}) = \hd(b_{s}) = \lambda(s) =\hd(d_{s,u}).
\]
 Thus, $\hd( a_{s} \mid  d_{s,u})  = 0 $
and 
\[
g(a_{s} \mid  \incoming(u) ) = \hd(a_{s} \mid  \incoming(u) ,\nodes^{\ddagger}\setminus \{\phi\} ) = 0.
\]
Thus,  $g \in \set{C}_{\de}(\problem)$ and the first claim is proved.

We will now prove the second claim. 
The idea of the proof is similar to that in the incremental multicast scenario. 
Suppose $(\lambda,\omega)$ is 0-achievable with respect to  
$\problem$. By definition, there exists a sequence of zero-error network codes 
\[
\{ Y^{n}_{f}, f\in\sessions\cup\edges \}
\]
for $\problem$, and a sequence of positive constants $c_{n}$ such that 
\begin{align}
\lim_{n\to\infty} c_{n} H(Y_{e}^{n}) \le \lim_{n\to\infty} c_{n} H|\support(Y_{e}^{n})| \le \omega(e) \\
\lim_{n\to\infty} c_{n} H(Y_{s}^{n}) = \lim_{n\to\infty} c_{n} H|\support(Y_{s}^{n})| \ge \lambda(s). 
\end{align}
%
%
%

Assume  without loss of generality that  $\support(Y^{n}_{s})$ (i.e., the support of $Y^{n}_{s}$) 
is equal to $\{0, \ldots,   |\support(Y^{n}_{s})|-1\}$. For each $n$,  construct the following set of random variables 
\[
\{ U_{f}, f\in\sessions^{\ddagger}\cup \edges^{\ddagger}\cup \nodes^{\ddagger}\}
\]
such that for all  $e\in\edges, s\in\sessions$ and $u\in\destinationLocation(s)$, 
\begin{enumerate}
\item $U^{n}_{a_{s}} = U^{n}_{d_{s,u}} = Y^{n}_{s}$;  

\item $U_{e}^{n} = Y^{n}_{e}$;

\item $U^{n}_{b_{s}} $ is uniformly distributed over $\support(Y^{n}_{s})$ for all $s\in\sessions$ and  that 
\[
H\left(U^{n}_{a_{\sessions}}, U^{n}_{\edges},  U^{n}_{b_{\sessions}}\right) = \sum_{f\in\sessions }U^{n}_{b_{f}} + H\left(U^{n}_{a_{\sessions}}, U^{n}_{\edges}\right);
\]
\item 
$U^{n}_{w_{s,u}} = U^{n}_{c_{s}} = Y^{n}_{a_{s}} + Y^{n}_{b_{s}} \mod  |\support(Y^{n}_{s})|$; 
\item $U^{n}_{e_{s}} \defined  (Y^{n}_{c_{i}}, i\in \sessions\setminus s  )$
\item $U^{n}_{1^{'}}= (Y^{n}_{c_{s}}, s\in\sessions)$;
\item 
$U^{n}_{v} = 1$ (i.e., $U_{v} = 1$ is a deterministic random variable) for all $v\in\nodes^{\ddagger}$;
\end{enumerate}

Again, it can be verified directly that 
\[
\{ U^{n}_{f}, f\in\sessions^{\ddagger}\cup \edges^{\ddagger}\cup \nodes^{\ddagger}\}
\]
is a  strongly secure zero-error network codes for $\problem^{\ddagger}$.
Consequently, $T(\lambda,\omega)$ is 0-achievable with respect to $P^{\ddagger}$, subject to strong secrecy constraint. 

\end{appendices}

\bibliographystyle{IEEEtran}

\begin{thebibliography}{10}
\providecommand{\url}[1]{#1}
\csname url@samestyle\endcsname
\providecommand{\newblock}{\relax}
\providecommand{\bibinfo}[2]{#2}
\providecommand{\BIBentrySTDinterwordspacing}{\spaceskip=0pt\relax}
\providecommand{\BIBentryALTinterwordstretchfactor}{4}
\providecommand{\BIBentryALTinterwordspacing}{\spaceskip=\fontdimen2\font plus
\BIBentryALTinterwordstretchfactor\fontdimen3\font minus
  \fontdimen4\font\relax}
\providecommand{\BIBforeignlanguage}[2]{{%
\expandafter\ifx\csname l@#1\endcsname\relax
\typeout{** WARNING: IEEEtran.bst: No hyphenation pattern has been}%
\typeout{** loaded for the language `#1'. Using the pattern for}%
\typeout{** the default language instead.}%
\else
\language=\csname l@#1\endcsname
\fi
#2}}
\providecommand{\BIBdecl}{\relax}
\BIBdecl

\bibitem{Yan2007The-Capacity}
X.~Yan, R.~W. Yeung, and Z.~Zhang, ``The capacity region for multi-source
  multi-sink network coding,'' in \emph{IEEE Int. Symp. Inform. Theory}, 2007.

\bibitem{Yeung02first}
R.~Yeung, \emph{A First Course in Information Theory}.\hskip 1em plus 0.5em
  minus 0.4em\relax Kluwer Academic/Plenum Publisher, 2002.

\bibitem{Matus07infinitely}
F.~Matus, ``Infinitely many information inequalities,'' in \emph{Proceedings
  ISIT 2007, June 2007, (Nice, France)}, 2007, pp. 41--44.

\bibitem{Chan2008Dualities}
T.~H. Chan and A.~Grant, ``Dualities between entropy functions and network
  codes,'' \emph{IEEE Trans. Inform. Theory}, vol.~54, no. Oct, pp. 4470--4487,
  2008.

\bibitem{Li.Yeung.ea03linear}
S.-Y.~R. Li, R.~Yeung, and N.~Cai, ``Linear network coding,'' \emph{IEEE Trans.
  Inform. Theory}, vol.~49, no.~2, pp. 371--381, Feb. 2003.

\bibitem{cai2002snc}
N.~Cai and R.~Yeung, ``Secure network coding,'' in \emph{IEEE Int. Symp.
  Inform. Theory}, 2002.

\bibitem{Harvey2006-On-the-capacity}
N.~Harvey, R.~Kleinberg, and A.~Lehman, ``On the capacity of information
  networks,'' \emph{IEEE Trans. Inform. Theory}, vol.~52, pp. 2345--2364, June
  2006.

\bibitem{Kramer2006Edge-cut}
G.~Kramer and S.~A. Savari, ``Edge-cut bounds on network coding rates,''
  \emph{J. Netw. Syst. Manage.}, vol.~14, pp. 49--67, March 2006.

\bibitem{Thakor2009Network}
S.~Thakor, A.~Grant, and T.~Chan, ``Network coding capacity: A functional
  dependence bound,'' in \emph{2009 Proc. on IEEE International Symposium on
  Information Theory}, July 2009, pp. 263--267.

\bibitem{Yeung97framework}
R.~Yeung, ``A framework for linear information inequalities,'' \emph{IEEE
  Trans. Inform. Theory}, vol.~43, no.~6, pp. 1924--1934, Nov. 1997.

\bibitem{Chan2011Recent}
T.~H. Chan, ``Recent progresses in characterising information inequalities,''
  \emph{Entropy Journal}, vol.~13, pp. 379--401, 2011.

\bibitem{Zhang.Yeung98characterization}
Z.~Zhang and R.~W. Yeung, ``On the characterization of entropy function via
  information inequalities,'' \emph{IEEE Trans. Inform. Theory}, vol.~44, pp.
  pp. 1440--1452, 1998.

\bibitem{Song.Yeung.ea03zero-error}
L.~Song, R.~Yeung, and N.~Cai, ``Zero-error network coding for acyclic
  networks,'' \emph{IEEE Trans. Inform. Theory}, vol.~49, no.~12, pp.
  3129--3139, Dec. 2003.

\bibitem{Chan.Yeung02relation}
T.~H. Chan and R.~W. Yeung, ``On a relation between information inequalities
  and group theory,'' \emph{IEEE Trans. Inform. Theory}, vol.~48, pp.
  1992--1995, 2002.

\bibitem{oxley92}
J.~Oxley, \emph{Matroid Theory}.\hskip 1em plus 0.5em minus 0.4em\relax Oxford
  University Press, 1992.

\bibitem{Wu.Chou.ea04comparison}
Y.~Wu, P.~A. Chou, and K.~Jain, ``A comparison of network coding and tree
  packing,'' in \emph{IEEE Int. Symp. Inform. Theory}, Chicago, USA, 2004.

\bibitem{Ahlswede2000Network}
R.~Ahlswede, N.~Cai, S.-Y.~R. Li, and R.~W. Yeung, ``Network information
  flow,'' \emph{IEEE Trans. Inform. Theory}, vol.~46, no.~4, pp. 1204--1216,
  July 2000.

\bibitem{Shamir1979How-to-share}
A.~Shamir, ``How to share a secret,'' \emph{Comm. ACM}, vol.~22, pp. 612--613,
  1979.

\bibitem{CovTho91}
T.~M. Cover and J.~A. Thomas, \emph{Elements of Information Theory}.\hskip 1em
  plus 0.5em minus 0.4em\relax New York: John Wiley, 1991.

\bibitem{Cai2011Secure}
N.~Cai and R.~W. Yeung, ``Secure network coding,'' \emph{to appear in IEEE
  Trans. Inf. Theory}, 2011.

\bibitem{CaiYeu07}
------, ``A security condition for multi-source linear network coding,'' in
  \emph{IEEE Int. Symp. Inform. Theory}, 2007.

%
%

\end{thebibliography}

\end{document}